\newcommand{\Rbb}{\mathbb{R}}
\newcommand{\Ac}{\mathcal{A}}
\newcommand{\Bc}{\mathcal{B}}
\newcommand{\Pc}{\mathcal{P}}
\newcommand{\Er}{\mathbf{E}}
\renewcommand{\Pr}{\mathbf{P}}
\newcommand{\lbr}{\left\{}
\newcommand{\rbr}{\right\}}
\newcommand{\lb}{\left(}
\newcommand{\rb}{\right)}
\newcommand{\ra}{\rightarrow}
\newcommand{\bcb}{\begin{color}{blue}}
\newcommand{\bcr}{\begin{color}{red}}
\newcommand{\bcg}{\begin{color}{cyan}}
\newcommand{\ec}{\end{color}}
\theoremstyle{plain}
\newtheorem{thm}{Theorem}[section]
\newtheorem{lem}[thm]{Lemma}
\newtheorem{prop}[thm]{Proposition}
\newtheorem{cor}[thm]{Corollary}
\theoremstyle{definition}
\newtheorem{dfn}[thm]{Definition}
\newtheorem{ex}[thm]{Example}
\numberwithin{figure}{section}
\numberwithin{equation}{section}
\DeclareMathOperator{\Var}{Var}
\DeclareMathOperator{\E}{E}
\title{Analysis of sojourn time distributions for semi-Markov models}
\author{Kelli Francis-Staite \and Langford B. White}
\begin{document}
\pdfbookmark[section]{Title and Abstract}{TitleAndAbstract}
\maketitle

\begin{abstract}
    This report aims to characterise certain sojourn time distributions that naturally arise from semi-Markov models. To this end, it describes a family of discrete distributions that extend the geometric distribution for both finite and infinite time. We show formulae for the moment generating functions and the mean and variance, and give specific examples. We consider specific parametrised subfamilies;  the linear factor model and simple polynomial factor models. We numerically simulate drawing from these distributions and solving for the Maximum Likelihood Estimators (MLEs) for the parameters of each subfamily, including for very small sample sizes. The report then describes the determination of the bias and variance of the MLEs, and shows how they relate to the Fisher information, where they exhibit appropriate concentration effects as the sample size increases. Finally, the report addresses an application of these methods to experimental data, which shows a strong fit with the simple polynomial factor model. \end{abstract}

\pagebreak
\pdfbookmark[section]{\contentsname}{toc}
\tableofcontents
\pagebreak

\section{Introduction}
This report describes and analyses sojourn time distributions which can then be applied to parametrise the semi-Markov model (SMM). Soujourn time distributions are the distributions that govern how long a semi-Markov model remains in each state. Unlike usual Markov-models where the sojourn time distribution is geometric, semi-Markov models allow any distribution to be used.

Such sojourn time distributions have been modelled before by using known distributions such as the gamma, Poisson, multinomial, more general exponential family members and Coxian phase-type distributions particularly in the context of hidden semi-Markov models (HSMMs) \cite{Mitchell1993,Yu2010,Sansom2001,Bulla2006,Guedon2003,Duong2008,BullaThesis}. An R-package is available for modelling some of these distributions \cite{Bulla2013}. Non-parametric versions have also been studied in the context of expectation maximisation for HSMMs \cite{Bulla2006,Yu2010}, however the underlying properties of these distributions have not been studied in the literature and we seek to do this \Cref{sec:genGeometric}. Further, the two subfamilies of the distributions we study have not previously appeared in the literature. 

In general, the sojourn time distributions we study can be considered discrete generalisations of the geometric distribution in finite or infinite time. There are many approaches for generalisations of the geometric distribution \cite{Rattihalli2021,Tripathi2016,Gomez2010}, although they usually only consider infinite time. For example, considering the exponential as the continuous analog of the geometric distribution and the gamma distribution as generalising the exponential distribution, then discretisations of the gamma distribution can be considered generalisations of the geometric distribution as in \cite{Chakraborty2012}. 

Unlike other geometric generalisations, ours will involve the parameters $\rho$. These parameters can be considered as the marginal probabilities that a semi-Markov chain stays in the same state for the next time step, as in \Cref{sec: SemiMarkovRelationship}. For a Markov chain, the geometric sojourn time distribution corresponds to constant $\rho$ for each state, and here we extend this to consider linear and simple polynomial factor models for $\rho$. We are not aware of any other research on generalisations of the geometric distribution that has such an approach.

Markov models are often applied to model behaviours where only small amounts of data may be available, so we require parsimonious models which nevertheless have sufficient generality to capture different kinds of qualitative sojourn time behaviours. We also require models for which statistically efficient estimation methods can be defined. At this stage, we make the obvious point that whilst maximum likelihood estimators (MLEs) are statistically unbiased and efficient for large sample sizes \cite{DeGroot}; this may not be the case for small sample sizes such in our experimental data. So the study of the performance of MLEs forms an important part of this study which we undertake in \Cref{sec:MLEsandCRLBs}.

Statistical inference methods for hidden SMMs (HSMMs) are typically more computationally demanding than the equivalents for HMMs, so it is also of interest to determine whether an observed process has Markovian state transitions or the more general semi-Markov ones. We show in \Cref{sec:experidata} how we approached this for our experimental data, which we can use as a methodology for testing this kind of hypothesis in general. We also see a strong fit between this experimental data and the simple polynomial factor model, suggesting the theory performs well in experimental settings.

Overall, the contents of this report gives a novel approach to modelling sojourn time distributions that will be of use for (hidden) semi-Markov models, and already holds appeal for modelling experimental data. The results concerning the sojourn time distributions may be of independent interest for the discrete distribution literature.

\subsection{Layout of the report}

The contents of this report is as follows. In \Cref{subsec:defnOfGeneralisation} we define the family of distributions we aim to study, which generalises the geometric distribution. We give general properties including the mean, variance, moment and probability generating functions as well as results that determine when the moments exist in general. These results are new. In \Cref{subsec:linearex} and \Cref{subsec:simplePolyexample} we define the subfamilies of interest, the linear factor model and simple polynomial factor models, and describe their distributions, which is also new research. In \Cref{sec: SemiMarkovRelationship} we show the relationship to semi-Markov models and discuss further results in this area.

In \Cref{sec:MLEsandCRLBs} we study how to determine the MLEs of the subfamilies of interest, studying the linear factor model in \Cref{subsec:linearMLEs} and the polynomial factor models in \Cref{subsec:polyMLE}. In each case, we determine numerically the MLEs from various samples and show how the variance of the MLEs closely follows the inverse Fisher information which is related to the Cram\'{e}r-Rao lower bound \cite{DeGroot} suggesting our numerical methods are behaving efficiently. We study how this changes with different sample sizes. We also discuss methods to determine the size of the support of the distribution, i.e. the maximum sojourn time. 

In \Cref{sec:experidata} we analyse our experimental data. We show how the data  has sojourn times which do not appear to follow a geometric distribution, and why a simple polynomial factor model is a better fit. We then find the MLEs for such a model for each of the three tasks, including determining the maximum sojourn time, and discuss how well these models fit. We are satisfied with the model for tasks 1 and 3, and suggest some improvements for task 2. In \Cref{sec:conclusion} we conclude and discuss future work.

In \Cref{appendix} the second author expands on the work for the linear factor model. He confirms the results in \Cref{subsec:linearMLEs} with separate simulations, as well as shows how the expected log-likelihood explains some of the behaviour of the MLEs and how to further understand estimation when the maximum sojourn time is unknown. In \Cref{appn:ExperimentalData} we include the experimental data analysed in \Cref{sec:experidata}.

\section{Generalising the geometric distribution}\label{sec:genGeometric}
Here we define a family of discrete distributions that generalise the geometric distribution in both finite and infinite state cases. We give properties of this family, including formulae for the moments, and prove results relating to their existence in the infinite support case. We note that this family is particularly general as it can be used to characterise all discrete distributions that are supported on $1,2,3,\ldots, T$ with $T$ a positive integer and possibly infinite. 

We then focus on two specific subfamilies, the linear factor model in \Cref{subsec:linearex} and some simple polynomial factor models in \Cref{subsec:simplePolyexample}. Finally we detail the relationship to semi-Markov models in \Cref{sec: SemiMarkovRelationship} and the relationship to matrix analytic methods.

As far as we are aware, all results in sections \Cref{subsec:defnOfGeneralisation},  \Cref{subsec:linearex} and \Cref{subsec:simplePolyexample} are new contributions to the literature on discrete distributions, which we expect will also lead to new contributions in Markov processes research.

\subsection{Definition and properties} \label{subsec:defnOfGeneralisation}
We define the family of discrete distributions that we intend to study then study properties of the distributions. 
\begin{dfn}\label{defn:discreteRho}
Take any $\rho(1),\rho(2),\ldots, \rho(T-1)\in (0,1]$ with $\rho(1)>0$, and set $\rho(T)=0$, where $T$ is a positive integer (which we may allow to go to $\infty$). Then we can define a discrete distribution with support contained in $\{1,2,\ldots T\}$ having probability mass function (PMF)
\begin{align} 
\label{eqn:PMF}
f(k)=(1-\rho(k))\rho(k-1)\rho(k-2)\ldots \rho(1) \\\nonumber
= (1-\rho(k))\prod_{t=1}^{k-1}\rho(t) \quad \text{for }k=1,2,\ldots, T,
\end{align}
where we use the convention that product $\prod_{t=1}^{0}\rho(t)$ is equal to 1. 
\end{dfn}
Note that we could allow some $\rho(k)=0$ for $1 \le k<T$, but then $f(k) =1$ and then $f(k+i) = 0$ for all $i=1,2,\ldots,T$, reducing the support of $f$. Also, whenever $\rho(k)=1$ we have $f(k) = 0$, so as we see below, this is sufficiently general to capture all discrete distributions with support contained in $\{1,2,\ldots \infty\}$. We can also shift the distribution of $f$ to distribution $f_t$ with support contained within $\{1+t,2+t,\ldots, T+t\}$ for any integer $t$ by setting $f_t(t+k) = f(k)$. We will consider this in \Cref{sec:experidata}.

It is clear that each $f(k)\in [0,1]$ as products of numbers in $[0,1]$ remains in $[0,1]$. We have 
\begin{align*}
    \sum_{k=1}^Tf(k) &= \sum_{k=1}^T(1-\rho(k))\rho(k-1)\rho(k-2)\ldots \rho(1)\\
    &= 1 + \sum_{k=2}^T\prod_{t=1}^{k-1} \rho(t) - \sum_{k=1}^T\prod_{t=1}^{k} \rho(k)\\
    &= 1 + \sum_{k=1}^{T-1}\prod_{t=1}^{k} \rho(t) - \sum_{k=1}^T\prod_{t=1}^{k} \rho(k)\\
    &= 1-\prod_{t=1}^{T} \rho(k) = 1,
\end{align*}
as $\rho(T)=0$. So $f$ gives a valid PMF.

If we let $\rho(k)=p$ for some fixed $p\in (0,1)$ and take $T\to \infty$ then this is gives the geometric distribution with PMF for $k=1,2,\ldots$ \begin{equation} f(k) = p^{k-1}(1-p).\label{eqn:PMFgeometric} \end{equation}

In general, we can interpret the $\rho(k)$ as the probability of an event occurring at time $k$ given it also occurred at time $k-1,k-2,\ldots,1$. Then $f(k)$ is the probability of the event not occurring at time $k$ given it occurred at time $1,2,\ldots,k-1$. In the geometric case, the time steps are independent and $\rho(k)$ does not change, however the general $f$ described above allows for dependence between each time. We expand on this in \Cref{sec: SemiMarkovRelationship} where we show how this relates to a semi-Markov model.

\begin{lem}\label{lem:relationship}
We have the relationship 
\begin{equation} \label{eqn:relationship}
\sum_{t=k}^{T}f(t)=1-\sum_{t=1}^{k-1}f(t) = \frac{f(k)}{1- \rho(k)} = \prod_{t=1}^{k-1}\rho(t).
\end{equation}
\end{lem}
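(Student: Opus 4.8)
The plan is to reduce all four expressions to the single partial product $\prod_{t=1}^{k-1}\rho(t)$, exploiting the same telescoping structure already used above to verify $\sum_{k=1}^{T}f(k)=1$. Writing $P(j)=\prod_{t=1}^{j}\rho(t)$ with the convention $P(0)=1$, the defining formula \eqref{eqn:PMF} rearranges to $f(t)=P(t-1)-P(t)$, so that each term of the PMF is a difference of consecutive partial products. This single observation drives the whole argument.

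First I would dispose of the two outer equalities, both of which are immediate. The equality $\sum_{t=k}^{T}f(t)=1-\sum_{t=1}^{k-1}f(t)$ is just the normalisation $\sum_{t=1}^{T}f(t)=1$ established above, split at index $k$. The equality $\frac{f(k)}{1-\rho(k)}=\prod_{t=1}^{k-1}\rho(t)$ is simply \eqref{eqn:PMF} divided through by $1-\rho(k)$, which is legitimate whenever $\rho(k)\neq 1$.

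The substantive step is the middle link, for which I would telescope. Using $f(t)=P(t-1)-P(t)$,
\begin{equation*}
\sum_{t=k}^{T}f(t)=\sum_{t=k}^{T}\bigl(P(t-1)-P(t)\bigr)=P(k-1)-P(T)=\prod_{t=1}^{k-1}\rho(t),
\end{equation*}
since $P(T)=\prod_{t=1}^{T}\rho(t)=0$ by the convention $\rho(T)=0$. This identifies $\sum_{t=k}^{T}f(t)$ with $\prod_{t=1}^{k-1}\rho(t)$, and chaining it with the two outer equalities closes the loop among all four quantities.

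There is essentially no obstacle here beyond spotting the telescoping rewriting $f(t)=P(t-1)-P(t)$; once that is in hand, every equality is a one-line manipulation and the same cancellation that gave normalisation gives the partial sums for free. The only point requiring a word of care is the degenerate case $\rho(k)=1$, where $f(k)=0$ and the quotient $f(k)/(1-\rho(k))$ is formally $0/0$; there it should be read as the common value $\prod_{t=1}^{k-1}\rho(t)$ of the remaining three expressions, which continue to agree.
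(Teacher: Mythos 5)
Your proof is correct, and it takes a more direct route than the paper's. The paper first establishes, by induction, the inverse relationship $\rho(k) = 1 - f(k)/\bigl(1-\sum_{t=1}^{k-1}f(t)\bigr)$ (its equation for $\rho$ in terms of $f$) and then asserts that the lemma follows; you instead work forward from the definition, observing that $f(t)=P(t-1)-P(t)$ with $P(j)=\prod_{s=1}^{j}\rho(s)$ and telescoping $\sum_{t=k}^{T}f(t)=P(k-1)-P(T)=P(k-1)$. The underlying cancellation is the same in both arguments, but your version makes it explicit in one line rather than hiding it inside an unwritten induction, and it reuses exactly the computation the paper already performed to verify normalisation. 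You also buy a small amount of extra rigour that the paper skips: you flag that the quotient $f(k)/(1-\rho(k))$ is formally $0/0$ when $\rho(k)=1$ and must be read as the common value of the other three expressions, a degenerate case the paper does not address even though it explicitly allows $\rho(k)=1$ in Definition \ref{defn:discreteRho}. No gaps.
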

\begin{proof}
By induction, we can show 
\begin{equation} \label{eqn:rhoInTermsOff}
\rho(k) = \frac{1- f(1) - f(2) - \ldots - f(k)}{1- f(1) - f(2) - \ldots - f(k-1)}=1-\frac{f(k)}{\sum_{t=k}^{T}f(t)}=1-\frac{f(k)}{1-\sum_{t=1}^{k-1}f(t)},\end{equation}
and the result follows.
\end{proof}
Note that \Cref{eqn:rhoInTermsOff} shows that any discrete PMF supported $k=1,2,\ldots,T$ (with $T$ possibly infinite) can be defined in terms of $\rho(k)$, so in fact \Cref{defn:discreteRho} is an alternative way to describe all such distributions. If $F$ is the corresponding cumulative distribution function to $f$, then 
\[ \rho(k) =\frac{1 - F(k)}{1-F(k-1)}\] and also 
 \begin{align}  F(k) &=\sum_{t=1}^kf(k) = \sum_{t=1}^k(1-\rho(t))\prod_{s=1}^{t-1} \rho(s) \nonumber\\
 &= 1+ \sum_{t=1}^{k-1}\prod_{s=1}^{t-1}\rho(s) - \sum_{t=1}^k\prod_{s=1}^{t-1}\rho(s)\nonumber\\
 &= 1-\prod_{t=1}^k\rho(t).
 \label{eqn:FinTermsOfrho}\end{align}
Note that if $F(k) = 1$ then $\rho(k) = 0$ and this means the distribution has reached the end of its support. We use this in \Cref{sec:experidata}. The next proposition characterises all moments of the distribution in terms of the $\rho(k)$.

\begin{prop} \label{prop:MGFPGF}
For a random variable $X$ with PMF $f$ as in \Cref{eqn:PMF} we have moment generating function (MGF) 
\[M_X(t) = \sum_{k=1}^{T-1} (e^{t(k+1)}-e^{tk})\prod_{t=1}^k \rho(k)+e^t.\]
The $n$-th derivatives are
\[M_X^{(n)}(t)  = \sum_{k=1}^{T-1} ((k+1)^{n}e^{t(k+1)}-k^{n}e^{tk})\prod_{t=1}^k\rho(t) +e^t.\]
 So we have 
\[\E(X^n) =M_X^{(n)}(0) = \sum_{k=1}^{T-1} ((k+1)^{n}-k ^{n})\prod_{t=1}^k\rho(t) +1,\] and 
then 
\begin{align*} \E(X) &= \sum_{k=1}^{T-1} \prod_{t=1}^k\rho(t) +1,\\
\Var(X) &= \E(X) +2 \sum_{k=1}^{T-1}k \prod_{t=1}^k\rho(t)- \E(X)^2.
\end{align*}

The probability generating function is 
\[P_X(z) = \sum_{k=1}^{T-1} (z^{k+1}-z^k)\prod_{t=1}^k\rho(t) + z\]

with $n$-th derivatives 

\[P_X^{(n)}(z)= \sum_{k=n}^{T-1} \left(\frac{(k+1)!z^{k+1-n}}{(k+1-n)!}-\frac{k!z^{k-n}}{(k-n)!}\right)\prod_{t=1}^k\rho(t) + \frac{dz}{dz^n}.\]

The factorial moments are 
\begin{align*}\E(X(X-1)\cdots (X-n+1))& = P_X^{(n)}(1)= \sum_{k=n}^{T-1} \frac{k!n}{(k+1-n)!}\prod_{t=1}^k\rho(t) + \frac{dz}{dz^n}.\end{align*}
\end{prop}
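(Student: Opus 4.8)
The plan is to reduce every formula to a single finite-difference identity and then differentiate. The crucial first observation is that the PMF is a difference of consecutive partial products: writing $R_k = \prod_{t=1}^{k}\rho(t)$ with $R_0 = 1$, \Cref{eqn:PMF} gives $f(k) = (1-\rho(k))R_{k-1} = R_{k-1} - R_k$, which is immediate and is essentially the content of \Cref{lem:relationship}. I would then compute the MGF directly from its definition (using $s$ for the argument to avoid clashing with the product index $t$),
\[ M_X(s) = \E(e^{sX}) = \sum_{k=1}^{T} e^{sk}(R_{k-1} - R_k), \]
and apply Abel summation, equivalently reindexing by shifting $k\mapsto k+1$ in the sum containing $R_{k-1}$. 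The $k=0$ boundary term produces the isolated $e^{s}$ (since $R_0 = 1$), while the $k=T$ boundary term is $e^{sT}R_T$, which vanishes because $\rho(T)=0$ forces $R_T=0$. What remains collapses to $\sum_{k=1}^{T-1}(e^{s(k+1)}-e^{sk})R_k + e^{s}$, the claimed formula.

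Once the MGF is in this finite-difference form the higher derivatives follow by differentiating term by term, using $\frac{d^n}{ds^n}e^{sk} = k^n e^{sk}$ and $\frac{d^n}{ds^n}e^{s}=e^{s}$, which yields the stated $M_X^{(n)}(s)$. Evaluating at $s=0$ replaces each exponential by $1$ and gives $\E(X^n) = \sum_{k=1}^{T-1}((k+1)^n - k^n)R_k + 1$. The mean is the $n=1$ case, where $(k+1)-k=1$; the second moment is the $n=2$ case, where $(k+1)^2-k^2 = 2k+1$, so $\E(X^2) = 2\sum_{k=1}^{T-1}kR_k + \sum_{k=1}^{T-1}R_k + 1 = 2\sum_{k=1}^{T-1}kR_k + \E(X)$. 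Substituting into $\Var(X)=\E(X^2)-\E(X)^2$ gives the stated variance.

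The PGF is obtained by the identical Abel-summation argument with $z^k$ in place of $e^{sk}$, giving $P_X(z) = \sum_{k=1}^{T-1}(z^{k+1}-z^k)R_k + z$. Its derivatives come from $\frac{d^n}{dz^n}z^m = \frac{m!}{(m-n)!}z^{m-n}$, and the factorial moments follow by setting $z=1$; the only genuine computation is the algebraic simplification $\frac{(k+1)!}{(k+1-n)!} - \frac{k!}{(k-n)!} = \frac{n\,k!}{(k+1-n)!}$, checked by extracting the common factor $k!/(k+1-n)!$. The main points to be careful about are twofold. First, in the infinite-support case ($T=\infty$) there is no literal $\rho(T)=0$, so I must justify the vanishing boundary term as $\lim_{k\to\infty}e^{sk}R_k = 0$ and justify interchanging differentiation with the infinite sum; both hold precisely on the domain where the MGF and moments exist (the convergence results developed elsewhere in this section), and I would state the formulae as valid there. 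Second, the lower endpoint of the differentiated sums needs attention: terms whose exponent is below $n$ are annihilated by the $n$-fold derivative, so one must track the boundary index near $k=n$ rather than simply copying the summation range — this bookkeeping at the lower limit is the one place where naive term-by-term differentiation can silently misplace a contribution, and it is the step I would verify most carefully.
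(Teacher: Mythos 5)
Your proof is correct and follows essentially the same route as the paper: write $f(k)=R_{k-1}-R_k$ with $R_k=\prod_{t=1}^k\rho(t)$, reindex/telescope the defining sum to get the MGF and PGF (the paper does exactly this shift $k\mapsto k+1$ and uses $\rho(T)=0$ to kill the top boundary term), then differentiate term by term and simplify $\frac{(k+1)!}{(k+1-n)!}-\frac{k!}{(k-n)!}=\frac{n\,k!}{(k+1-n)!}$. The lower-limit caution you raise at the end is in fact warranted: for $n\ge 2$ the $k=n-1$ term of $(z^{k+1}-z^k)R_k$ still contributes $n!\,R_{n-1}$ to $P_X^{(n)}(1)$, which the stated range $k\ge n$ omits (e.g.\ for $n=2$ the factorial moment should be $2\sum_{k\ge 1}kR_k$, consistent with $\E(X^2)-\E(X)$, whereas the displayed formula gives only $2\sum_{k\ge 2}kR_k$), so the paper's $P_X^{(n)}$ and factorial-moment formulae are missing that boundary term while your bookkeeping would recover it.
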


\begin{proof}
The moment generating function is 
\begin{align*}
    M_X(t) &= \sum_{k=1}^{T}e^{kt}f(k)\\
    &= \sum_{k=1}^{T}e^{kt}(1-\rho(k))\prod_{t=1}^{k-1}\rho(t)\\
    &=\sum_{k=1}^{T}e^{kt}\prod_{t=1}^{k-1}\rho(t)-\sum_{k=1}^{T}e^{kt}\prod_{t=1}^{k}\rho(t)\\
    &= e^t+ \sum_{k=2}^{T}e^{kt}\prod_{t=1}^{k-1}\rho(t)-\sum_{k=1}^{T}e^{kt}\prod_{t=1}^{k}\rho(t)\\
    &= e^t+ \sum_{k=1}^{T-1}e^{(k+1)t}\prod_{t=1}^{k}\rho(t)-\sum_{k=1}^{T-1}e^{kt}\prod_{t=1}^{k}\rho(t)\\    &=\sum_{k=1}^{T-1} (e^{t(k+1)}-e^{tk})\prod_{t=1}^k \rho(k)+e^t
\end{align*}
The formulas for the derivatives $M_X^{(n)}(t)$ and $E(X)$ follow directly. We then have the variance
\begin{align*}\Var(X) &= \E(X^2)-\E(X)^2 = \sum_{k=1}^{T-1} (2k +1)\prod_{t=1}^k\rho(k) +1 - \E(X)^2 \\
&= \E(X) +2 \sum_{k=1}^{T-1}k \prod_{t=1}^k\rho(k)- \E(X)^2.\end{align*}

The probability generating function is 
\[P_X(t) = \sum_{k=1}^{T}z^kf(k) = M_X(\log(z))= \sum_{k=1}^{T-1} (z^{k+1}-z^{k})\prod_{t=1}^k \rho(k)+z.\]
The $n$-th derivatives $P_X^{(n)}(z)$ are straightforward to calculate, so the factorial moments are 
\begin{align*}\E(X(X-1)\cdots (X-n+1))& = P_X^{(n)}(1)= \sum_{k=n}^{T-1} \left(\frac{(k+1)!}{(k+1-n)!}-\frac{k!}{(k-n)!}\right)\prod_{t=1}^k\rho(t) + \frac{dz}{dz^n} \\
&= \sum_{k=n}^{T-1} \frac{k!n}{(k+1-n)!}\prod_{t=1}^k\rho(t) + \frac{dz}{dz^n}.\end{align*}
\end{proof}

If $T$ is finite, it is clear that all moments are finite. Below we consider the case when $T$ is infinite.
\begin{prop} \label{prop:existenceOfMoments}
Let $X$ be a random variable with PMF $f$ as in \Cref{eqn:PMF} with $T= \infty$, and say \[
\lim_{k\to \infty} \rho(k) = r\] where we must have $r\in [0,1]$ as $\rho(k)\in[0,1]$. Then all moments exist whenever $r<1$. If $r = 1$ then some moments may exist. 
\end{prop}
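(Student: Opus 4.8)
The plan is to reduce the existence of the $n$-th moment to the convergence of an explicit series and then apply the ratio test, whose limiting value turns out to be exactly $r$. By \Cref{prop:MGFPGF}, in the case $T=\infty$ the $n$-th moment is
\[
\E(X^n) = 1 + \sum_{k=1}^{\infty}\bigl((k+1)^n - k^n\bigr)\prod_{t=1}^{k}\rho(t),
\]
and since every term is non-negative (as $(k+1)^n-k^n>0$ and each $\rho(t)\in(0,1]$), the $n$-th moment is finite if and only if this series converges. Thus the whole proposition reduces to a statement about convergence of this series for each fixed positive integer $n$.

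First I would set $a_k = \bigl((k+1)^n - k^n\bigr)\prod_{t=1}^{k}\rho(t)$ and form the ratio
\[
\frac{a_{k+1}}{a_k} = \frac{(k+2)^n - (k+1)^n}{(k+1)^n - k^n}\,\rho(k+1).
\]
The polynomial $(k+1)^n - k^n$ has degree $n-1$ with leading coefficient $n$, so the first factor tends to $1$ as $k\to\infty$; combined with $\rho(k+1)\to r$ this gives $\lim_{k\to\infty} a_{k+1}/a_k = r$. When $r<1$ the ratio test shows convergence for every $n$, so all moments exist. This is the main assertion, and it follows uniformly in $n$ with essentially no extra computation once the limit of the ratio is identified.

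When $r=1$ the ratio test is inconclusive, which is exactly why the proposition only claims that some moments may exist; here I would establish the claim by exhibiting a controllable family of examples. Taking $\rho(t) = \bigl(t/(t+1)\bigr)^p$ for a parameter $p>0$ gives $\rho(t)\to 1$, so $r=1$, while the product telescopes to $\prod_{t=1}^{k}\rho(t) = (k+1)^{-p}$. The general term of the moment series is then asymptotic to $n\,k^{n-1-p}$, so the series for $\E(X^n)$ converges precisely when $n<p$; choosing $p$ non-integral makes any prescribed finite number of moments exist while all higher ones diverge, and $p\le 1$ yields a distribution with no finite mean. This shows that with $r=1$ the outcome genuinely depends on the fine behaviour of $\rho$ rather than on the limit alone. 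The only real work lies in this $r=1$ construction: because the ratio test gives no information at $L=1$, one cannot avoid computing an explicit example, whereas the $r<1$ case is an immediate application of the test to the moment series supplied by \Cref{prop:MGFPGF}.
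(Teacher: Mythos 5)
Your proposal is correct and follows essentially the same route as the paper: apply the ratio test to the series in the moment formula from \Cref{prop:MGFPGF}, observe that the limiting ratio equals $r$ (the paper uses the comparable series $\sum_k k^n\prod_{t=1}^k\rho(t)$ while you keep the exact coefficients $(k+1)^n-k^n$, which changes nothing in the limit), and conclude convergence for $r<1$ with inconclusiveness at $r=1$. Your family $\rho(t)=\bigl(t/(t+1)\bigr)^p$ for the $r=1$ case is just a mild generalisation of the example the paper gives immediately after the corollary.
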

\begin{proof}
The ratio test says that a series $\sum_{i=1}^\infty a_i$ for $a_i\in \Rbb$ converges whenever 
\[ \lim_{k\to \infty} \left | \frac{a_{k+1}}{a_k}\right | <1.\] It diverges if the limit is $>1$ and may or may not converge if this limit is $1$. Say we have $\lim_{k\to \infty} \rho(k) =r$, then we can consider the series 
\[\sum_{k=1}^\infty k^n \prod_{t=1}^k\rho(t)\] for some $n=1,2,\ldots, \infty$, which appears in the formula for the moments $E(X^n)$ in \Cref{prop:MGFPGF}.  Then  
\begin{align*}
\lim_{k\to \infty} \left | \frac{a_{k+1}}{a_k}\right |&= \lim_{k\to \infty} \left | \frac{(k+1)^n \prod_{t=1}^{k+1}\rho(t)}{k^n \prod_{t=1}^k\rho(t)}\right |\\
&= \lim_{k\to \infty} (1+\frac{1}{k^n}) \rho(k+1) \\
& \to 1\cdot r
\end{align*}
 therefore all moments centred at $0$ converge when $r<1$ by the ratio test. As all moments of order $n$ can be written in terms of moments centred at $0$ of up to order $n$, then all moments converge for $X$ when $r<1$. Similarly, they diverge if $r>1$ (which is not possible as we have $\rho(k)\in[0,1]$ for all $k$) and may or may not converge if $r=1$.
\end{proof}

Note that we do not necessarily have that $\lim_{k\to \infty} \rho(k)$ converges, so a more general statement would be than any subsequence of the $\rho(k)$ that converges must converge to something less than $1$ to guarantee existence of the moments. This is equivalent to the following stronger corollary of \Cref{prop:existenceOfMoments}.

\begin{cor}
Let $X$ be a random variable with PMF $f$ as in \Cref{eqn:PMF} with $T= \infty$, and say \[
\limsup_{k\to \infty} \rho(k) = r\] where we must have $r\in [0,1]$ as $\rho(k)\in[0,1]$. Then all moments exist whenever $r<1$. If $r = 1$ then some moments may exist. 
\end{cor}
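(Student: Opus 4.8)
The plan is to mimic the proof of \Cref{prop:existenceOfMoments} almost verbatim, upgrading the ordinary ratio test to its $\limsup$ form so that we only need control of $\limsup_{k\to\infty}\rho(k)$ rather than an honest limit. As in that proof, by \Cref{prop:MGFPGF} every moment $\E(X^n)$ is a finite linear combination of the series
\[
S_m = \sum_{k=1}^{\infty} k^{m} \prod_{t=1}^{k}\rho(t),\qquad 0\le m\le n,
\]
so it suffices to show that each $S_m$ converges when $r:=\limsup_{k\to\infty}\rho(k)<1$.

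For a fixed $m$, set $a_k = k^m\prod_{t=1}^k\rho(t)\ge 0$ and consider the ratios
\[
\frac{a_{k+1}}{a_k} = \Big(\tfrac{k+1}{k}\Big)^{m}\rho(k+1) = \Big(1+\tfrac1k\Big)^{m}\rho(k+1).
\]
The first factor tends to $1$, and since multiplying a bounded nonnegative sequence by a sequence converging to $1$ does not change its $\limsup$, I would conclude
\[
\limsup_{k\to\infty}\frac{a_{k+1}}{a_k} = \limsup_{k\to\infty}\rho(k+1) = r < 1.
\]
The $\limsup$ form of the ratio test then gives convergence of $S_m$, and hence of every moment, exactly as claimed.

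Alternatively, and perhaps more transparently, I would argue by direct comparison: since $r<1$ I can fix $q$ with $r<q<1$ and an index $N$ with $\rho(k)\le q$ for all $k\ge N$; writing $C=\prod_{t=1}^{N}\rho(t)$ gives the majorisation $\prod_{t=1}^{k}\rho(t)\le C\,q^{\,k-N}$ for $k\ge N$, so $a_k \le (C q^{-N})\,k^m q^k$, and $S_m$ is dominated by the convergent series $\sum_k k^m q^k$. The $r=1$ case requires no new work: it is inherited verbatim from \Cref{prop:existenceOfMoments}, where the borderline behaviour (some moments existing, others not, depending on the rate at which $\rho(k)\to 1$) was already noted.

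The only genuine subtlety --- and the one step I would write out carefully --- is the passage $\limsup_k (1+1/k)^m\rho(k+1)=\limsup_k\rho(k+1)$, i.e.\ that a factor tending to $1$ leaves the $\limsup$ unchanged. This is elementary (squeeze the product between $(1-\epsilon)\rho(k+1)$ and $(1+\epsilon)\rho(k+1)$ for large $k$ and let $\epsilon\to0$), but it is exactly the place where replacing $\lim$ by $\limsup$ in \Cref{prop:existenceOfMoments} could silently fail if one were careless, so it deserves explicit mention. The comparison argument sidesteps this issue entirely, which is why I would likely present it as the main proof and relegate the ratio-test version to a remark.
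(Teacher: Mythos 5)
Your proposal is correct. The paper itself gives no separate proof of this corollary --- it simply asserts that the statement is ``equivalent'' to the observation that every convergent subsequence of $\rho(k)$ must have limit below $1$, implicitly leaning on the ratio-test argument from \Cref{prop:existenceOfMoments}. Your first argument (the $\limsup$ form of the ratio test) is exactly the upgrade the paper is gesturing at, and you are right that the only point needing care is that a factor tending to $1$ leaves the $\limsup$ of the bounded nonnegative sequence $\rho(k+1)$ unchanged; since all $\rho(k)>0$ here, the ratios $a_{k+1}/a_k$ are well defined and the argument goes through. Your second argument --- choosing $q$ with $r<q<1$, an index $N$ with $\rho(k)\le q$ for $k\ge N$, and dominating $\prod_{t=1}^k\rho(t)$ by $C\,q^{k-N}$ so that $S_m$ is majorised by $\sum_k k^m q^k$ --- is a genuinely different and more elementary route that the paper does not take. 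It buys you two things: it sidesteps the $\limsup$-manipulation subtlety entirely, and it makes transparent why only the tail behaviour (eventual domination by a geometric factor) matters, which is the real content of the corollary. Presenting the comparison argument as the main proof, as you suggest, is a sound choice; the $r=1$ case indeed requires nothing beyond the examples already given after \Cref{prop:existenceOfMoments}.
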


If $\limsup_{k\to \infty} \rho(k) = 1$, that is, there are (sub)sequences of the $\rho(k)$ which converge to $1$, more complicated arguments must be considered as in the following example. 

\begin{ex}
If $\rho(k) = \frac{k}{k+1}$ for $k =1,2,\ldots, \infty$ then $\lim_{k\to \infty} \rho(k) =1$ and we have \[\sum_{k=1}^\infty\prod_{t=1}^{k}\rho(t) = \sum_{k=1}^\infty \frac{1}{k+1},\] which diverges and no moments exist. These $\rho$ correspond to PMF \[f(k)=\frac{1}{k}-\frac{1}{1-k} = \frac{1}{k(k+1)}.\]

More generally, if $\rho(k) = \frac{k^m}{(k+1)^m}$ for some positive integer $m$ then still $\lim_{k\to \infty} \rho(k) =1$, however we have 
\[\sum_{k=1}^\infty k^n\prod_{t=1}^{k}\rho(t) = \sum_{k=1}^\infty  \frac{k^n}{(k+1)^m}.\] When $n+1<m$, we have 
\begin{align*}
    \sum_{k=1}^\infty  \frac{k^n}{(k+1)^m} &= \sum_{k=1}^\infty \frac{1}{(1+\frac{1}{k})^n}\frac{1}{(1+k)^2}\frac{1}{(1+k)^{m-2}}\\
    &\le \sum_{k=1}^\infty \frac{1}{(1+k)^2} = \frac{\pi^2}{6} -1 
\end{align*}
so this sum converges and this means moments up to order $m-2$ exist. When $n = m-1$ we have 
\begin{align*}
    \sum_{k=1}^\infty  \frac{k^n}{(k+1)^m} &= \sum_{k=1}^\infty \frac{1}{(1+\frac{1}{k})^{m-1}}\frac{1}{1+k}\\
    &\ge \frac{1}{2^{m-1}} \sum_{k=1}^\infty \frac{1}{1+k} \\
    & \to \infty
\end{align*}
so this diverges, and all moments of order $m-1$ or higher do not exist. These $\rho$ correspond to PMF \[f(k) = \frac{1}{k^m} - \frac{1}{(1+k)^m}.\]
\end{ex}

\begin{ex}
For a random variable $X$ with the geometric distribution $\rho(k)=p$, $T\rightarrow \infty$ as in \Cref{eqn:PMFgeometric} then using \Cref{prop:MGFPGF} we have mean
\[ \E(X) = 1+\sum_{j=1}^{\infty}p^j = 1+\frac{1}{1-p}-1=\frac{1}{1-p},\]
and variance
\begin{align*}
    \Var(X) & = \E(X) +2\lim_{T\to \infty}\sum_{k=1}^{T-1} k\prod_{t=1}^{k}\rho(t)- \E(X)^2\\
    &=\frac{1}{1-p} + 2\lim_{T\to \infty}\sum_{k=1}^{T-1}kp^k - \frac{1}{(1-p)^2}\\
    & = \frac{1}{1-p} + 2\lim_{T\to \infty}p\frac{d}{dp}\sum_{k=1}^{T-1}p^k - \frac{1}{(1-p)^2}\\
    & = \frac{1}{1-p} + 2\lim_{T\to \infty}p\frac{d}{dp}\left(\frac{1-p^T}{1-p}-1\right) - \frac{1}{(1-p)^2}\\
    & = \frac{1}{1-p} + 2p\frac{d}{dp}\left(\frac{1}{1-p}-1\right) - \frac{1}{(1-p)^2}\\
    & = \frac{1}{1-p} + \frac{2p}{(1-p)^2} - \frac{1}{(1-p)^2}\\
    & = \frac{1-p+2p-1}{(1-p)^2}\\
    &=\frac{p}{(1-p)^2}.
\end{align*}
Both of these results are as expected from the geometric distribution.
\end{ex}

\subsection{Linear factor model} \label{subsec:linearex}
In this section we consider the case where we have a linear $\rho$, that is 
\[ \rho(k)=ak+b \quad \text{for all } k=1,2, \ldots, T.\] For this to be valid, we require that $ak+b \in [0,1]$ for all $k$. This then means if we send $T\to \infty$ we must make $a=0$, resulting in the geometric distribution. If we consider $T$ finite then we require the endpoints $k=1,T-1$ are in $[0,1]$, which means we need $a+ b\in [0,1]$, so $a\in [-b,1-b]$ and $a(T-1)+b \in [0,1]$, so $a\in [-\frac{b}{T-1},\frac{1-b}{T-1}]$. Combining these we have 
\[ \max\{-b,\frac{-b}{T-1}\} \le a \le \min\{1-b,\frac{1-b}{T-1}\}.\] To make sure the end points are non-overlapping, we then also require that $b\in [-\frac{1}{T},\frac{T-1}{T-2}]$ for $T\ge 3$. 

A natural choice in this case would be to set $a,b$ such that $a(T)+b = 0$, which gives $a=-b/T$ and $b\in (0,\frac{T}{T-1}]$. Then \[ \rho(k) = b\left(1-\frac{k}{T}\right)\] and the corresponding $f$ is as follows
\begin{align*}
    f(k)&=(1-\rho(k))\rho(k-1)\rho(k-2)\ldots \rho(1) \\
    &= (1+b(\frac{k}{T}-1))(b(1-\frac{k-1}{T}))(b(1-\frac{k-2}{T}))\cdots(b(1-\frac{1}{T}))\\
    &=(1+b(\frac{k}{T}-1))\left(\frac{b}{T}\right)^{k-1}(T-k+1)(T-k+2)\cdots (T-1)\\
    &= (1+b(\frac{k}{T}-1))\left(\frac{b}{T}\right)^{k-1}\frac{(T-1)!}{(T -k)!}.
\end{align*}
This is a natural extension of the geometric distribution over a finite domain.

Note that we can similarly use $a$ instead of $b$, where we have  \[ \rho(k) = a\left(k-T\right),\] with $b=-aT$, $a\in [\frac{1}{1-T},0)$. 

We can graph $f$ for certain values of $b$ (and corresponding $a$) and $T=10$, as in \Cref{fig:LinearGeomExtension}. 
\begin{figure}[!hpt]
    \centering
    \includegraphics[width=13cm]{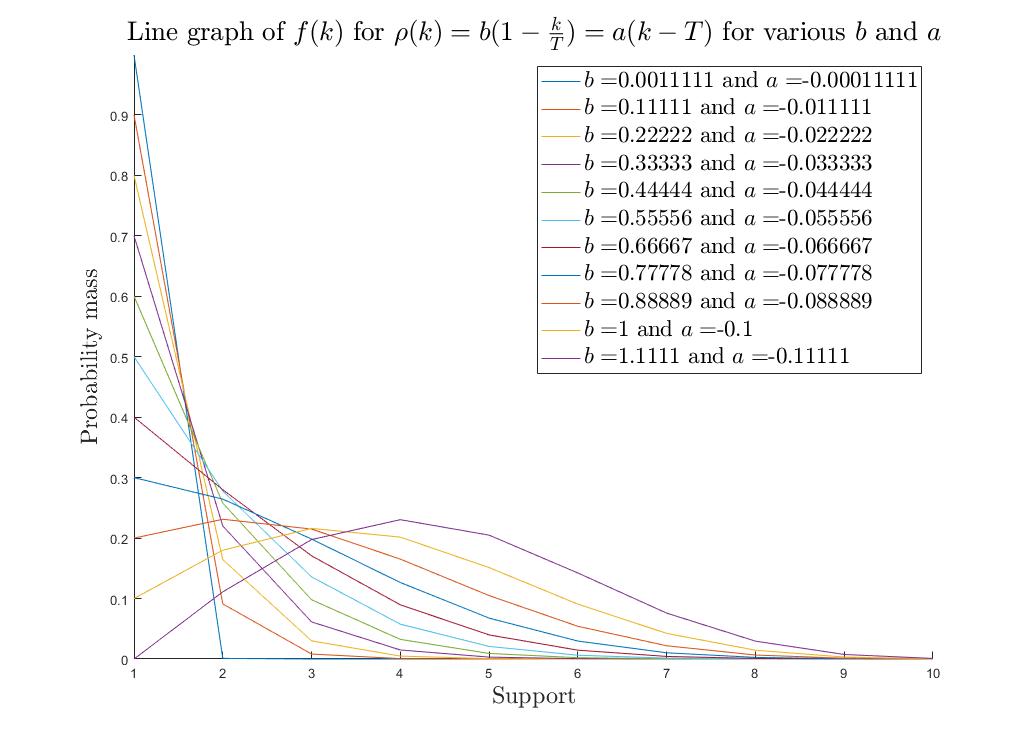}
    \caption{Plot of $f$ with $\rho(k)=b(1-\frac{k}{T})=a(k-T)$ with $T=10$ and various $b$.}
    \label{fig:LinearGeomExtension}
\end{figure}
We see quite different behaviours depending on the value of $b$. The larger $b$ exhibit a non-monotonic, uni-modal distribution with maximum near the centre of the support, while the smaller $b$ exhibit more geometric-like qualities, with maximum at $k=1$ and decreasing for larger $k$. Varying $T$ gives similar results.

\subsection{Simple polynomial factor models} \label{subsec:simplePolyexample}

Here we concentrate on models of the form \[\rho(k) = a(k-c)^n+b\] with $\rho(T)=0$. We require $n$ to be a fixed positive integer. 

We again set $\rho(T)=0$ which gives $a(T-c)^n+b=0$, so we must have $a=\frac{-b}{(T-c)^n}$ provided we assume $b\ne 0$. Note that if $b=0$ and $\rho(T) = 0$ then either $c=T$ and we are considering linear cases $\rho(T) = a(k-T)$ from the previous section, or $a =0$ and we have $\rho = 0$.

Given this we assume $c\ne T$ then writing in terms of $b$ we have 
\begin{align*} 
\rho(k)& = b\left(1-\frac{(k-c)^n}{(T-c)^n}\right),\\
f(k) &= \left(1+b\left(\frac{(k-c)^n}{(T-c)^n}-1\right)\right)\left(\frac{b}{(T-c)^n}\right)^{k-1}\prod_{t=1}^{k-1}((T-c)^n-(t-c)^n).
\end{align*}

Similarly, writing in terms of $a$ we have 
\begin{align*}
\rho(k) &= a\left((k-c)^n - (T-c)^n\right),\\ 
f(k) &= \left(1+a\left((T-c)^n-(k-c)^n\right)\right)a^{k-1}\prod_{t=1}^{k-1}((t-c)^n-(T-c)^n).
\end{align*}

We require $\rho(k)\in (0,1]$ for all $k=1,\ldots, T-1$. Whether $n$ is even or odd changes the requirements on $b$ and $c$ to ensure this. We get the conditions that $c\in \mathbb{R}\setminus\{T\}$ and that 
\begin{align*}
    n \text{ even}, \quad  c \le \frac{T+1}{2}, \quad&  0\le b \le \frac{(T-c)^n}{(T-c)^n -(k-c)^n}, \quad&   \frac{1}{(k-c)^n-(T-c)^n}\le a \le 0\\
    n \text{ even}, \quad  c > T,  \quad & \frac{(T-c)^n}{(T-c)^n-(k-c)^n}\le b \le 0,  \quad & 0\le a \le \frac{1}{(k-c)^n-(T-c)^n}\\
    n \text{ odd}, \quad  c < T,  \quad & 0\le b \le \frac{(T-c)^n}{(T-c)^n -(k-c)^n},  \quad & \frac{1}{(k-c)^n-(T-c)^n}\le a \le 0\\
    n \text{ odd}, \quad  c > T,  \quad & \frac{(T-c)^n}{(T-c)^n -(k-c)^n} \le b \le 0,  \quad & 0 \le a \le  \frac{1}{(k-c)^n-(T-c)^n}.
\end{align*}
Note that $n$ even and $c\in [\frac{T+1}{T},T)$ implies $b=0$, which we will again exclude here. With certain applications in mind as in \Cref{sec:experidata}, we are most interested in the odd $n$ cases particularly $n=3$ with $c<T$, however we will consider all cases in generality when possible. 

In \Cref{fig:PolyGeomExtension} we graph $f$ for certain values of $b$ (and corresponding $a$) with $n=3$, $c=4$ and $T=15$. 
\begin{figure}[!hpt]
    \centering
    \includegraphics[width=13cm]{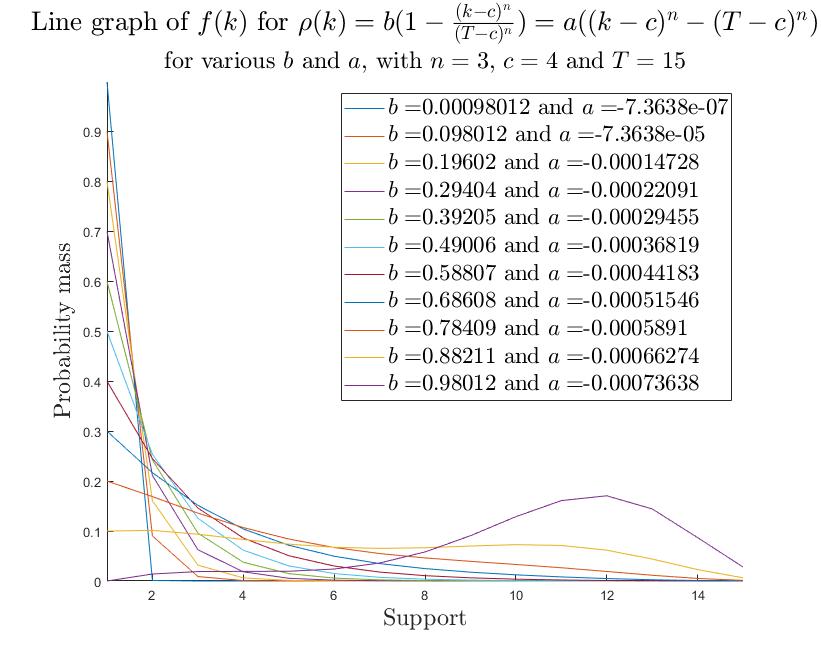}
    \caption{Plot of $f$ with $\rho(k)=b\left(1-\frac{(k-c)^n}{(T-c)^n}\right)=a\left((k-c)^n - (T-c)^n\right)$ with $T=15$, $n=3$, $c=4$ and various $b$.}
    \label{fig:PolyGeomExtension}
\end{figure}
Note that here we see quite different behaviours depending on the value of $b$. The larger $b$ exhibit a non-monotonic, uni-modal or even bi-modal distributions with maxima towards $T$, while the smaller $b$ exhibit more geometric-like qualities, with maximum at $k=1$ and decreasing for larger $k$. Varying $T$ gives similar results.

We also graph $f$ while varying $c$ and $b$ with fixed $n=3$, and $T=15$,  as in \Cref{fig:PolyGeomExtensionC} 
\begin{figure}[!hpt]
    \centering
    \includegraphics[width=13cm]{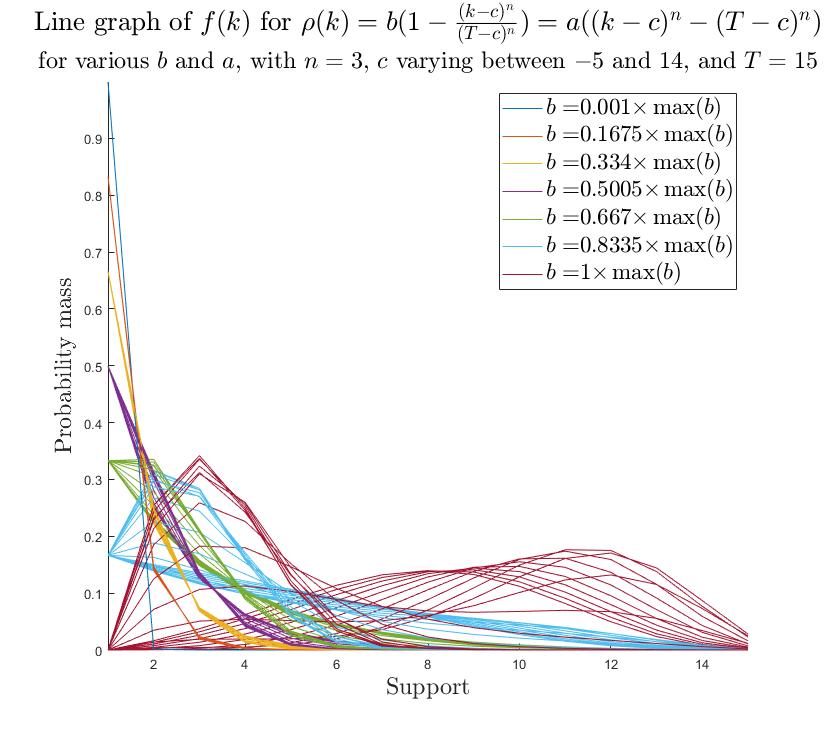}
    \caption{Plot of $f$ with $\rho(k)=b\left(1-\frac{(k-c)^n}{(T-c)^n}\right)=a\left((k-c)^n - (T-c)^n\right)$ with $T=15$, $n=3$, $c=-5,-4,\ldots,14$ and various $b$. Note that $\max(b) = \min_{k} \frac{(T-c)^n}{(T-c)^n -(k-c)^n}$.}
    \label{fig:PolyGeomExtensionC}
\end{figure}
Note that here we see quite different behaviours depending on the value of $b$. The larger $b$ exhibit a non-monotonic, uni-modal or even bi-modal distributions with maxima towards $T$, while the smaller $b$ exhibit more geometric-like qualities, with maximum at $k=1$ and decreasing for larger $k$. That such a varied behaviour is achieved through only two parameters will be helpful for fitting to real data as in \Cref{sec:experidata}.  Varying $T$ gives similar results.

In \Cref{fig:PolyGeomExtensionC2} we further separate out the different behaviours for fixed $b$ and changes in $c$. 
\begin{figure}[!hpt]
    \centering
    \includegraphics[width=15cm]{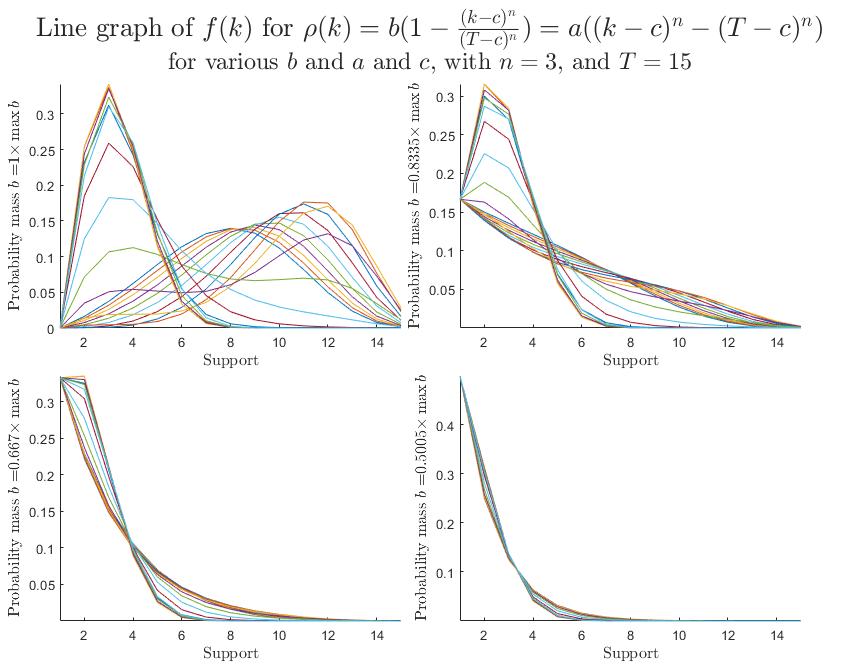}
    \caption{Plot of $f$ with $\rho(k)=b\left(1-\frac{(k-c)^n}{(T-c)^n}\right)=a\left((k-c)^n - (T-c)^n\right)$ with $T=15$, $n=3$, $c=-5,-4,\ldots,14$ and various $b$ that are fixed in each plot. Note that $\max(b) = \min_{k} \frac{(T-c)^n}{(T-c)^n -(k-c)^n}$.}
    \label{fig:PolyGeomExtensionC2}
\end{figure}

\subsection{Relationship to SMM} \label{sec: SemiMarkovRelationship}
In this section we describe the relationship between the family of distributions in \Cref{defn:discreteRho} and semi-Markov models. This is one of the motivations for this paper. This section also contains details on how semi-Markov models can be considered using Matrix analytic methods. This will be important in future applications of the sojourn time distributions studied so far.

\subsubsection{Sojourn time distributions arising from SMMs}
In a semi-Markov model, we have a random process $(X_n,\tau_n)$ $n=1,2,\ldots$. Here $X_n$ is a random variable for each $n$ with state space $S=\{1,2,3,\ldots,s\}$ and $\tau_n$ is a random variable for each $n$. Either $\tau_n$ has a continuous state space in $(0,T_{\max})$ for some $T_{\max}>0$ possibly infinite or a discrete state space in $1,2,\ldots,T_{\max}$ for some integer $T_{\max}>0$ possibly infinite. Here we will only consider discrete state spaces.

Note that we differ from some traditional presentations of semi-Markov models as in \cite[\S1.9]{Medhi2003} where often $X_n$ is only the transitions of the process and $\tau_n$ is the times these transitions occur, assumed to be continuous times, giving a \emph{Markov renewal process} $(X_n,\tau_n)$. Instead, here our $X_n$ will be a discrete random process that may remain in the same state as $n$ increments, and the $\tau_n$ will record the number of times the process has remained in the same state continuously. 

We require the independence condition that 
\begin{align*} P((X_{n+1},\tau_{n+1})=(j,t)|  (X_{n},\tau_{n}),(X_{n-1},\tau_{n-1}),\ldots, (X_{1},\tau_{1})) \\
= P((X_{n+1},\tau_{n+1})=(j,t)|(X_{n},\tau_{n})).
\end{align*} 
and the time-homogenous condition that 
\[ P((X_{n+m+1},\tau_{n+m+1})|(X_{n+m},\tau_{n+m})) = P((X_{n+1},\tau_{n+1})|(X_{n},\tau_{n}))\]
for all $m=1,2,\ldots$.

This means that $Z_n = (X_n,\tau_n)$ is a Markov chain.   We require that $\tau_n$ count the number of consecutive time steps $X_n=j$ has spent in state $j$, so that if we observe that both $X_n$ and $X_{n+1}$ are equal to $j$, then
\[P((X_{n+1},\tau_{n+1})=(j,t)|(X_{n} =j,\tau_{n})) = 0 \qquad \text{if }   \tau_{n}\ne t-1, t\ge 2,\] and otherwise if $X_n=i\ne j$ then
\[P((X_{n+1},\tau_{n+1})=(j,t)|(X_{n} =i,\tau_{n})) = 0 \qquad \text{unless }   t=1.\] 
Then $(X_n,\tau_n)$ is specified by the following probabilities
\begin{align*}
    P(X_{n+1}=j,\tau_{n+1}=t_2|X_{n+1}=i,\tau_{n+1}=t_1)=\begin{cases} A_{i,j}(t_1) & \text{if } t_2 =1, j\ne i,\\
    \rho_i(t_1) & \text{if } i=j, t_2 = t_1+1, \\
    0 & \text{otherwise.} \end{cases}
\end{align*}
Here the $\rho_i(k)$ can be considered as the margin probability that the chain stays in the same state for the next time step. With this set up, we call $X_n$ a \emph{semi-Markov process}. \Cref{fig:Z_trans2} shows the allowable state transitions for the process $Z_n$.

\begin{figure}[!htp]
\centering
\begin{tikzpicture}[line cap=round,line join=round,>=triangle 45,x=1.0cm,y=1.0cm]
\clip(4.,0.5) rectangle (12.5,7.5);
\draw [rotate around={0.:(7.83,6.38)},line width=2.pt] (7.83,6.38) ellipse (1.0453503525729777cm and 0.901031275608368cm);
\draw [rotate around={0.:(10.98,6.36)},line width=2.pt] (10.98,6.36) ellipse (1.0480183729244286cm and 0.9041252733925524cm);
\draw [rotate around={0.:(5.58,4.58)},line width=2.pt] (5.58,4.58) ellipse (1.0295649597816658cm and 0.8826686843942214cm);
\draw [rotate around={0.:(5.6,1.68)},line width=2.pt] (5.6,1.68) ellipse (1.0103414909727655cm and 0.8601685464960177cm);
\draw [->,line width=2.pt] (6.914880793461978,5.9444726038992215) -- (6.21075932820925,5.277623444471598);
\draw [->,line width=2.pt] (8.88,6.44) -- (9.937153222134956,6.4497087863121845);
\draw [->,line width=2.pt] (7.64678916372626,5.492915139373649) -- (6.14,2.42);
\draw (7.2195015639526835,6.773992246644394) node[anchor=north west] {$(j_1,t)$};
\draw (10.01583752600971,6.773992246644394) node[anchor=north west] {$(j_1,t+1)$};
\draw (5.000907793200267,5.000105511465399) node[anchor=north west] {$(j_2,1)$};
\draw (5.009896074623204,2.045910172333133) node[anchor=north west] {$(j_{s},1)$};
\draw (5.301685140236067,3.526996123322203) node[anchor=north west] {$\vdots$};
\draw (8.902493766325249,7.3816172521783715) node[anchor=north west] {$\rho_i(t)$};
\draw (5.208767170275509,6.413214899608595) node[anchor=north west] {$A_{j_1j_2}(t)$};
\draw (6.8058062469563265,3.754855500397444) node[anchor=north west] {$A_{j_1j_s}(t)$};
\end{tikzpicture}
\caption{Shows allowable transitions out of a state for the augmented chain $Z_n$. \label{fig:Z_trans2}}
\end{figure}
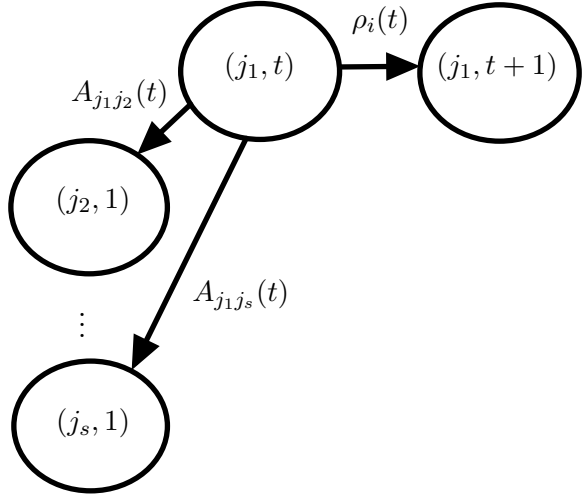

\begin{ex}\label{ex:MarkovSMM}
Note semi-Markov processes are an extension of Markov processes. Given a time homogenous Markov process $X_n$ specified by transition probabilities $P(X_{n+1}=j|X_n=i) = A_{i,j}$, then we can define $\tau_n$ as the consecutive time $X_{n}$ has spent in the current observed state since it entered this state. We then get that 
\[P(X_{n+1}=i,\tau_n=t+1|X_{n}=i,\tau_n=t)=P(X_{n+1}=i|X_n = i)=A_{i,i} = \rho_i(t)\] and for $j\ne i$ we get that
\[P(X_{n+1}=j,\tau_n=1|X_{n}=i,\tau_n=t)=P(X_{n+1}=j|X_n = i)=A_{i,j}.\] Note that neither of these probabilities depend on the time $t$. 
\end{ex}

We have suggestively used $\rho_i$ in the notation above. This is due to the $\rho_i$ satisfying the conditions in \Cref{defn:discreteRho} for each $i$. We have that 
\begin{align*}
   P(&\text{$X_n$ enters state $i$ then stays at state }i\text{ for exactly }k\text{ consecutive steps})\\
   &=\sum_{j_1\ne i, j_2\ne i}P(X_{n+k}=j_1\ne i,X_{n}=X_{n+1}=\ldots =X_{n+k-1}=i, X_{n-1}= j_2)\\
   &=\sum_{j_1\ne i, j_2\ne i}P(X_{n+k}=j|X_{n}=\ldots =X_{n+k-1}=i,X_{n-1}= j_2)\cdot\\
   & \quad P(X_{n}=\ldots =X_{n+k-1}=i, X_{n-1}= j_2)\\
   &=(1-\rho_i(k))\sum_{j_2\ne i}P(X_{n+k-1}=i|X_{n}=\ldots =X_{n+k-2}=i, X_{n-1}= j_2)\cdot\\&\quad P(X_{n}=\ldots =X_{n+k-2}=i, X_{n-1}= j_2)\\
       &=(1-\rho_i(k))\rho_i(k-1)\sum_{j_2\ne i}P(X_{n}=\ldots =X_{n+k-2}=i, X_{n-1}= j_2)\\
      &\quad \vdots\\
      &=(1-\rho_i(k))\rho_i(k-1)\rho_i(k-2)\ldots \rho_i(1)\sum_{j_2\ne i}P(X_n=i,X_{n-1}=j_2),
\end{align*}
Then we have that 
\begin{align*} P(&X_{n+k+1}\ne i, X_{n+k}=X_{n+k-1}=\ldots=X_{n+1}=i|X_n=i,X_{n-1}\ne i) \\
\quad & = (1-\rho_i(k))\rho_i(k-1)\rho_i(k-2)\ldots \rho_i(1).\end{align*}
This is a PMF and the corresponding $f_i$ to the $\rho_i$ from \Cref{defn:discreteRho}. We call this the \emph{sojourn time} distribution of a state $i$, with support $t=1,2,\ldots, T_i$. An aim of our research is to be able to characterise such distributions. Note that if the semi-Markov process is a Markov-process as in \Cref{ex:MarkovSMM} then the sojourn time distributions are geometric with parameters $A_{ii}$.

See \cite[\S1.9]{Medhi2003} and \cite[\S8]{Sahner1996} for further details on semi-Markov models, although they mainly focus on continuous models. Note that in that case a Markov-process would have an exponential sojourn time distribution. 

\subsubsection{Matrix analytic form} \label{subsec:MAM}

Matrix analytic methods (MAMs) \cite{LR99} are a convenient way of manipulating probabilities associated with Markov chains having a two-dimensional state space as is the case for the augmented chain $Z_n$. MAMs are generally associated with {\it quasi birth-death} processes but we can apply the ideas here. Here we associate the {\it levels} with the sojourn time variable, and the {\it phases} with the state variable. 

Here we assume $T_i=T$ are equal for all sojourn time distributions, although we can extend this when they are not equal. Then using the MAM approach, we let $z_n =$ Vec$(Z_n)$ so the transition matrix $\Ac$ for the $z_n$ process is specified by
\begin{align*}
\Ac^\prime & = \left[ \begin{array}{ccccc} A(1)^\prime & A(2)^\prime & \cdots & \cdots & A(T)^\prime \\
D(1) & 0&&& 0\\
0 & D(2) & 0& &0\\
\vdots&&\ddots&&\vdots\\
0&&& D(T-1) & 0\end{array} \right] \ .
\end{align*}
Each block in $\Ac$ is of size $s \times s$ and given by
\begin{align*}
\left[ A(t) \right]_{i,j} & = \left\{ \begin{array}{ll} A_{i,j}(t) & i \neq j \\ 0 & i =j \end{array} \right. \\
D(t) & = \text{diag} \lb A_{i,i}(t) \rb = \text{diag} \lb \rho_i(t)\rb \ .
\end{align*} 
The process can thus either (i) go up one level and stay in the same state (phase) or (ii) drop back to level 1 and go to some other state.
From any state $(i,T)$, the process must jump to some $(j,1), j \neq i$.
Note that $\Ac$ is row stochastic and all blocks of $\Ac$ are row sub-stochastic apart from $A(T)$ which is row stochastic.

To extend this definition when the $T_i$ are not equal, then set $T=\max_iT_i$. When $t\ge T_i$ then set $A(t)=A(T_i)$, otherwise each $A$ has the definition above. When $t\ge T_i$ then $[D(t)]_{i,i}=0$ and otherwise each $D$ has the definition above.

\subsubsection*{Stationary Distribution}

The MAM formulation admits a conceptually simple process for finding the stationary distribution $\Pi$ for $z_n$ via the usual approach of solving $\Ac^\prime \, \Pi = \Pi$. Since $\Ac$ is row stochastic, $\Pi$ exists and is unique. The block form of $\Ac$ leads to :
\begin{align*}
\Pi(t) & = D(t-1) \, \Pi(t-1), \ t = 2, \ldots, T \\
\Pi(1) & = \underbrace{\lb \sum_{t=1}^{T} \ Q(t)^\prime \rb}_{\Bc^\prime} \ \Pi(1) \,
\end{align*}
where
 \begin{align*}
 Q(t) = D(1) \, D(2) \, \cdots \, D(t-1) \, A(t) \ .
 \end{align*}
The quantities $\left[ Q(t)\right]_{i,j}$ are the conditional probabilities of the process $X_n$ staying in a state $i$ for exactly $t$ steps then going to $j \neq i$ given it entered $i$, i.e. 
\begin{align*}
\left[ Q(t)\right]_{i,j} & = \Pr \lbr X_{n+1} = X_{n+2} = \cdots X_{n+t-1} = i, X_{n + t} = j | X_n = i, X_{n-1}\ne i \rbr\\
&= A_{i,j}(t)\rho_i(t-1)\rho_i(t-2)\ldots \rho_i(1)
\end{align*}
and note that
\begin{align*}
\sum_{j\ne i} \left[ Q(t)\right]_{i,j} & = \sum_{j\ne i} \Pr \lbr X_{n+1} = X_{n+2} = \cdots X_{n+t-1} = i, X_{n + t} = j | X_n = i, X_{n-1}\ne i \rbr\\
&= \sum_{j\ne i} A_{i,j}(t)\rho_i(t-1)\rho_i(t-2)\ldots \rho_i(1)\\
&= (1-\rho_i(t))\rho_i(t-1)\rho_i(t-2)\ldots \rho_i(1)
\end{align*}
is our sojourn time distribution as in \Cref{sec:experidata}.
Thus the matrix $\Bc$ contains the probabilities over all possible sojourn times in state $i$, of $X_n$ transitioning from state $i$ to some state $j \neq i$ at a later time. So $\Bc$ is the transition probabilities for the {\it embedded Markov Chain}. 
Then $\Bc$ is row stochastic, and we can find a non-negative solution $\Pi(1)$ with $\Pi(1) = \Bc^\prime \, \Pi(1)$. The remaining levels of $\Pi$ are then easily computed. The process is computationally efficient and of order $O(s^2 \, T) + O(s^3)$. 

\subsubsection*{Dynamics and Forgetting}

The forgetting properties for the augmented process are determined by the second largest magnitude eigenvalue of $\Ac^\prime$. Consider the equation $\Ac^\prime \, \Phi = \lambda \, \Phi$, then from the block form we have
\begin{align*}
\Phi(t) & = \lambda^{-1} \, D(t-1) \, \Phi(t-1), \ t = 2, \ldots, T \\
\Phi(1) & = \lambda^{-1} \, \lb \sum_{t=1}^{T} \ \lambda^{1-t} \, Q(t)^\prime \rb \ \Phi(1) \,
\end{align*}
Consider the matrix polynomial
\begin{align*}
\Pc(\lambda) & = \lambda^{T} \, I - \sum_{t=1}^{T} \ \lambda^{T-t} \, Q(t)^\prime
\end{align*}
then the eigenvalues are given by det $\Pc(\lambda) = 0$. Whilst we have not as yet studied forgetting properties of SMCs, these remain an important issue and the above formulation permits their study with reduced computational complexity. We will return to this issue in future work.

\section{Maximum likelihood estimation and Cram\'{e}r-Rao bounds} \label{sec:MLEsandCRLBs}
In this section we study maximum likelihood estimators (MLEs) of the parameters for the linear factor model and the simple polynomial factor models in \Cref{subsec:linearex} and \Cref{subsec:simplePolyexample}. To do this, we first write formulae for the likelihoods, then determine their properties in an effort to understand how to solve for the MLEs. 

We then numerically simulate drawing from such distributions with known parameters and determining the MLEs numerically. We compare our results for different sample sizes and initial parameters, including discussing the Cram\'{e}r-Rao lower bound for the variance of estimates for the parameters and how closely this aligns with the variance of our MLEs.
\subsection{Linear factor model} \label{subsec:linearMLEs}
Recall the linear factor model from \Cref{subsec:linearex} with $\rho(k)=ak+b$. We will write this in terms of the parameter $b$ so that $\rho(k) = b(1-\frac{k}{T})$. Given samples $x_1,\ldots, x_n$ we wish to know the MLE for the parameters for $f$. We first consider $T$ fixed. We then have log-likelihood
\[ l(b;\mathbf{x}) = \sum_{i=1}^n \left(\log(1+b(\frac{x_i}{T}-1)) + (x_i-1)\log(\frac{b}{T}) + \log\frac{(T-1)!}{(T -x_i)!}\right) .\]
Differentiating with respect to $b$ we have
\begin{align}
    \frac{d l}{db} & =  \sum_{i=1}^n \left(\frac{\frac{x_i}{T}-1}{1+b(\frac{x_i}{T}-1)} + \frac{x_i-1}{b}\right)\nonumber\\
    & = \sum_{i=1}^n \frac{\frac{x_i}{T}-1}{1+b(\frac{x_i}{T}-1)} + \frac{n
    \bar{x}-n}{b}, \label{eqn: derivativeLinearb}
\end{align}
 where $\bar{x}=\frac{1}{n}\sum_{i=1}^nx_i$ denotes the sample mean.
When we send $b$ towards zero, we have that the right most term goes to infty and the left most term is finite, so this means we have a positive slope around $b=0$, while the sign at $b=\frac{T}{T-1}$ can be positive or negative.

We can check the second derivative of $l$ w.r.t. $b$ is
\begin{align*}
    \frac{d^2 l}{db^2} &  = \sum_{i=1}^n \frac{-(\frac{x_i}{T}-1)^2}{(1+b(\frac{x_i}{T}-1))^2} + -\frac{n\bar{x}-n}{b^2}.
    \end{align*}
which is negative for all values of $b$. This means that the likelihood is concave, so that the maximum occurs either at the end point with $b=\frac{T}{T-1}$ or at the (only) point (if it exists) where the first derivative is equal to zero for $b\in (0,\frac{T}{T-1}]$. 

We can use numerical methods to solve for this $b$, which is the MLE, using the formula for the first derivative. We can also reduce further to a polynomial form before solving.

To do this, set the first derivative equal to zero and then we have 
\begin{align*}
    \frac{n-n\bar{x}}{b}& = \sum_{i=1}^n \frac{\frac{x_i}{T}-1}{1+b(\frac{x_i}{T}-1)}\\
    & = \frac{\sum_{i=1}^n(\frac{x_i}{T}-1)\prod_{j\ne i}(1+b(\frac{x_j}{T}-1))}{\prod_{i=1}^n(1+b(\frac{x_i}{T}-1))}
\end{align*}
 so that we require to find $b$ such that it solves the polynomial equation
 \begin{align*}
    0&= b\sum_{i=1}^n(\frac{x_i}{T}-1)\prod_{j\ne i}(1+b(\frac{x_j}{T}-1))+ (n\bar{x}-n)\prod_{i=1}^n(1+b(\frac{x_i}{T}-1)) \\
    &= n\bar{x}-n + (n\bar{x}-n+1) (\sum_{i=1}^nb^i\sum_{j=1}^i\sum_{k_1\ne\ldots\ne k_j=1}^n c_{k_1}c_{k_2}\cdots c_{k_j})
    \end{align*}
with $b\in (0,\frac{T}{T-1}]$ and $c_{i}= \frac{x_i}{T}-1 $. As this is a $n$-order polynomial we can use numerical methods to solve. 

Solving for a root of the polynomial within $(0,\frac{T}{T-1}]$ works well in practice, for small sample sizes. Note that from our previous discussions there is at most 1 root in this domain, and if there is no root the MLE is $\frac{T}{T-1}$. However when the sample size is large, as $|c_i|<1$, then the coefficients of $b$ become exceptionally small for all low order terms, and then function may then return zero up to precision level of the programming language, causing an issue. Solving \Cref{eqn: derivativeLinearb} numerically instead does not have the same issues, although we note that the log-likelihood is often extremely flat around the MLE requiring the tolerance for numerical optimisation to be reduced. 

We can similarly do this with respect to the variable $a$, where $\rho(k) = a(k-T)$. Note that there are minimal changes as $b=-aT$ and $a$ is negative. With $T$ again fixed and $x_1,\ldots, x_n$ a sample, we have the log-likelihood
\[ l(a;\mathbf{x}) = \sum_{i=1}^n \left(\log(1+a(T-x_i)) + (x_i-1)\log(-a) + \log\frac{(T-1)!}{(T -x_i)!}\right) .\]

Differentiating with respect to $a$ we have
\begin{align}
    \frac{d l}{d a} & =  \sum_{i=1}^n \left(\frac{T-x_i}{1+a(T-x_i)} + \frac{x_i-1}{a}\right)\nonumber\\
    & = \sum_{i=1}^n \frac{T-x_i}{1+a(T-x_i)} + \frac{n\bar{x}-n}{a}. \label{eqn: derivativeLineara}
\end{align}
If we consider sending $a$ to zero, then the first term tends to a positive number while the second term tends to negative infinity, so the slope is negative as we send $a$ to zero. The second derivative is 
\begin{align}
    \frac{d^2 l}{d a^2} & =  \sum_{i=1}^n \frac{-(T-x_i)^2}{(1+a(T-x_i))^2} - \frac{n\bar{x}-n}{a^2}. \label{eqn: 2ndderivativeLineara}\\
    &\le 0
\end{align}
As this second derivative is negative, then the function must be concave everywhere. Given the slope is positive as $a$ tends to zero, then there is a unique MLE which occurs either where the derivative is equal to zero or at the lower boundary where $a=\frac{1}{1-T}$.

We can again solve for the MLE numerically. We can also apply the same technique as above to write this as a polynomial equation to find the roots, however the coefficients will now be of the form $(T-x_i)$ which for large sample sizes will create very large numbers that may not be computable by the operating system.

\subsubsection{Numerical analysis}
While we can solve the above equations numerically to determine the MLE for $a$ and/or $b$ for a given sample and given $T$, we do not know if the MLE is a good estimate or not, given the finite sample sizes. We cannot analytically determine the bias nor the variance of the MLE from the formulae above. However, we have simulated the bias and variance of the MLE as below using Matlab. For the definitions of  Cram\'{e}r-Rao Lower Bound and Fisher information used in this section see any standard statistics textbook such as \cite[pg.~517--527]{DeGroot}.

Note that the Cram\'{e}r-Rao Lower Bound $(\mathrm{CRLB})$ is a lower bound for the variance of that estimator under certain regularity conditions. If $T(X)$ is an estimator for a parameter $a$ then we have
\[ \Var(T) \ge \frac{\frac{d}{da}E_X(T(X))}{\mathcal{I}(a)}= \mathrm{CRLB}(a)\] where $\mathcal{I}$ is the Fisher information.
This gives a measure of how consistently the estimator estimates the parameter (up to the bias). If the estimator is unbiased, the expected value of the estimator is equal to the parameter and the derivative is equal to one. In this case, the lower bound is exactly the inverse Fisher information. So, for unbiased estimators, the CRLB provides a performance bound which is only dependent on the model, and not the form of estimator itself. 

In our case, we cannot determine the expected value of the MLE nor the derivative of this analytically, so we have chosen not to estimate this. We can determine the Fisher information as follows, where it is the negative of the expected value of the second derivative of the log-likelihood. 

With respect to $a$ this is 
\begin{align*} \mathcal{I}(a)=\E_X\left(\frac{d}{d a}\log(f(X;a))\right)^2 & = \E_X\left(\sum_{i=1}^n \left (\frac{(T-X_i)}{1+a(T-X_i)} + \frac{X_i-1}{a^2}\right)\right)^2
\end{align*}
which is an analytic formula given we know the true $a$ and $T$. Similarly for $b$.

For the numerical simulations, we set the parameters $T$ and $a$ (or $b$). We assume $T$ is known. Using Matlab, we sample $n$ points from the probability distribution $f$ and then numerically determine the MLE for $a$ using {\tt fmincon}. We iterate this 10,000 times. We determine the sample mean of the MLEs of the iterates and the sample variance of the MLEs of the iterates numerically. 

In \Cref{fig:linearExtensiona} we plot the sample bias and see that it is close to zero for all sample sizes, and decreases to zero as the sample size increases. We also plot the sample variance along with the inverse Fisher information and mean squared error. Note that the Fisher information is not a lower bound in all cases, which we understand is due to the bias of the MLE and also due to regularity conditions that may break for $a$ very small, as the likelihood is undefined for $a=0$.

\begin{figure}[]
\vspace{-14pt}
    \centering
    \begin{subfigure}[b]{\textwidth}
    \includegraphics[width=14.5cm]{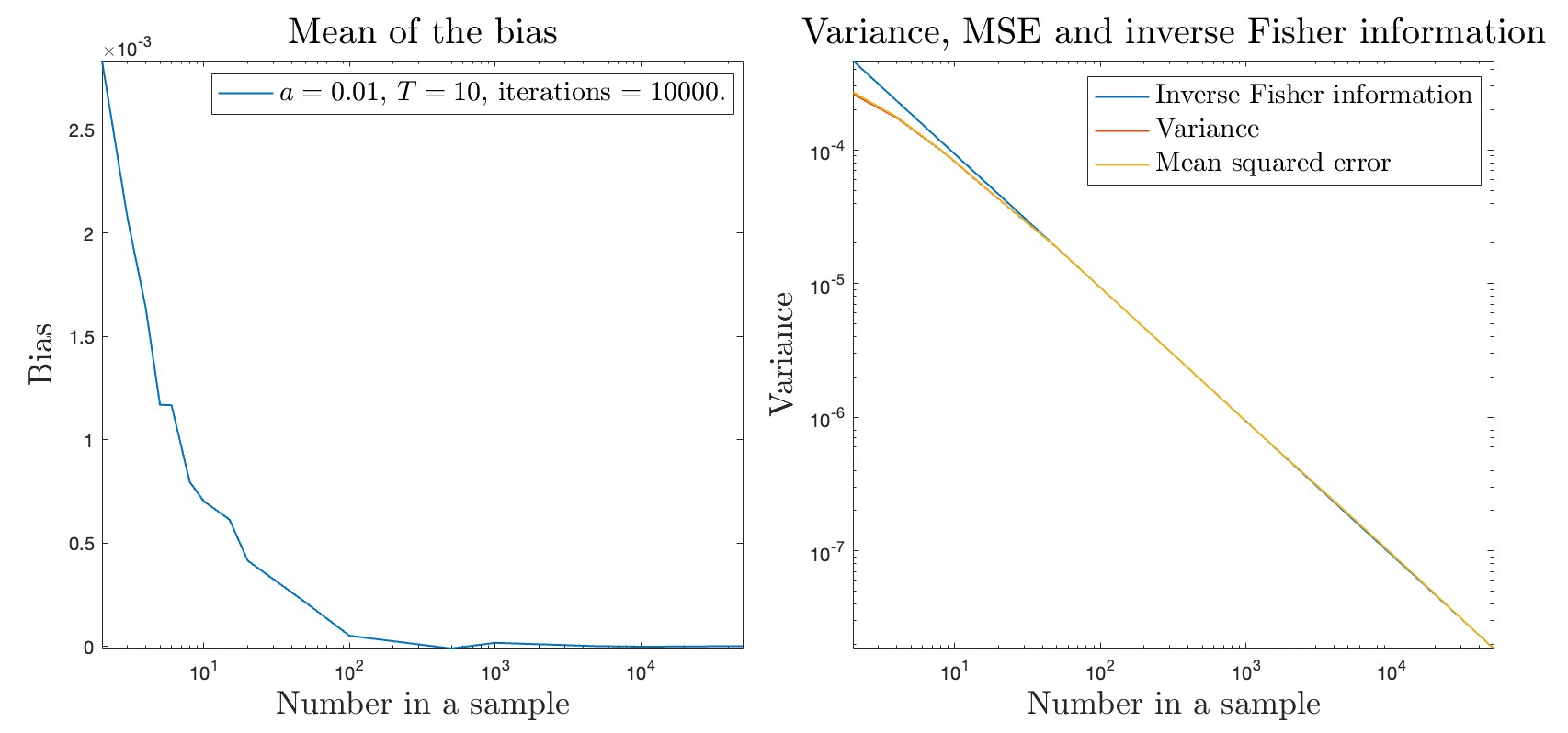}
    \end{subfigure}
    \begin{subfigure}[b]{\textwidth}
    \includegraphics[width=14.5cm]{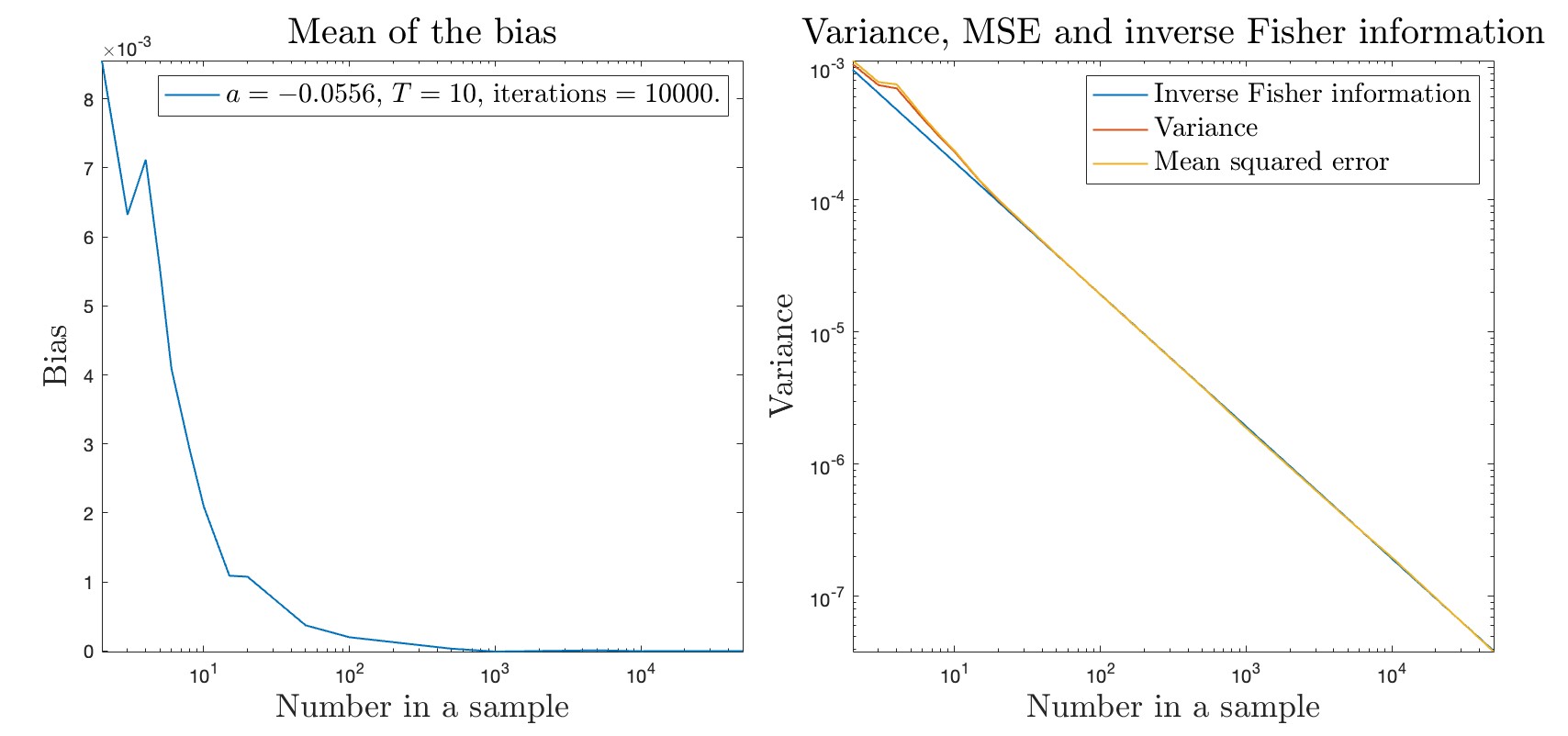}
    \end{subfigure}
    \begin{subfigure}[b]{\textwidth}
    \includegraphics[width=14.5cm]{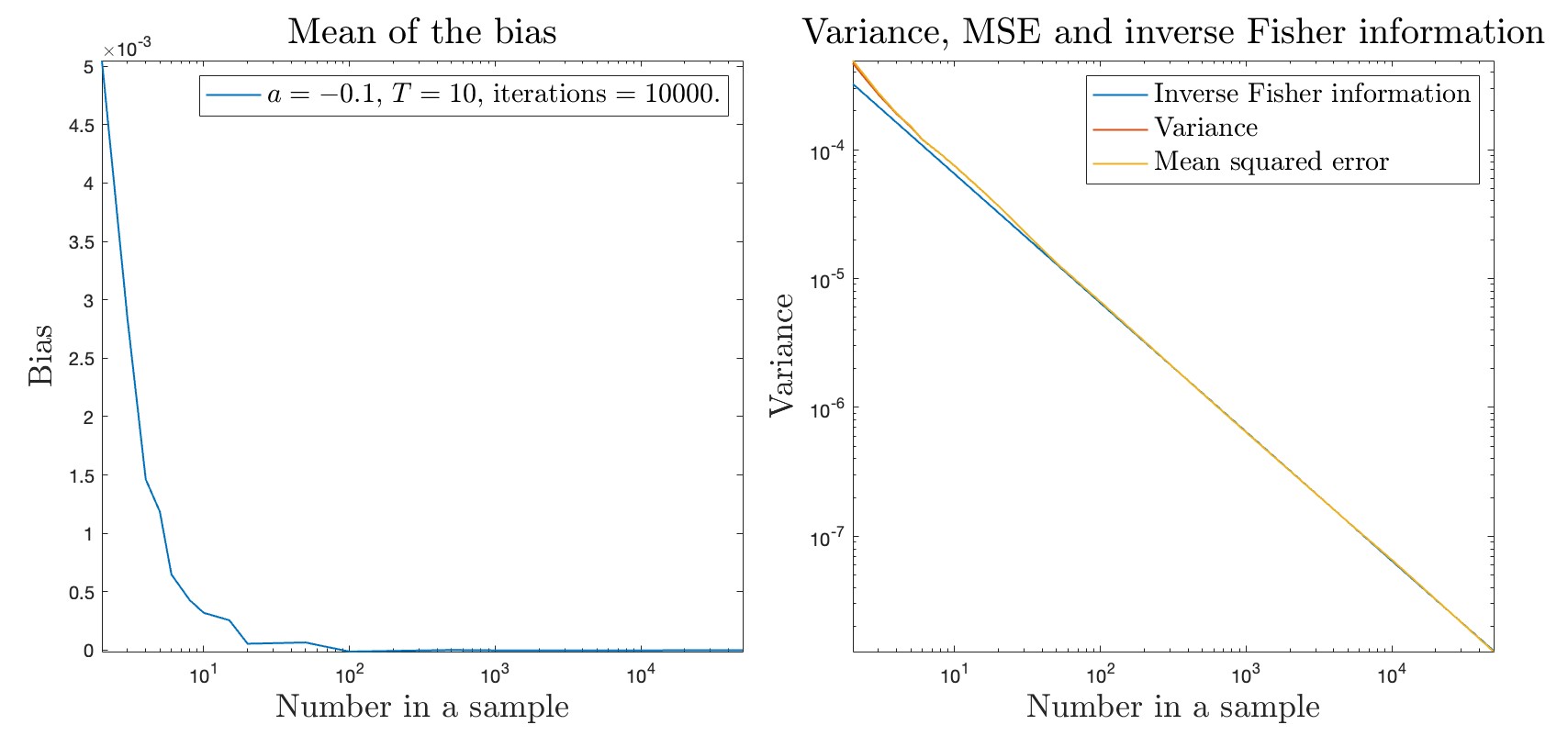}
    \end{subfigure}
    \caption{Mean of the sample bias of the MLE estimates for $a$ over all iterations, and separately the sample variance and mean squared error of the estimates with the Fisher information. Sample sizes ranged from $2$ to $50,000$, and samples were taken from the distribution $f$ with $\rho(k) = a(k-T)$ and true $a=-0.01,-0.0556,-0.1$, $T=10$. $10,000$ iterations for each sample size. We assumed $T$ was known. Note the log scales where used.}
    \label{fig:linearExtensiona}
\end{figure}
We see that small sample sizes for $a=-0.01$ result in the variance being lower than the inverse Fisher information, while for larger samples the Fisher information and the variance are almost the same. For the largest sample size of $50,000$ we have the inverse Fisher information slightly lower than the sample variance. Plotting for other $a$ as per \Cref{fig:linearExtensiona} we see that the larger $a$ is the more the Fisher information is below the sample variance, acting as a lower bound.

\subsubsection{\texorpdfstring{Unknown $T$}{Unknown T}} \label{subsec:linearunknownT}
In the previous section we assumed $T$ was known. In this section, we consider how to estimate $T$ when $T$ is unknown, as would be likely to occur in practice. Note that given a sample $x_1,\ldots, x_m$ we need at least $T\ge\max_i{x_i}$, as the support of the PMF must allow for the sample to be obtained. We numerically estimate the MLE for $T$ using a grid based method.

We sample from the linear model with $T=10$, and consider the true $a$ equal to $-0.1,-0.01,-0.0556$. We take a sample $x_1,\ldots, x_m$ of size $m$ and then for each of 
\[T =\max_i{x_i}, \max_i{x_i}+1,\max_i{x_i}+2,\ldots, T_{\max}\]
we estimate the MLEs using Matlab's {\tt fmincon}. Here we took $T_{\max} = \max_i{x_i}+200$.

We pick the value of $T$ and corresponding {\tt fmincon} output for $a$ which have the maximum likelihood for the sample as our MLEs. We iterate this 100 times and record the bias of the means of the MLEs of the iterates and the variance of the MLEs of the iterates. We do this for various different sample sizes $m$ including $m=2, 3, 4, 5, 6, 8, 10, 15, 20, 50, 100, 500,$ $1000, 5000, 10000, 50000$.  We plot in \Cref{fig:linearExtensionMLEUnknownTfora} and \Cref{linearExtensionMLEUnknownTforT} the bias of the MLEs against $m$, and we plot the variance, the inverse Fisher information and the mean squared error against $m$. We see that variance is larger when $m$ is small, but otherwise it decreases and begins to align with the inverse fisher information for $a$. 

Note that as we are undertaking a grid search for $T$, then $T$ is an integer and at large sample sizes for some $a$ we observe the MLE achieving the correct result of $T=10$ for every iteration, giving a variance of zero. This is unable to be plotted on the axes due to the log scale. Such is the case when $a=0.1$. In this case, we have also included the single parameter inverse Fisher information. Here, it is clear that the MLE estimates for $a$ tend to follow the single parameter inverse Fisher information when the sample size is large enough, which is when $T$ is correctly predicted at every iteration.  Note that such integer parameters such as $T$ are often analysed by considering how frequently they achieve the true value, and converting this into a probability --- in this case they are achieving the true value every time, so the probability would be 1 with some confidence level. We have not used this approach here.  

The inverse Fisher information is derived by assuming $T$ to be continuous and is only included as a guide. We use that the derivative of $l(a,T;x)$ for a sample $x_1,\ldots, x_m$, with respect to $T$ is 
\[ \frac{dl}{dT} = \sum_{i=1}^m \left(\frac{a}{1+a(T-x_i)} + \sum_{k=1}^{x_i-1}\frac{1}{T-k} \right) .\] 

\begin{figure}[]
\vspace{-8pt}
    \centering
    \begin{subfigure}[b]{\textwidth}
    \includegraphics[width=14cm]{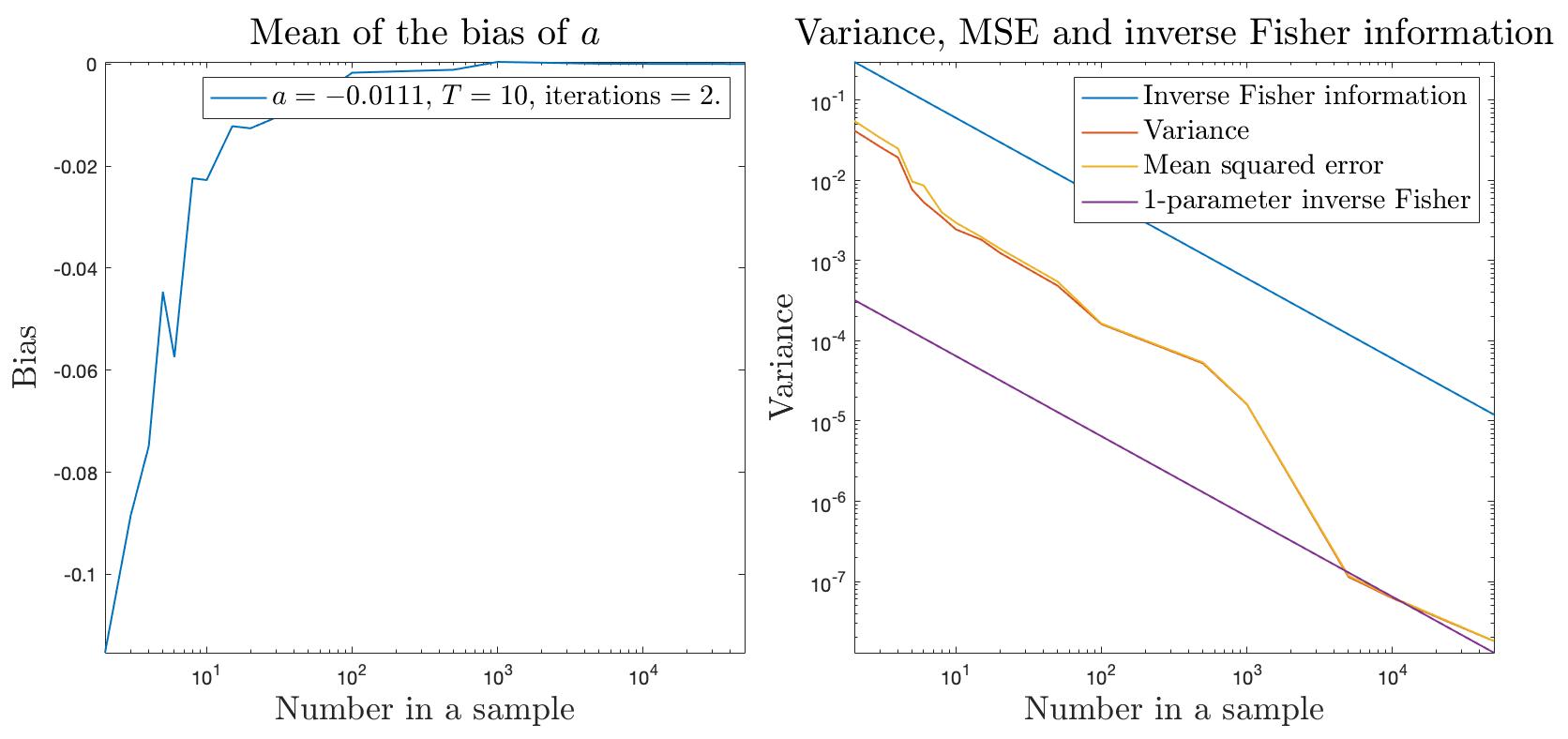}
    \end{subfigure}
    \begin{subfigure}[b]{\textwidth}
    \includegraphics[width=14cm]{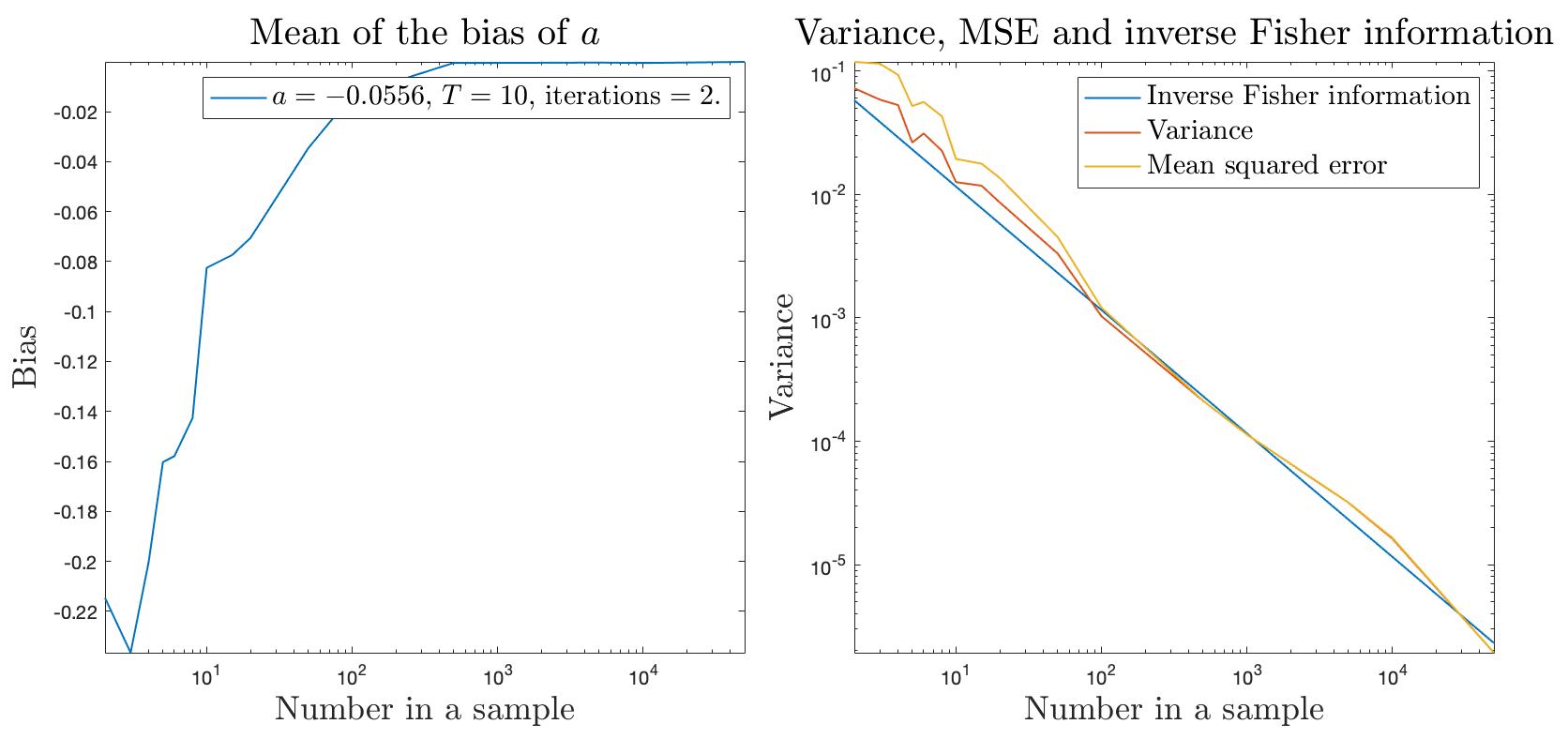}
    \end{subfigure}
    \begin{subfigure}[b]{\textwidth}
    \includegraphics[width=14cm]{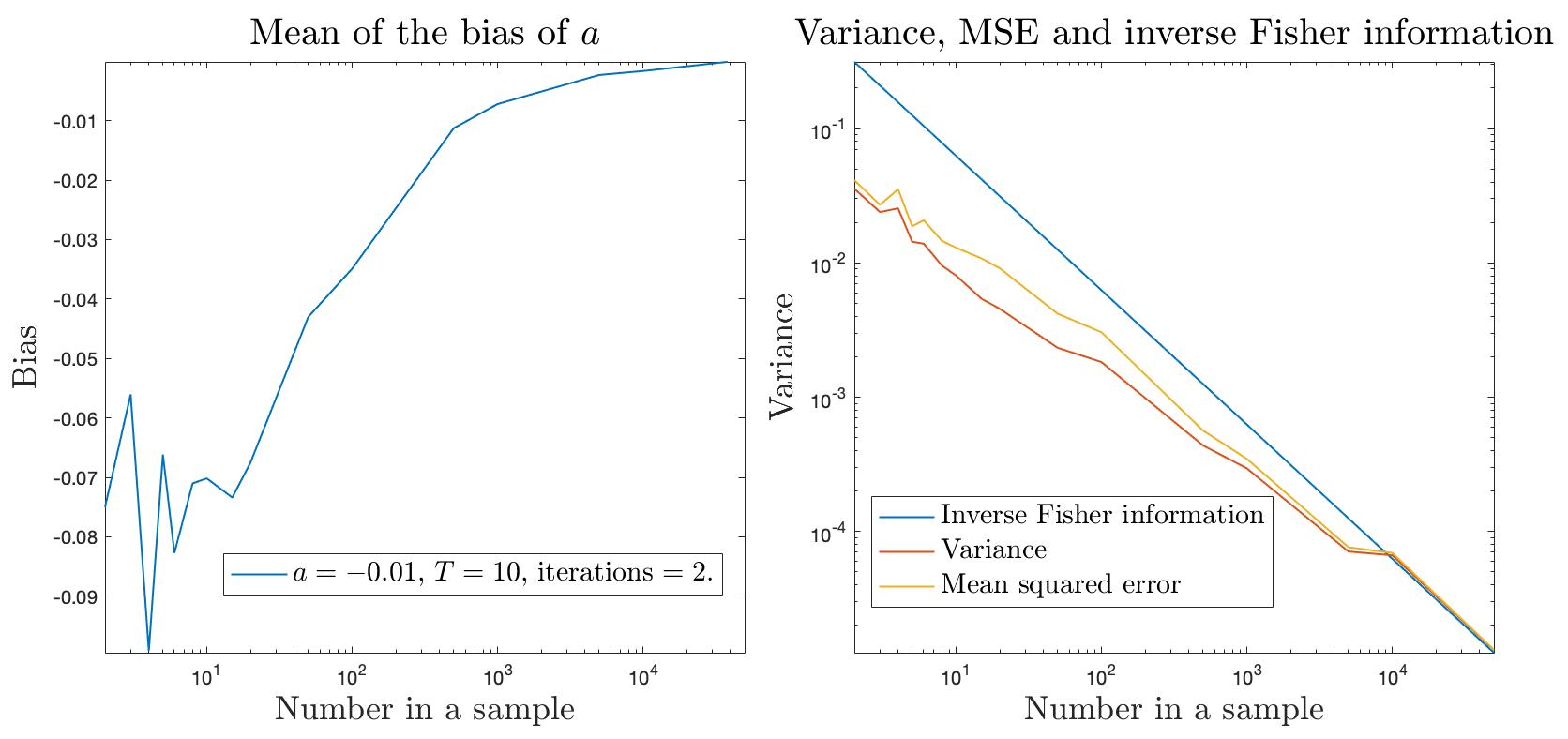}
    \end{subfigure}
    \caption{Mean of the sample bias of the MLE estimates for $a$ over all iterations, and separately the sample variance and mean squared error of the estimates against the Fisher information. Sample sizes ranged from $2$ to $50,000$, and samples were taken from the distribution $f$ with $\rho(k) = a(k-T)$ and true $a=-0.1,-0.0556,-0.01$, $T=10$ and $100$ iterations for each sample size. We assume $T$ is unknown. Note the log scales where used.}
    \label{fig:linearExtensionMLEUnknownTfora}
\end{figure}

\begin{figure}[]
\vspace{-8pt}
    \centering
    \begin{subfigure}[b]{\textwidth}
    \includegraphics[width=14cm]{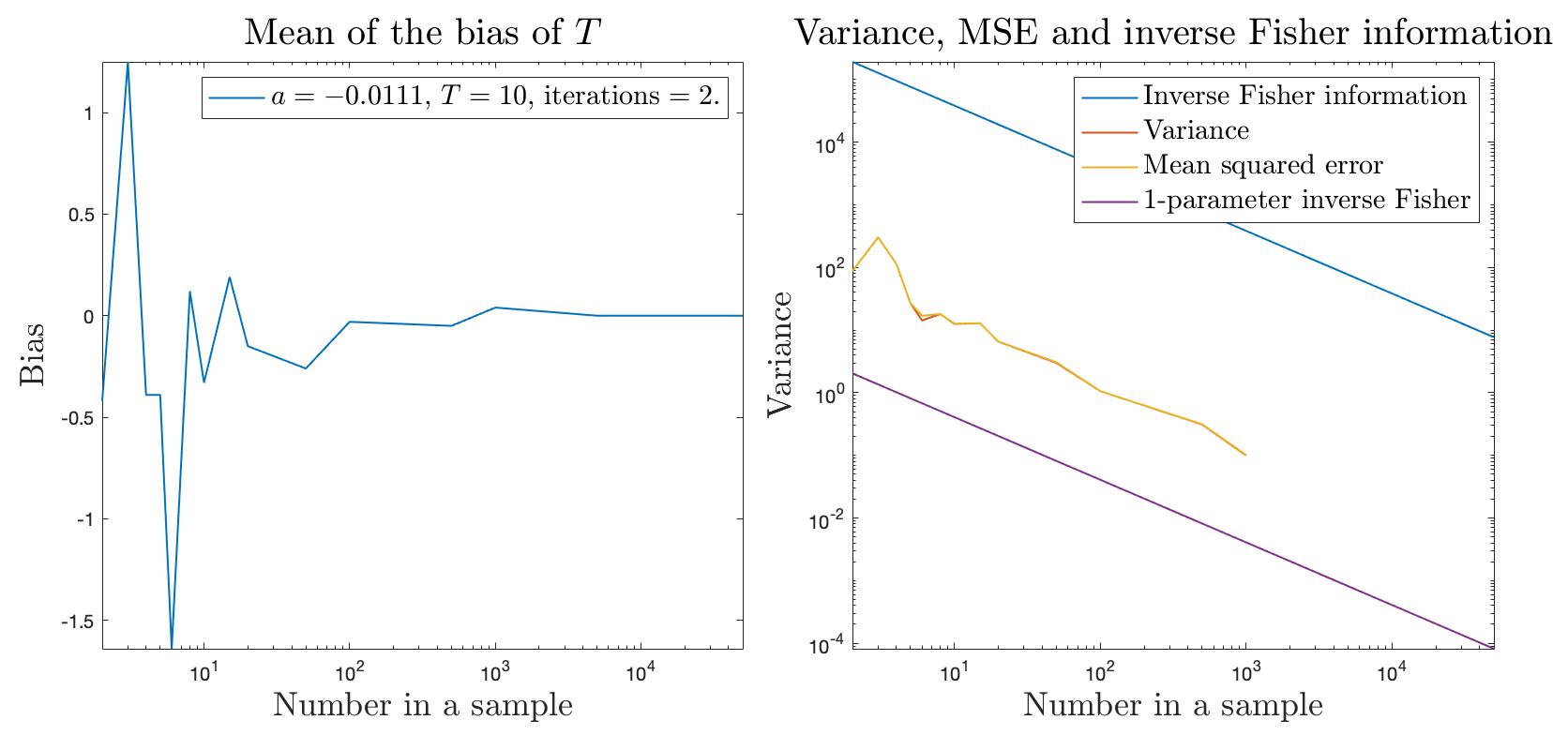}
    \end{subfigure}
    \begin{subfigure}[b]{\textwidth}
     \includegraphics[width=14cm]{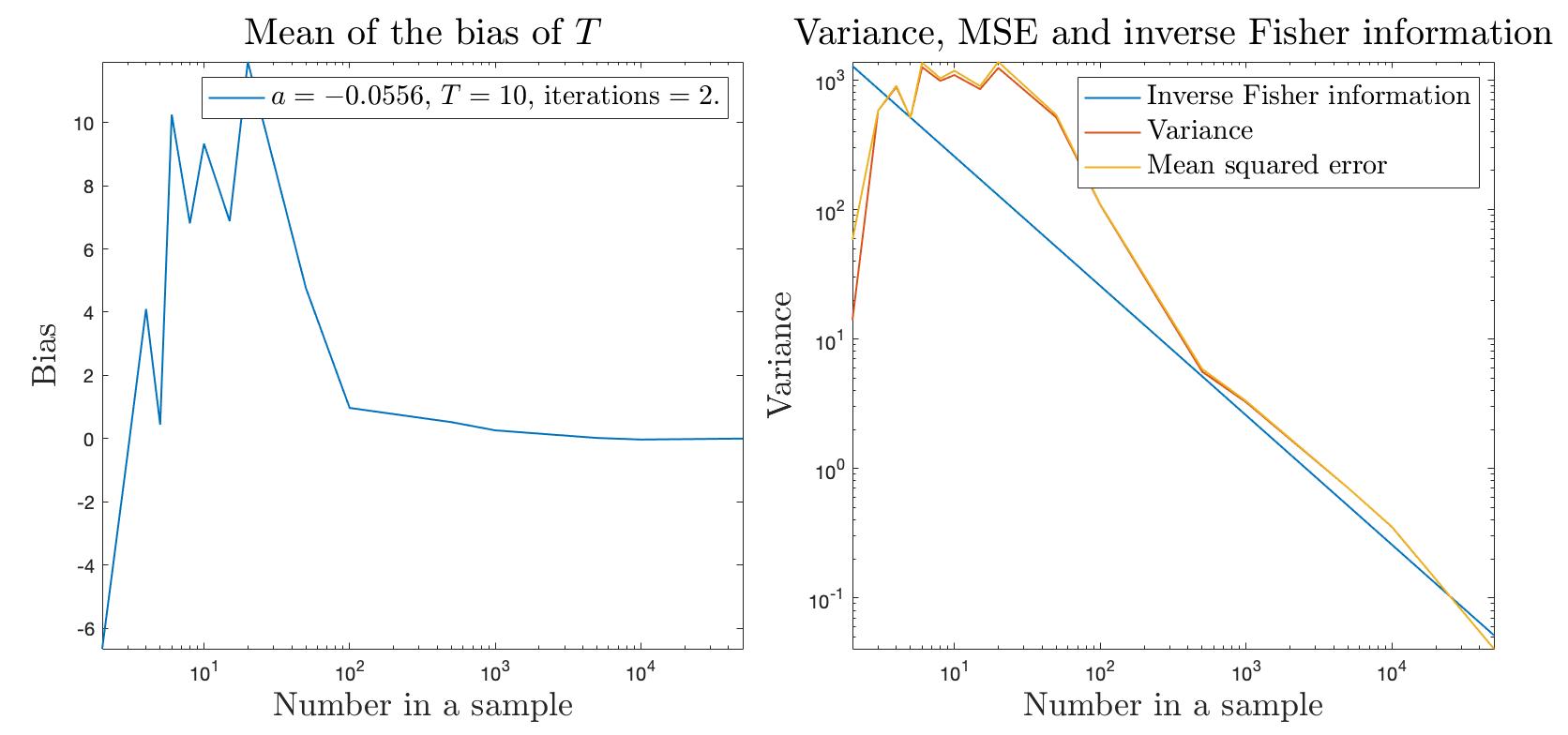}
    \end{subfigure}
    \begin{subfigure}[b]{\textwidth}
    \includegraphics[width=14cm]{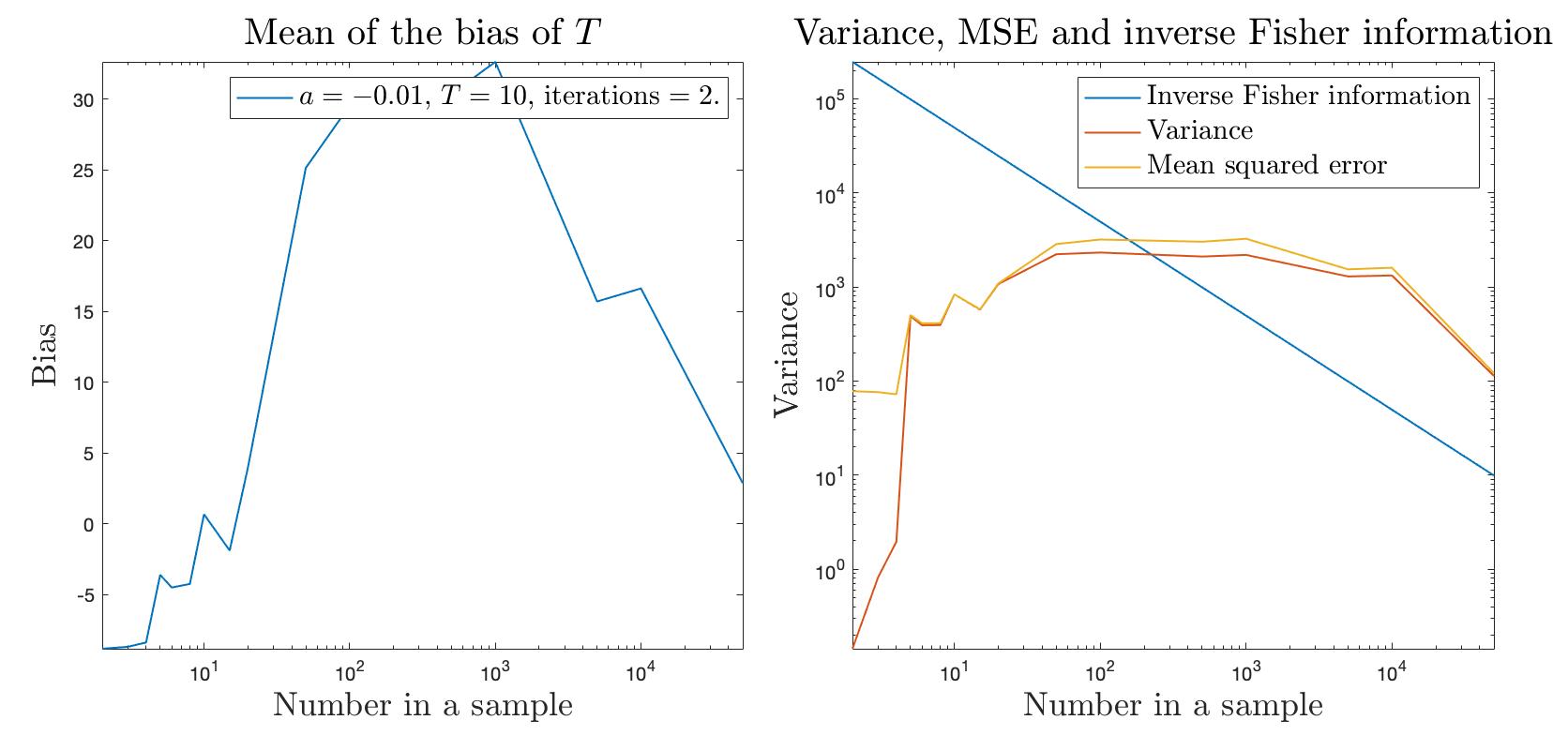}
    \end{subfigure}
    \caption{Plot of the mean of the sample bias of the MLE estimates for $T$ over all iterations, and separately the sample variance and mean squared error of the estimates against the Fisher information. Sample sizes ranged from $2$ to $50,000$, and samples were taken from the distribution $f$ with $\rho(k) = a(k-T)$ and true $a=-0.1,-0.0556,-0.01$, $T=10$ and $100$ iterations for each sample size. Note the log scales where used.}
    \label{linearExtensionMLEUnknownTforT}
\end{figure}

Note that for $a$ close to zero, then for small sample sizes the sample that contains all ones is frequently selected. In this case, the MLE is $T=1$ and $a=0$. Given the frequency of this result, the variance for T is reduced for these small sample sizes (although the bias is large).

\subsection{Simple polynomial factor model} \label{subsec:polyMLE}
We now repeat the process in \Cref{subsec:linearMLEs} for the simple polynomial factor model in \Cref{subsec:simplePolyexample}. With respect to $b$ the log-likelihood for an i.i.d. sample $x_1,\ldots, x_m$ is
\begin{align*} l(b,c, T) =& \sum_{i=1}^m\Bigg(\log\left(1+b\left(\frac{(x_i-c)^n}{(T-c)^n}-1\right)\right)+(x_i-1)\log(b) -n(x_i-1)\log(T-c) \\
&+ \sum_{t=1}^{x_i-1}\log((T-c)^n-(t-c)^n)\Bigg).
\end{align*}
The derivative with respect to $b$ is as follows 
\[ \frac{\partial l}{\partial b} = \sum_{i=1}^m\left( \frac{\frac{(x_i-c)^n}{(T-c)^n}-1}{1+b\left(\frac{(x_i-c)^n}{(T-c)^n}-1\right)} + \frac{x_i-1}{b}\right).\]
The second derivative with respect to $b$ is as follows
\[ \frac{\partial^2 l}{\partial b^2} = \sum_{i=1}^m\left( \frac{-(\frac{(x_i-c)^n}{(T-c)^n}-1)^2}{(1+b\left(\frac{(x_i-c)^n}{(T-c)^n}-1\right))^2} - \frac{x_i-1}{b^2}\right).\]

The derivative with respect to $c$ is as follows
\[ \frac{\partial l}{\partial c} = \sum_{i=1}^m \left(\frac{\frac{-nb(x_i-c)^{n-1}}{(T-c)^n} +\frac{nb(x_i-c)^n}{(T-c)^{n+1}} }{1+b\left(\frac{(x_i-c)^n}{(T-c)^n}-1\right)} + \frac{n(x_i-1)}{T-c} + \sum_{t=1}^{x_i-1}\frac{-n(T-c)^{n-1}+n(t-c)^{n-1}}{(T-c)^n-(t-c)^n}\right).\]
The second derivative with respect to $c$ is as follows 
\begin{align*} \frac{\partial^2 l}{\partial c^2} &= \sum_{i=1}^m \left(\frac{nb\left(\frac{-(n-1)(x_i-c)^{n-2}}{(T-c)^n}  -\frac{2n(x_i-c)^{n-1}}{(T-c)^{n+1}} +\frac{(n+1)(x_i-c)^n}{(T-c)^{n+2}}\right) }{1+b\left(\frac{(x_i-c)^n}{(T-c)^n}-1\right)} - \frac{\left(\frac{-nb(x_i-c)^{n-1}}{(T-c)^n} +\frac{nb(x_i-c)^n}{(T-c)^{n+1}} \right)^2}{\left(1+b\left(\frac{(x_i-c)^n}{(T-c)^n}-1\right)\right)^2} \right.\\
&+\frac{n(x_i-1)}{(T-c)^2}\\
& \left.+ \sum_{t=1}^{x_i-1}\left(\frac{n(n-1)((T-c)^{n-2}-(t-c)^{n-2})}{(T-c)^n-(t-c)^n}- \frac{(-n(T-c)^{n-1}+n(t-c)^{n-1})^2}{((T-c)^n-(t-c)^n)^2}\right)\right).\end{align*}

The mixed partial derivative with respect to $b$ and $c$ is as follows
\[ \frac{\partial^2 l}{\partial b \partial c} =  \sum_{i=1}^m \left(\frac{\frac{-n(x_i-c)^{n-1}}{(T-c)^n} +\frac{n(x_i-c)^n}{(T-c)^{n+1}} }{1+b\left(\frac{(x_i-c)^n}{(T-c)^n}-1\right)}-\frac{\frac{-nb(x_i-c)^{n-1}}{(T-c)^n} +\frac{nb(x_i-c)^n}{(T-c)^{n+1}} }{\left(1+b\left(\frac{(x_i-c)^n}{(T-c)^n}-1\right)\right)^2}\left(\frac{(x_i-c)^n}{(T-c)^n}-1\right)\right).\]

The Hessian of the likelihood is 
\[ H(b,c) = \begin{bmatrix} \frac{\partial^2 l}{\partial b^2} &\frac{\partial^2 l}{\partial b \partial c} \\\frac{\partial^2 l}{\partial b \partial c} &  \frac{\partial^2 l}{\partial c^2} \end{bmatrix}. \] Noting that $\frac{\partial^2 l}{\partial b^2}<0$ for all values of $b,c$ then at any point where the first partials are equal to zero, this point is either a maximum or a saddle point. Then the MLE must be found either at such a point that is a maximum or along the boundary. Numerical solvers can be used to find the maximum, solving for the maximum of the likelihood function in terms of $b$ and $c$ to determine the MLEs.

Similarly, in terms of $a$, then $b = -a(T-c)^n$ and $\rho(k) = a((k-c)^n - (T-c)^n)$. 
We have
\[ f(k) = \left(1+a\left((T-c)^n-(k-c)^n\right)\right)(-a)^{k-1}\prod_{t=1}^{k-1}((T-c)^n-(t-c)^n).\]

With respect to $a$ the log-likelihood for a sample $x_1,\ldots, x_m$ is
\begin{align}\label{eqn:loglikepolyac} l(a,c, T) =&\\\nonumber \sum_{i=1}^m\Bigg(\log&\left(1+a\left((T-c)^n-(x_i-c)^n\right)\right)+(x_i-1)\log(-a)  
+ \sum_{t=1}^{x_i-1}\log((T-c)^n-(t-c)^n)\Bigg).
\end{align}

The derivative with respect to $a$ is as follows 

\[ \frac{\partial l}{\partial a} = \sum_{i=1}^m\left( \frac{(T-c)^n-(x_i-c)^n}{1+a\left((T-c)^n-(x_i-c)^n\right)} + \frac{x_i-1}{a}\right).\]

The derivative with respect to $c$ is as follows

\[ \frac{\partial l}{\partial c} = \sum_{i=1}^m \left(\frac{-na(T-c)^{n-1} + na(x_i-c)^{n-1} }{1+a\left((T-c)^n-(x_i-c)^n\right)}  + \sum_{t=1}^{x_i-1}\frac{-n(T-c)^{n-1}+n(t-c)^{n-1}}{(T-c)^n-(t-c)^n}\right).\]

These derivatives are required to write the Fisher information which we use in \Cref{subsec:PolyNumAnaly}

\subsubsection{Numerical analysis}\label{subsec:PolyNumAnaly}

Similar to the linear factor model case, here we take various samples from the simple polynomial factor model in terms of a true $a$ and $c$. We consider the case $n=3$ assumed to be known and $T=10$ also assumed to be known, and $c=5$ assumed to be unknown while $c<T$ is known. We then consider the different cases of $a=0.1\min(a), 0.5\min(a),0.9\min(a)$. For a set sample size $m$ and a fixed true $a$, we take samples from the true distribution and numerically determine the MLEs of $a$ and $c$ using the Matlab command {\tt fmincon}. We iterate this 10,000 times and then take the mean of the bias and variance of these MLEs over the iterations.  We consider $m=2, 3, 4, 5, 6, 8, 10, 15, 20, 50, 100, 500, 1000, 5000, 10000, 50000$.

In \Cref{fig:PolyGeomExtensionMLEa} and \Cref{fig:PolyGeomExtensionMLEc} the mean of the bias against the sample sizes for $a$ and $c$ respectively. We see that the bias is large when $a$ is small,  however it approaches zero as $a$ gets larger. 

In \Cref{fig:PolyGeomExtensionMLEa} and \Cref{fig:PolyGeomExtensionMLEc} we also plot the variance, mean squared error (MSE) and the Fisher information against the sample sizes, noting that in general where there is no bias the Fisher information is a lower bound for the variance of an estimator. In our case, it is clear that there is a bias of the estimator, however the Fisher information remains informative, with the variance and MSE tracking closely to the Fisher information as the sample size $m$ increases. We also plot a corrected variance that considers the variance with the outliers removed.

In general, we see that when the true $a$ is close to zero, the sample variance for $a$ is lower than the inverse Fisher information for small sample sizes. As $a$ becomes larger in absolute value, the sample variance starts to become larger than the inverse Fisher information for small sample sizes. Similar results hold for the MLEs of $c$.

All results in this section used the following options for {\tt fmincon}: {\tt options =} \\{\tt \allowbreak optimoptions("fmincon","Algorithm","interior-point","EnableFeasibilityMode",} \\{\tt true,"SubproblemAlgorithm","cg","SpecifyObjectiveGradient",true,}\\ {\tt "StepTolerance",1e12,"ConstraintTolerance",1e15).} We supplied the gradient of the likelihood using the derivatives of the likelihood given at the start of \Cref{subsec:polyMLE}.

\begin{figure}[]
\vspace{-20pt}
    \centering
    \begin{subfigure}[b]{\textwidth}
    \includegraphics[width=14cm]{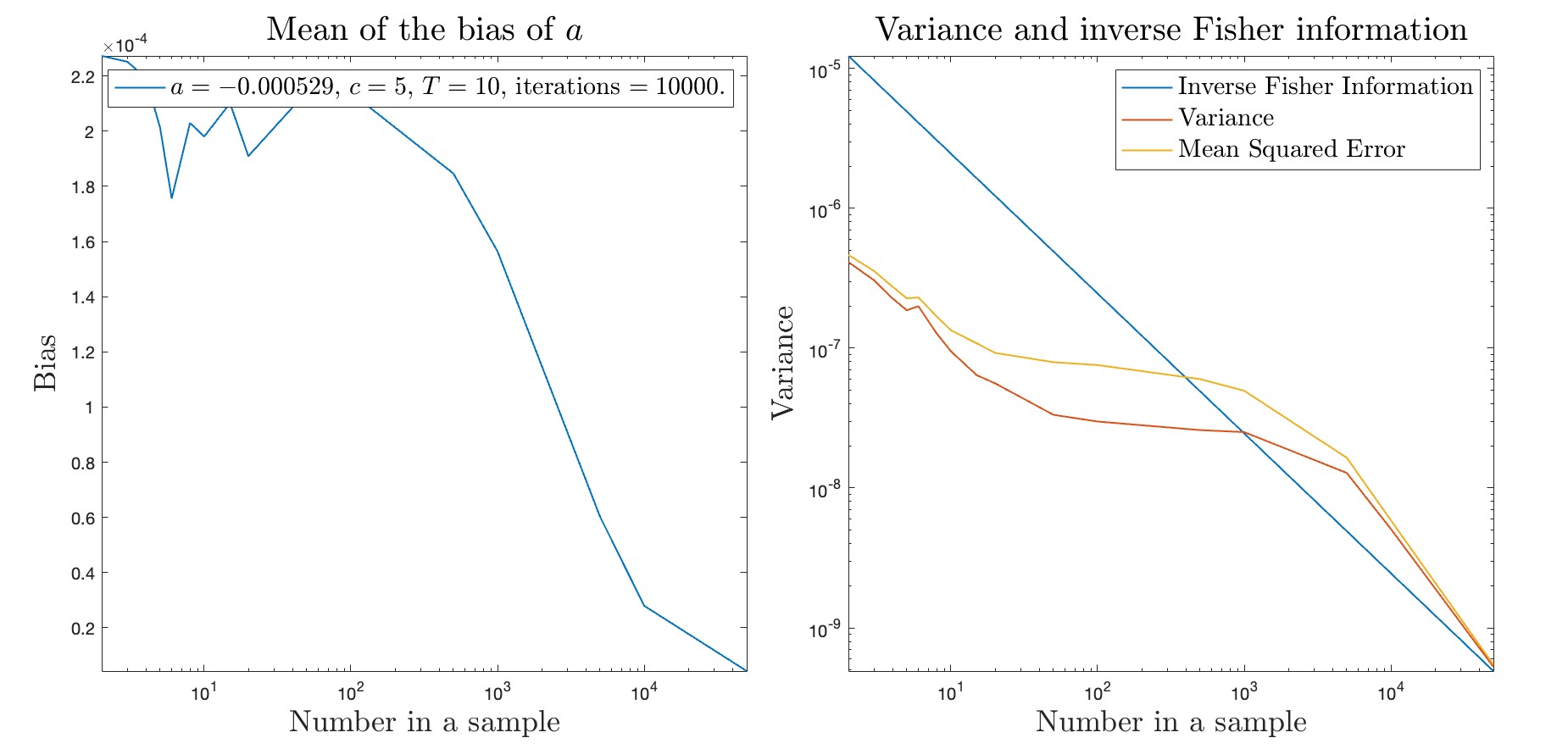}
    \end{subfigure}
    \begin{subfigure}[b]{\textwidth}
    \includegraphics[width=14cm]{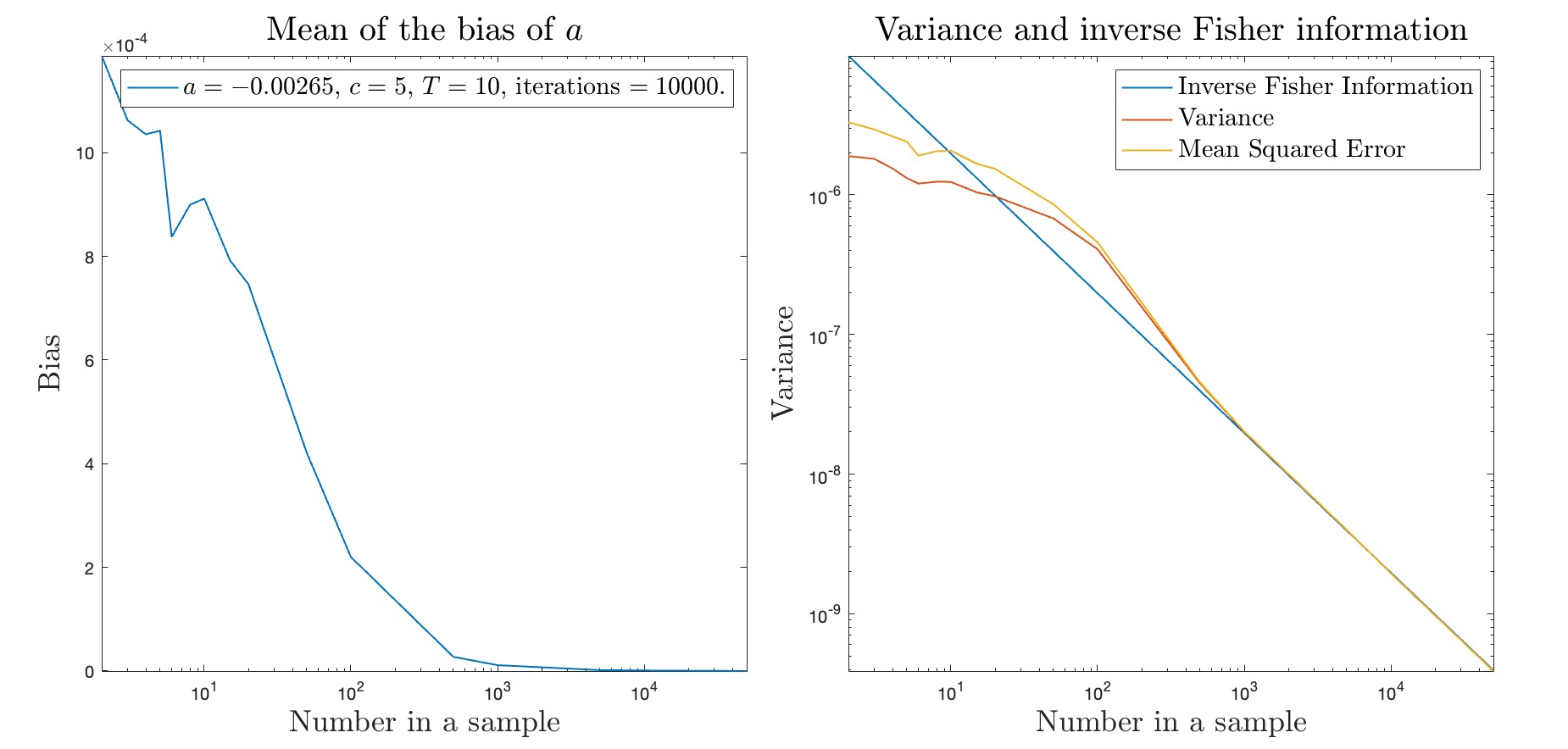}
    \end{subfigure}
    \begin{subfigure}[b]{\textwidth}
    \includegraphics[width=14cm]{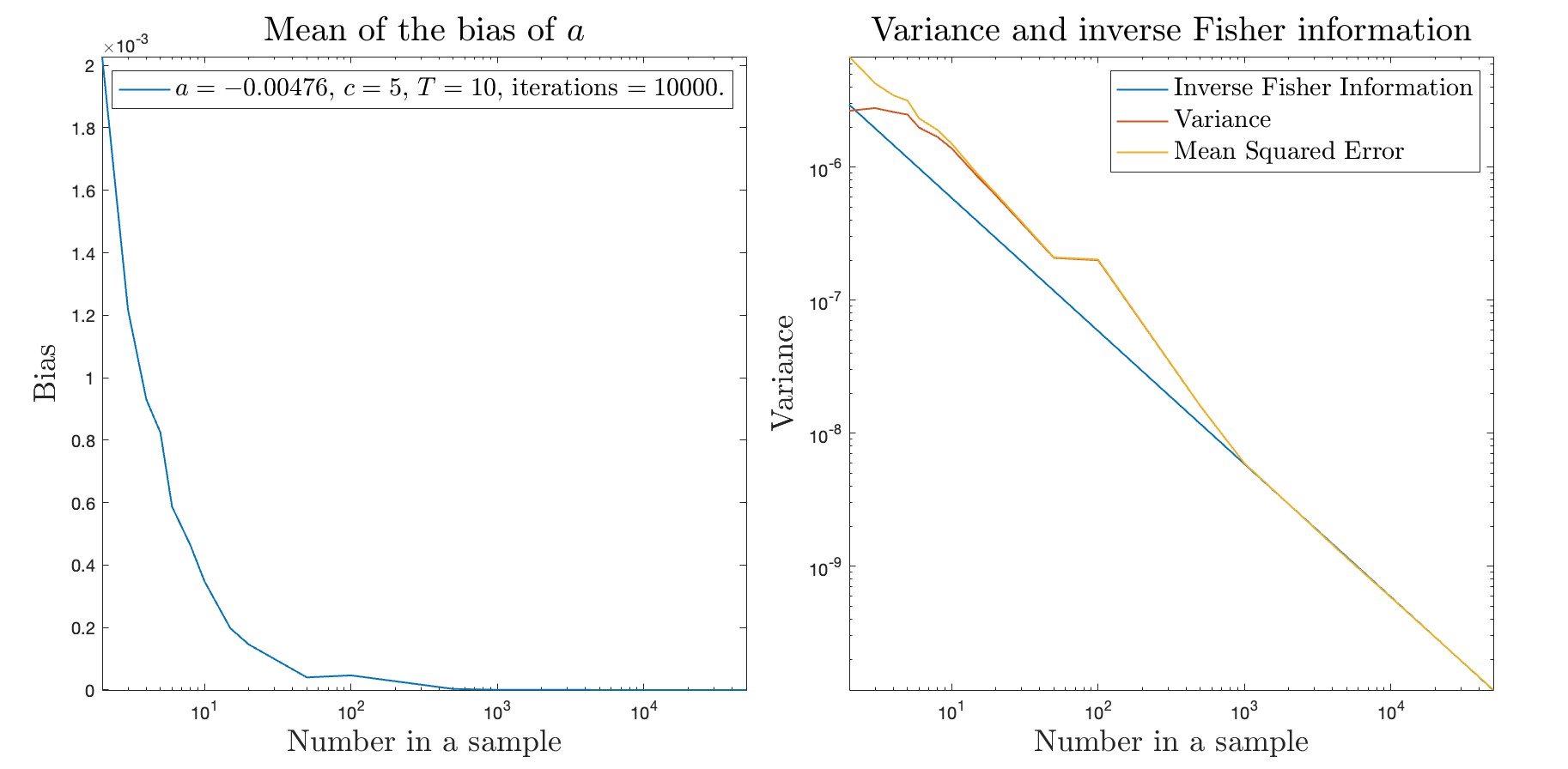}
    \end{subfigure}
    \caption{Mean of the sample bias, sample variance and Fisher information for the MLE estimator for $a$ for various sample sizes from distribution $f$ with $\rho(k) = a((k-c)^n-(T-c)^n)$ with $n=3$, $a=0.1\min(a),0.5\min(a),0.9\min(a)$, $T=10$ and $10,000$ iterations taken for each sample size. Note the log scales where used. Assumed $T$ known.}
    \label{fig:PolyGeomExtensionMLEa}
\end{figure}

\begin{figure}[]
\vspace{-20pt}
    \centering
    \begin{subfigure}[b]{\textwidth}
    \includegraphics[width=14cm]{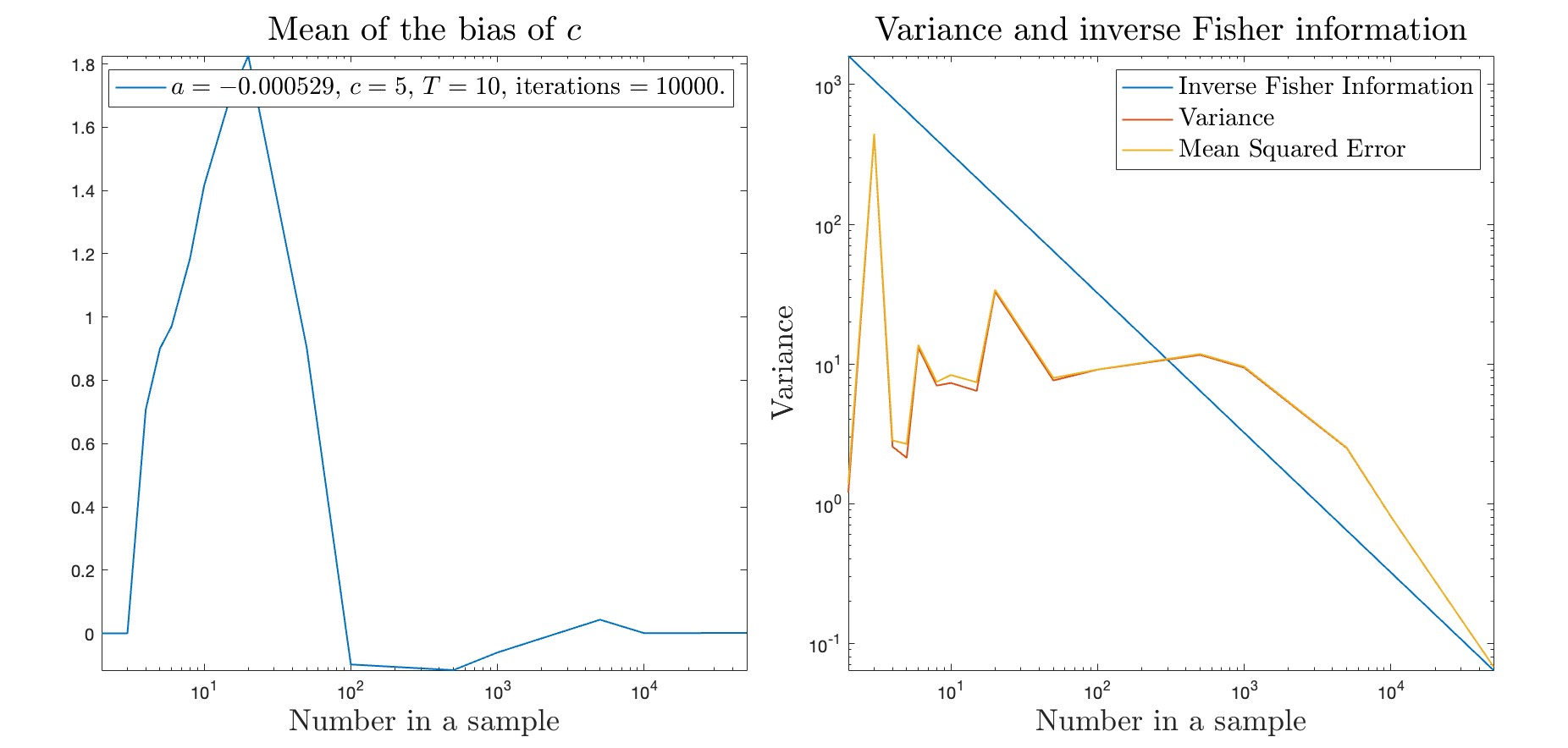}
    \end{subfigure}
    \begin{subfigure}[b]{\textwidth}
        \includegraphics[width=14cm]{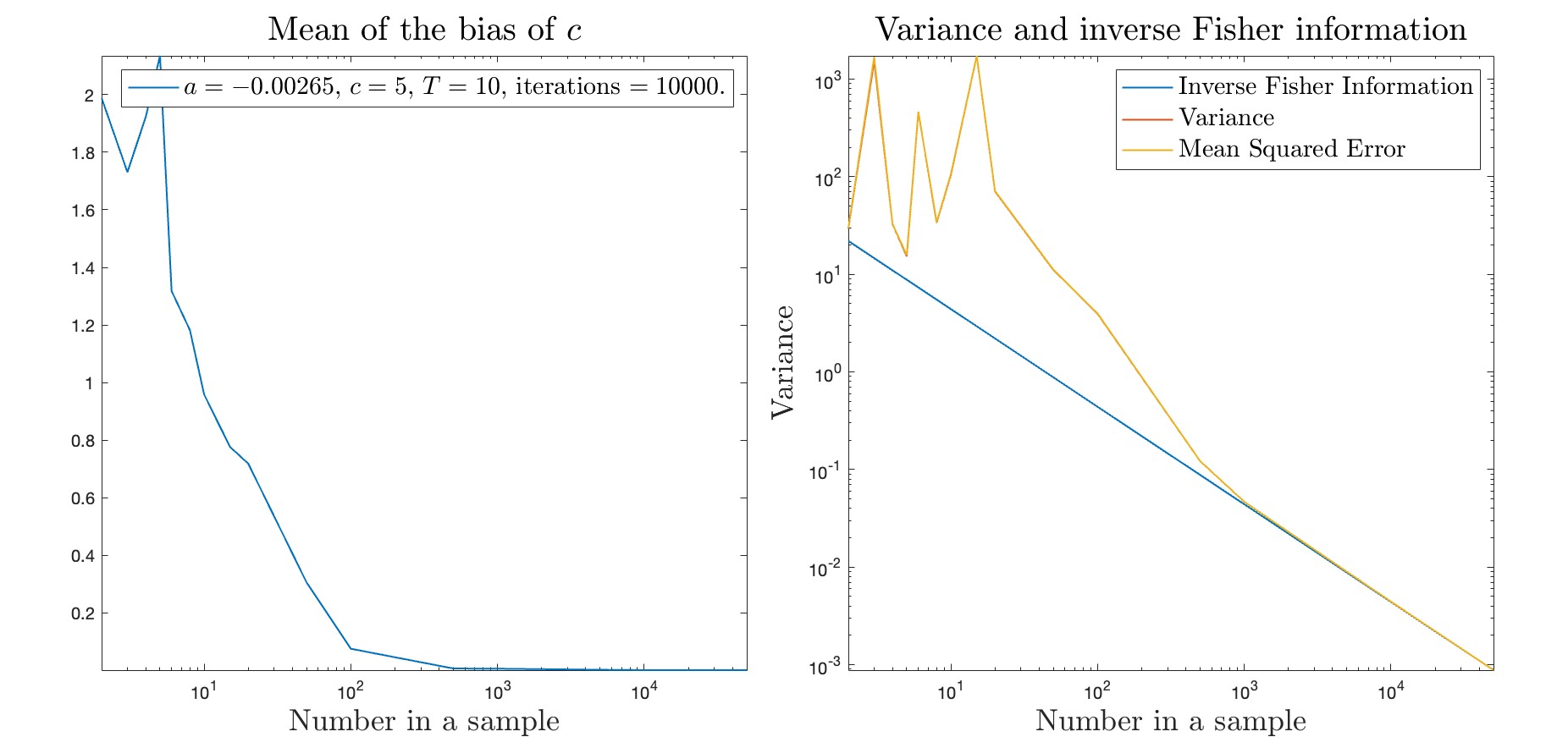}
    \end{subfigure}
    \begin{subfigure}[b]{\textwidth}
        \includegraphics[width=14cm]{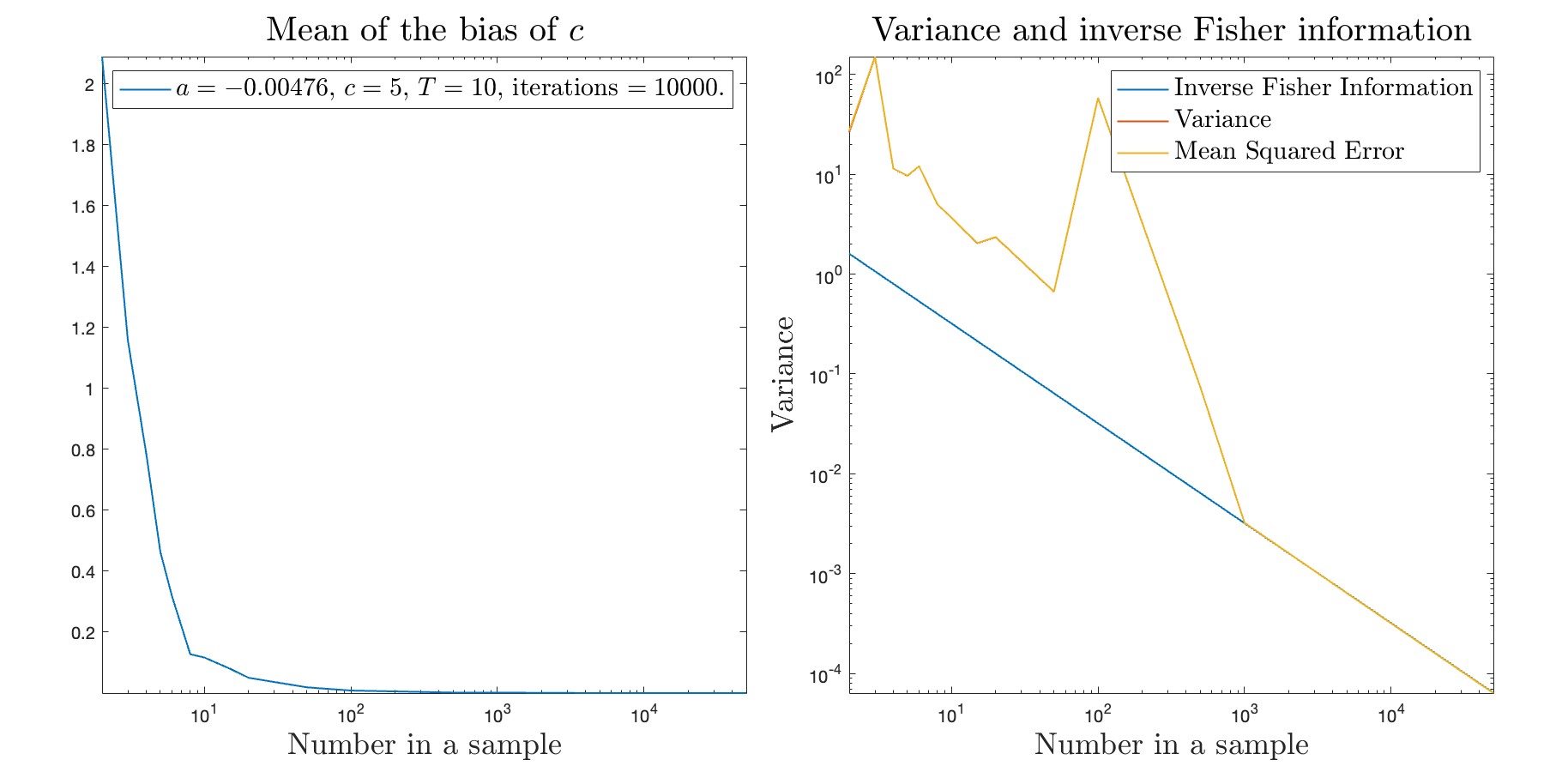}
    \end{subfigure}
    \caption{Mean of the sample bias, sample variance and Fisher information for the MLE estimator for $c$ for various sample sizes from distribution $f$ with $\rho(k) = a((k-c)^n-(T-c)^n)$, $n=3$, $a=0.1\min(a),0.5\min(a),0.9\min(a)$, $T=10$ and $10,000$ iterations taken for each sample size. Note the log scales where used. Assumed $T$ known.}
    \label{fig:PolyGeomExtensionMLEc}
\end{figure}

\subsubsection{\texorpdfstring{Unknown $T$}{Unknown T}}
Similar to \Cref{subsec:linearunknownT} we can apply a grid search to determine the MLE for $T$ from the data, along with $b$ and $c$. We do not do this with simulations here, however we do apply this method to experimental data as in \Cref{sec:experidata}. Note that as usual we require that $T\ge \max_i\{x_i\}$ for a sample $x_1,\ldots, x_m$. Helpful formulae for the derivative of the log-likelihoods with respect to $T$ are given below, noting that $T$ is assumed to be integer so these are only intended as a guide to be used for the inverse Fisher information. 

We leave further simulations for future work.

In terms of $b$, the derivative with respect to $T$ is as follows
\[ \frac{\partial l}{\partial T} = \sum_{i=1}^m \left(\frac{\frac{-(x_i-c)^n}{(T-c)^{n+1}} }{1+b\left(\frac{(x_i-c)^n}{(T-c)^n}-1\right)} - \frac{n(x_i-1)}{T-c} + \sum_{t=1}^{x_i-1}\frac{n(T-c)^{n-1}}{(T-c)^n-(t-c)^n}\right).\]

In terms of $a$, the derivative with respect to $T$ is as follows
\[ \frac{\partial l}{\partial T} = \sum_{i=1}^m \left(\frac{na(T-c)^{n-1} -na(x_i-c)^{n-1}}{1+a\left((T-c)^n-(x_i-c)^n\right)} + \sum_{t=1}^{x_i-1}\frac{n(T-c)^{n-1}}{(T-c)^n-(t-c)^n}\right).\]

\subsection{Discussion}
In this section we further discuss the results of \Cref{subsec:linearMLEs} and \Cref{subsec:polyMLE}. We note that for the linear factor model, overall we observe the variances of MLEs for the parameters when $T$ is known closely follow the inverse Fisher information in all cases, and we only observe it deviating below the inverse Fisher information when the $a$ is particularly small and the sample size is small. 

However, when estimating both $a$ and $T$ using the grid search method, we do see deviations of the variance away from the inverse Fisher information for small samples sizes, and in the case where $T$ becomes correctly identified in every sample. In this latter case, the variance for the MLE for $T$ becomes zero, while the variance for $a$ starts to track the one-parameter inverse Fisher information instead, which shows the limitations for using the inverse Fisher information (or CRLB) for integer variables. Further investigation of the PMFs of the distributions produced from these parameters shows little visual difference. In \Cref{appendix:unknownT} the second author discusses this further. 

In the simple polynomial factor model with $n=3$ we see similar behaviour although slightly more extreme than the 1-parameter case, when estimating $a$. We observe the variance for $a$ tracking closer to the inverse Fisher information for large sample size, however we see some deviation away from the inverse Fisher information for small sample sizes. The behaviour for the MLE estimates of $c$ is also similar, although deviates further from the inverse Fisher information in small sample sizes. We suspect that some of this may result from the programming behind the Matlab function {\tt fmincon}, and we discuss this further in \Cref{appn: NumericalPolyIssues}. Additionally, the true CRLB requires knowing the square of the derivative of the bias of the variable, which we have not estimated --- we note that the bias is larger in these cases and this may contribute to the differences observed. Further examination we leave for future work.

\section{Applying to experimental data} \label{sec:experidata}
In this section we apply the linear and polynomial factor models to experimental data. This experimental data consists of the consecutive times a number of participants took to complete three screen based tasks. We consider these times to be the sojourn times and each screen based task to be a separate state. A key objective of our analysis is to charaterise the distributions of the times taken. The data is available in \Cref{appn:ExperimentalData}. 

After collecting the data and plotting histograms of the data, we do not see distributions that appear to be geometric. Instead, the distributions more closely match the linear and simple polynomial factor models studied in the previous sections. We can determine the empirical PMF and cummulative density functions (CDFs) through bin counting, and then determine the empirical $\rho(k)$ using the formulae in \Cref{lem:relationship}. Plotting this in \Cref{fig:RhoPlotsExperi} we see immediately that the plot is not constant, so not indicating a geometric distribution, and is non-linear, which indicates a simple polynomial factor model may be appropriate.

\begin{figure}[]
    \centering
    \begin{subfigure}[b]{0.5\textwidth}\centering
            \includegraphics[width=1\textwidth]{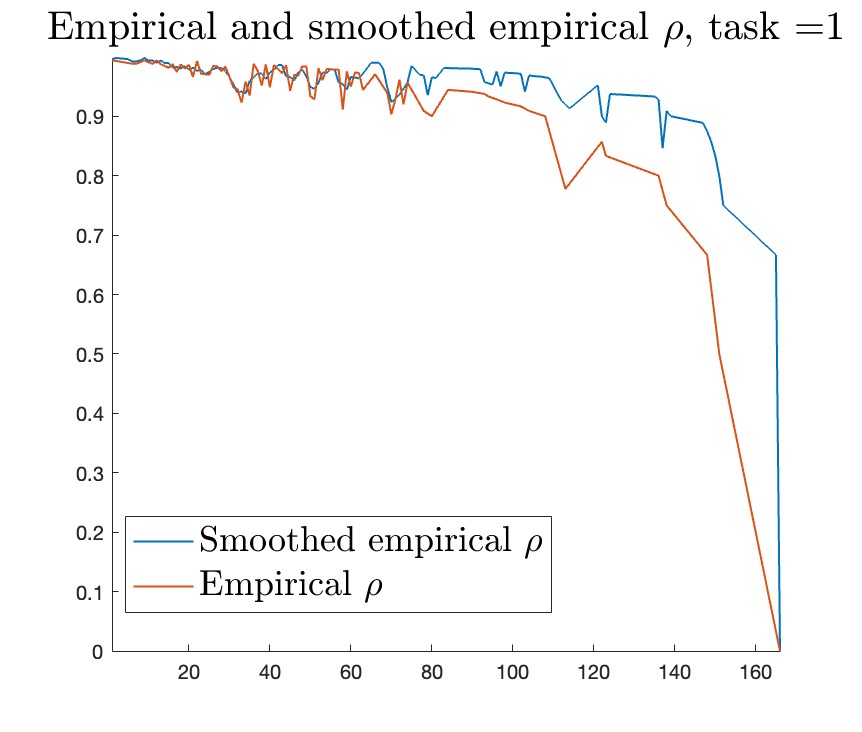}
    \end{subfigure}\hfill
        \begin{subfigure}[b]{0.5\textwidth}\centering
            \includegraphics[width=1\textwidth]{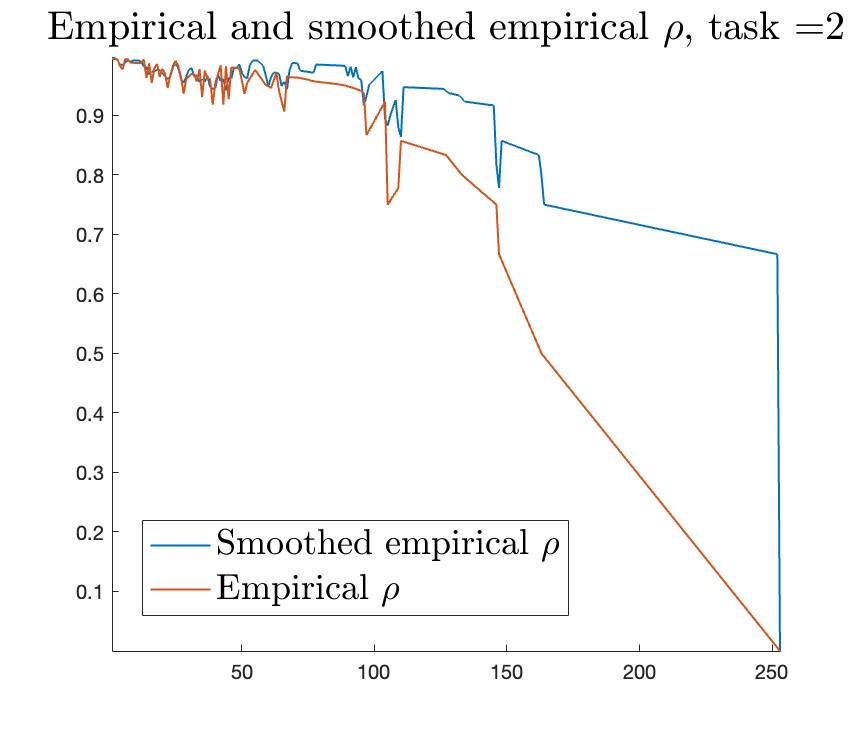}
    \end{subfigure}
        \begin{subfigure}[b]{0.5\textwidth}\centering
            \includegraphics[width=1\textwidth]{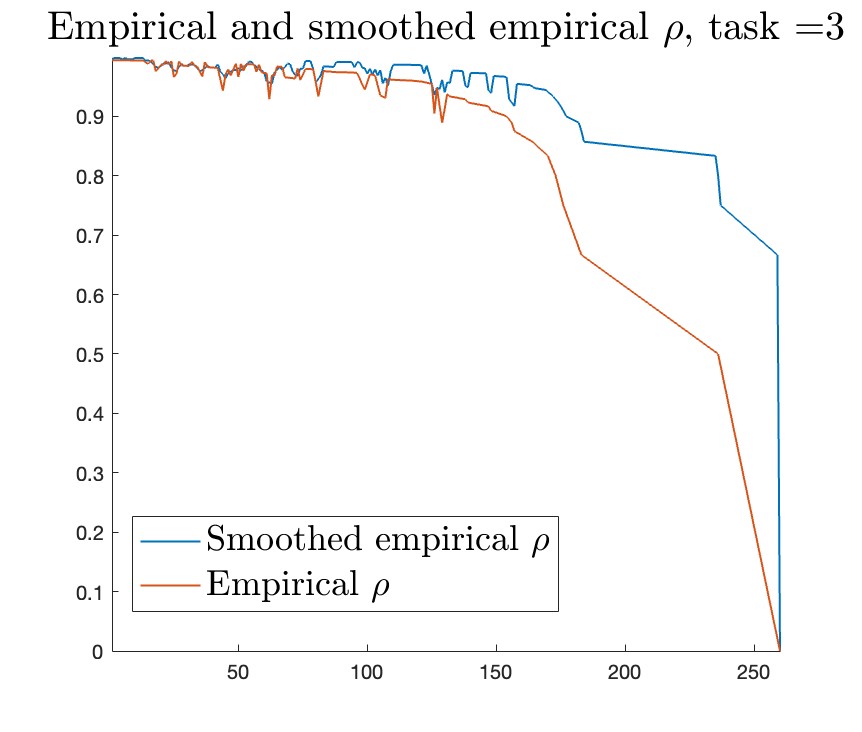}
    \end{subfigure}
        \caption{Empirical $\rho$ of the data from tasks 1, 2 and 3. Only values not equal to 1 were plotted. A smoothed version was also plotted, where the PMF was smoothed so that each bin was the average of it and the two bins either side of it. As each task displays non-constant and non-linear for each task $\rho$ this suggests that a simple polynomial factor model may be a good fit.}
        \label{fig:RhoPlotsExperi}
\end{figure}

Below, we plot histograms of the consecutive time spent on task for each task. We apply the linear factor model as in \Cref{fig:PDFExperi} and the simple polynomial factor model with $n=3,c<T$ as in \Cref{fig:PDFexperipoly} to each task. We note that the number of data points for each task were 178, 187, 187 respectively.

A first observation of the data is that the data do not have minimum of 1 for each task, instead the minimums are 4,3 and 12 respectively. This makes sense as the tasks had a certain level of complexity that would ensure more than one time unit would need to be counted for any attempt at the task.

Given this, we extended our model to consider distributions of the linear and polynomial factor models that were also shifted to start at an integer $t_1>1$ with $t_1$ less than or equal to the minimum of the sample. We compared the different possible shifts by comparing the value of the likelihood for the MLEs. Whichever one had the largest likelihood we considered to be the estimated MLE. A table of the results is included for each analysis, see \Cref{tab:linearExperi200} and \Cref{tab:polyExperi200}. In this case we used a grid search to determine $T$ by searching up to 200 integers higher than the minimum possible value for $T$, i.e. up to 200 higher than the max of the data.

\begin{table}
\begin{center}
\begin{tabular}{ |c|c|c|c|c|c| } 
 \hline
 Task &Maximum &Minimum & Shift & MLE $a$ & MLE $T$  \\ \hline
 1 &164& 4 & 3 & $-0.002786$& $360$\\ 
 2 &201& 3 & 2 & $-0.002519$ & $398$\\ 
  3 &221& 12 & 11 & $-0.002451$& $409 $ \\ 
 \hline
\end{tabular} 
\caption{This table details the output from our optimisation process for each task from applying the simple linear factor model. We have also included the maximum and the minimum of the data for that task, as well as the shift of the distribution that gave the largest likelihood. Here $T$ was allowed to range up to 200 integers higher than the maximum data value after the shift.}
\label{tab:linearExperi200}
\end{center}
\end{table}

We use PMF, CDF and QQ-plots to compare the empirical distribution with the MLE estimates, as in \Cref{fig:CDFexperi} and \Cref{fig:CDFexperipoly}. Here the QQ-plots plot the quantiles of the empirical distribution against the distribution using the MLEs estimates. To indicate a good fit we expect that this plot should be very close to the straight line drawn on the graph, which is the quantiles of the MLE against itself. 

We note that the simple linear factor model was not a good fit when comparing to the plots of the PMFs as well as to the QQ-plots and CDF plots. We observed this as well in the optimisation procedure that the $T$ value was at the far end of the range we were searching. We increased the grid search to include $T$ values up to 1000 higher than the maximum data point, which greatly increased the visual fit of the data, however the optimisation procedure still picked the largest value of $T$ possible. This suggests that this particular linear factor model may not be the right fit for this data. A table of results for this is include in \Cref{tab:linearExperi1000}.

\begin{table}[]
\begin{center}
\begin{tabular}{ |c|c|c|c|c|c| } 
 \hline
 Task &Maximum &Minimum & Shift & MLE $a$ & MLE $T$  \\ \hline
 1 &164& 4 & 3 & $-8.6204\times 10^{-4}$& $1160$\\ 
 2 &201& 3 & 2 & $-8.3542\times 10^{-4}$ & $1198$\\ 
  3 &221& 12 & 11 & $-9.1503\times 10^{-8}$& $409 $ \\ 
 \hline
\end{tabular}
\caption{This table details the output from our optimisation process for each task from applying the simple linear factor model. We have also included the maximum and the minimum of the data for that task, as well as the shift of the distribution that gave the largest likelihood. Here $T$ was allowed to range up to 1000 integers higher than the maximum data value after the shift.}
\label{tab:linearExperi1000}
\end{center}
\end{table}

\begin{table}[]
\begin{center}
\begin{tabular}{ |c|c|c|c|c|c|c| } 
 \hline
 Task &Maximum &Minimum & Shift & MLE $a$ & MLE $c$ & MLE $T$  \\ \hline
 1 &164& 4 & 3 & $-9.1503\times 10^{-8}$& $74.8998$& $294$\\ 
 2 &201& 3 & 3 & $-4.2160\times 10^{-8}$& $92.9539$  & $377$\\ 
  3 &221& 12 & 9 & $-3.6970\times 10^{-8}$& $89.4028$ & $387 $ \\ 
 \hline
\end{tabular}
\caption{This table details the output from our optimisation process for each task from applying the simple polynomial factor model with $n=3$. We have also included the maximum and the minimum of the data for that task, as well as the shift of the distribution that gave the largest likelihood. Here $T$ was allowed to range up to 200 integers higher than the maximum data value after the shift. Note however that the MLE for T was less than the maximum minus the shift.}
\label{tab:polyExperi200}
\end{center}
\end{table}

\begin{figure}[]
    \centering
    \begin{subfigure}[b]{0.5\textwidth}\centering
            \includegraphics[width=1.1\textwidth]{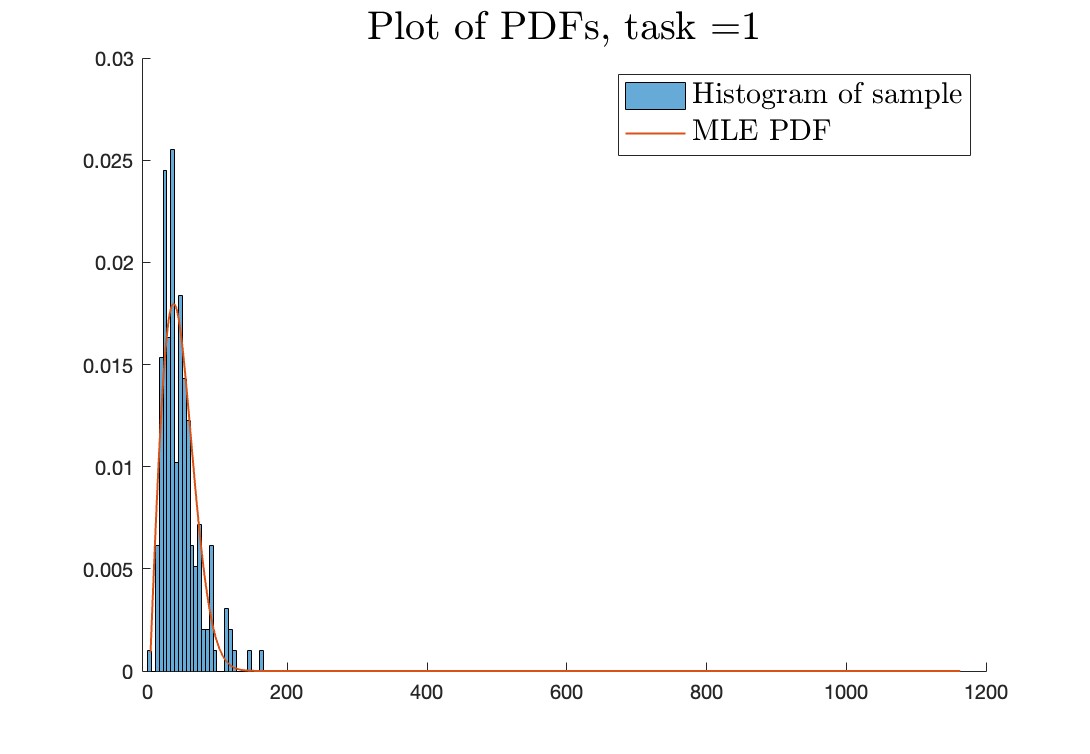}
    \end{subfigure}\hfill
        \begin{subfigure}[b]{0.5\textwidth}\centering
            \includegraphics[width=1.1\textwidth]{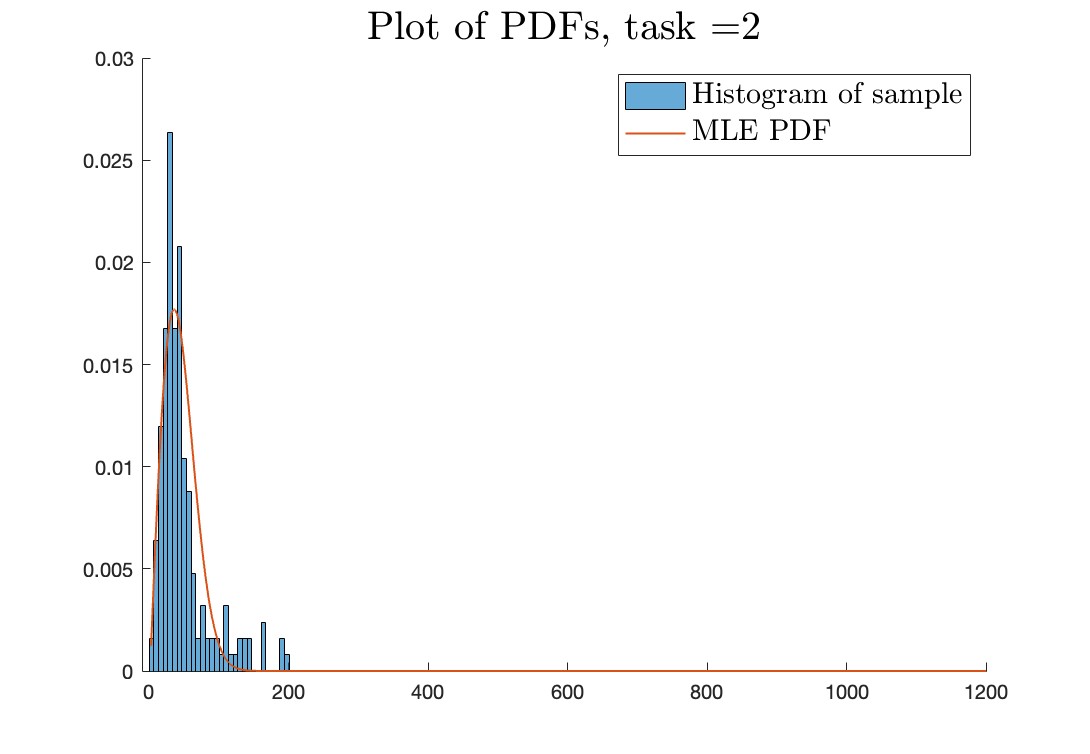}
    \end{subfigure}
        \begin{subfigure}[b]{0.5\textwidth}\centering
            \includegraphics[width=1.1\textwidth]{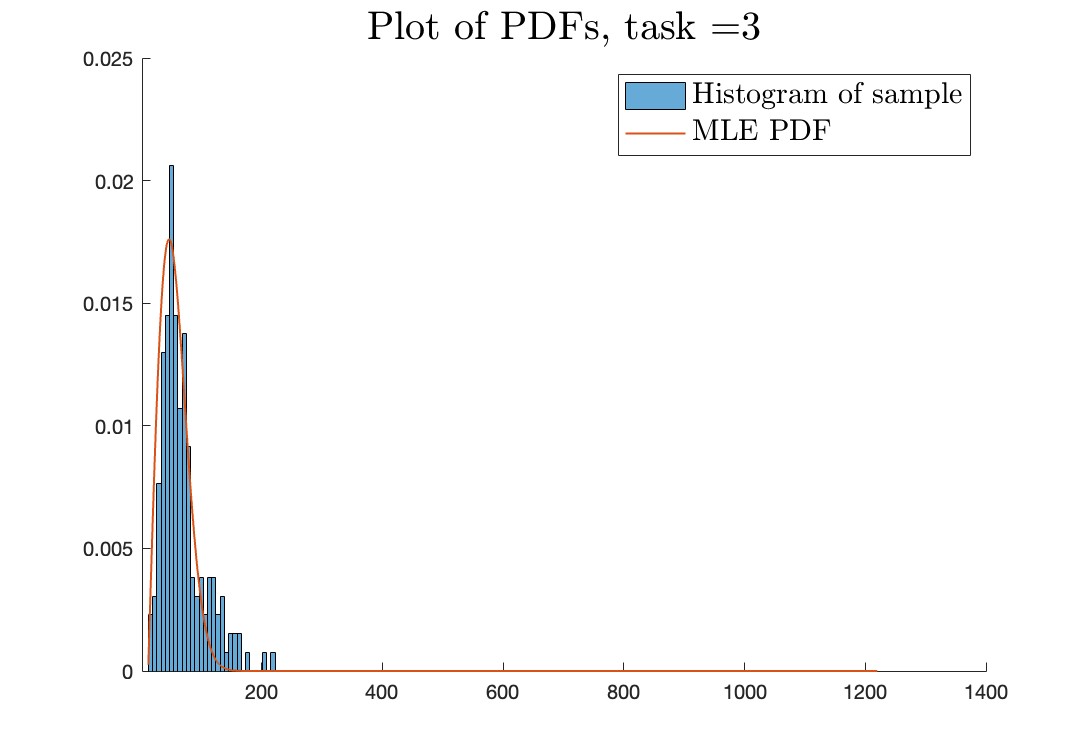}
    \end{subfigure}
        \caption{Histogram of the data from tasks 1, 2 and 3, along with the MLE estimated PMFs using the linear factor model. Here $T$ was allowed to range up to 1000 integers higher than the maximum data value after the shift.}
        \label{fig:PDFExperi}
\end{figure}

\begin{figure}[]
    \centering
        \begin{subfigure}[b]{\textwidth}\centering
    \includegraphics[width=14cm]{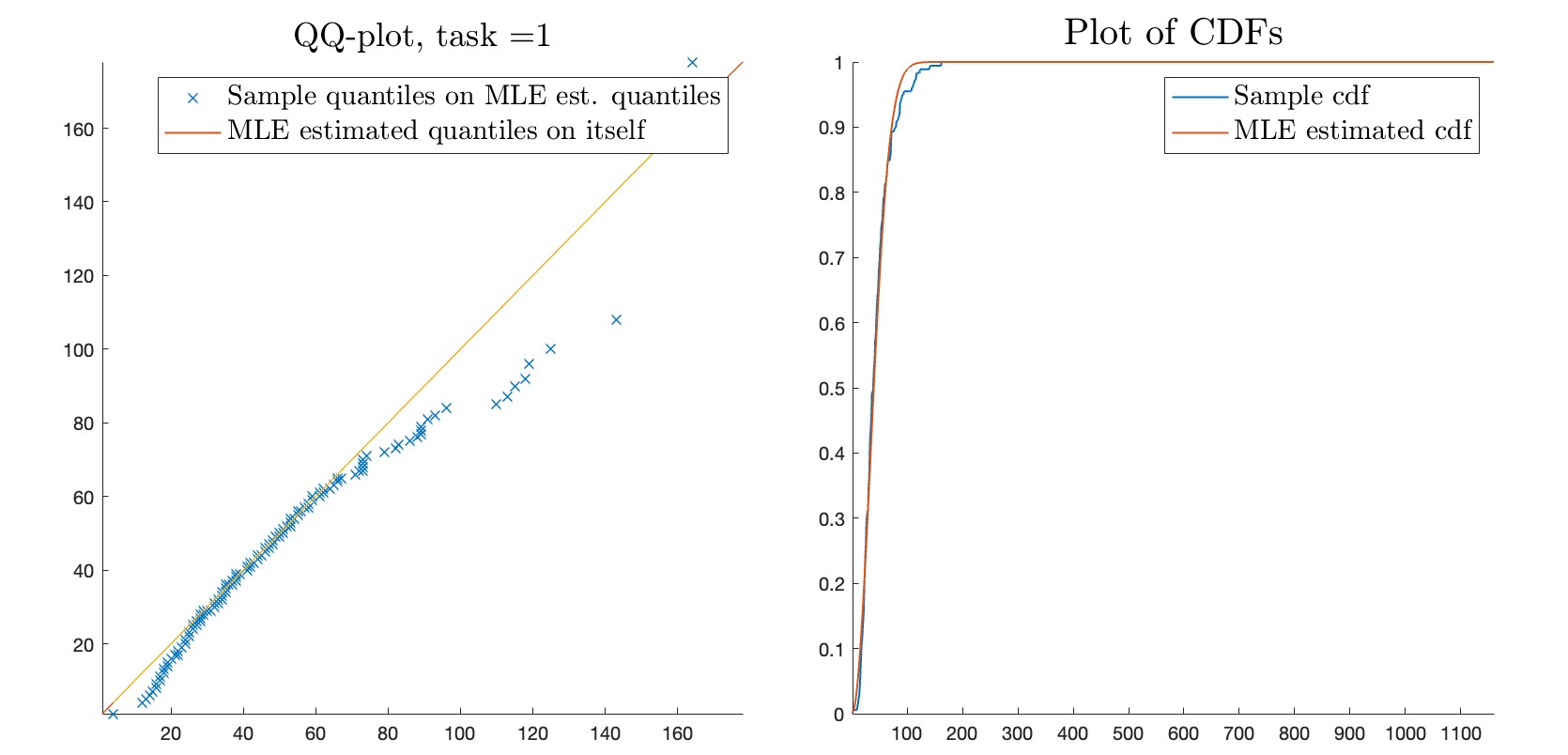}
    \end{subfigure}
        \begin{subfigure}[b]{\textwidth}\centering
    \includegraphics[width=14cm]{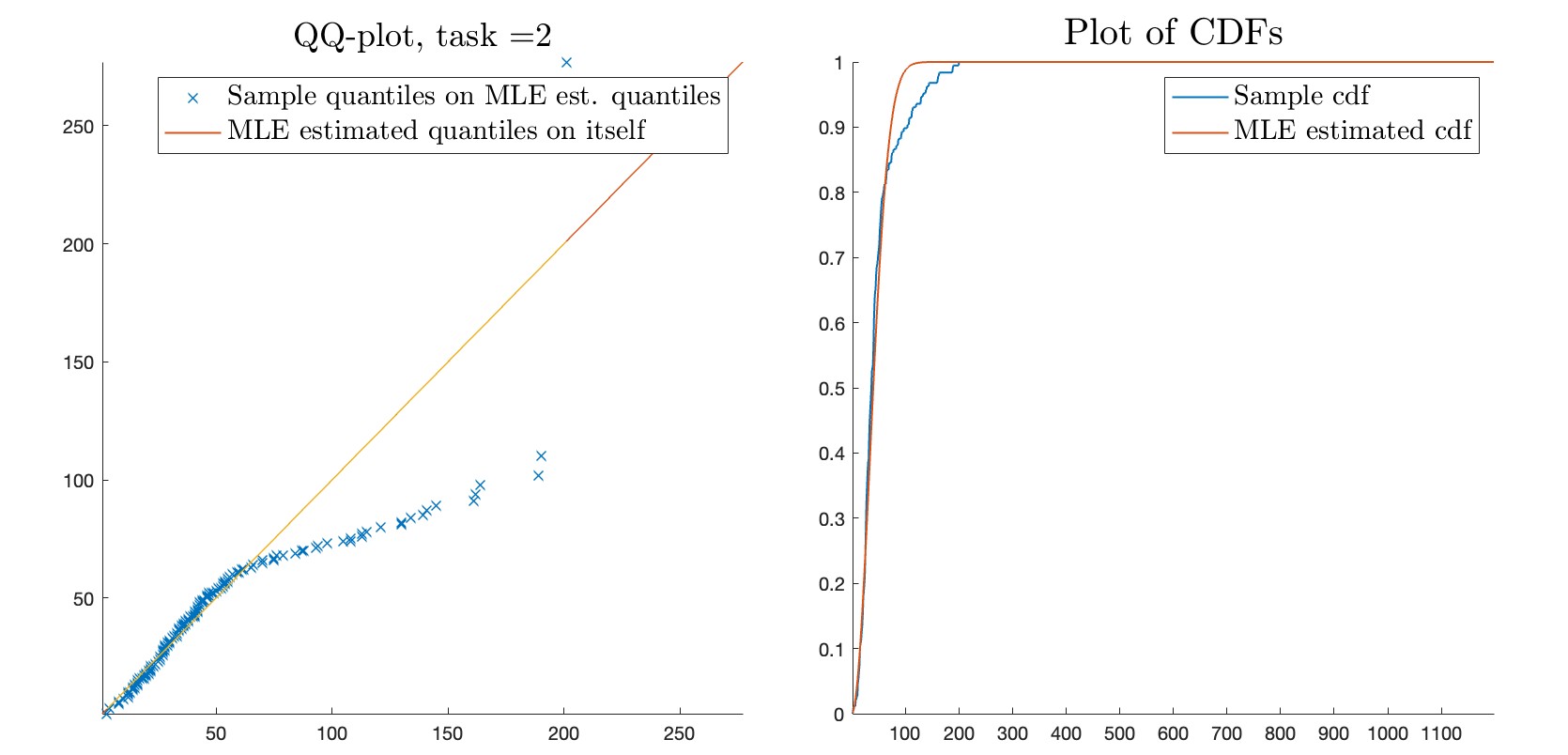}
    \end{subfigure}
        \begin{subfigure}[b]{\textwidth}\centering
    \includegraphics[width=14cm]{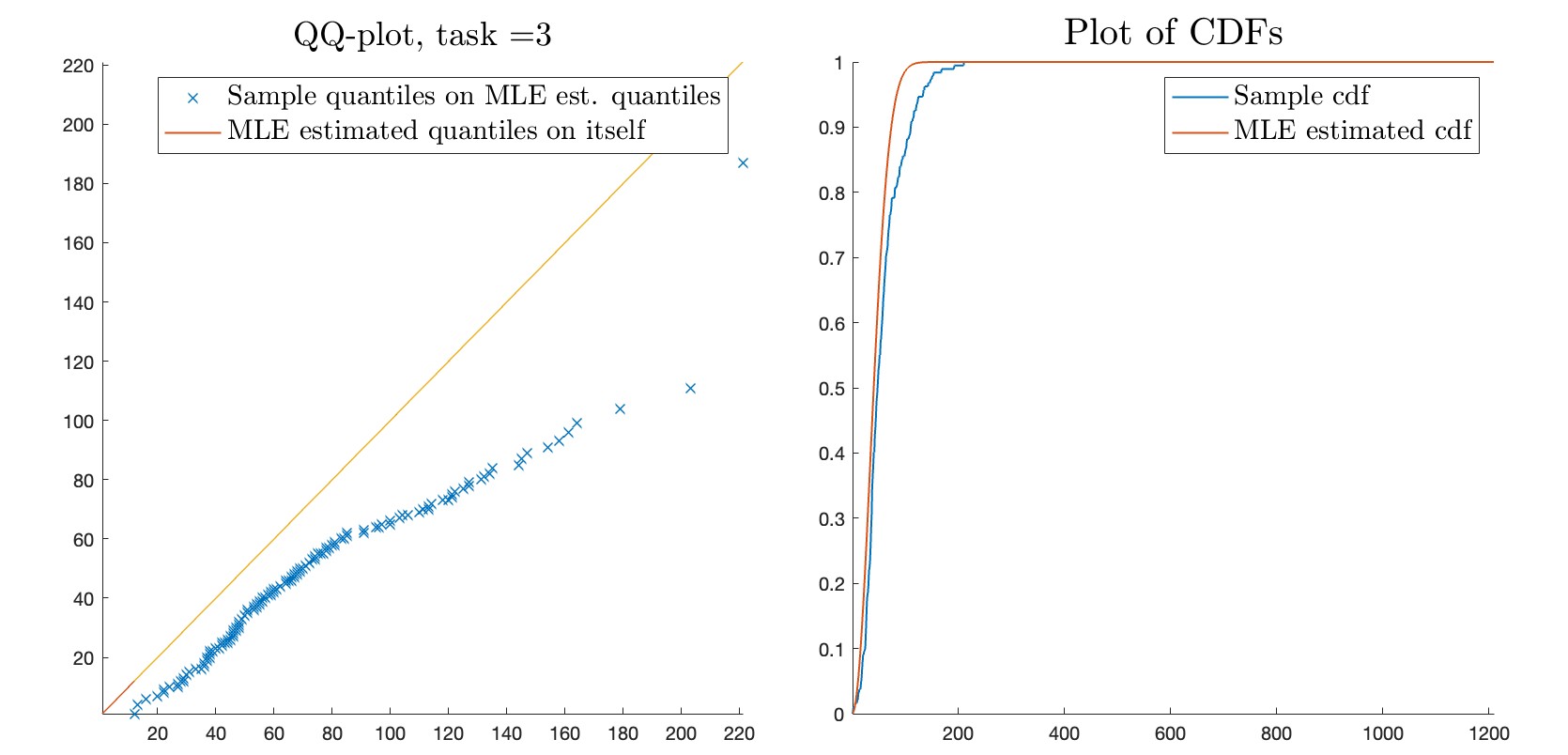}
    \end{subfigure}
        \caption{QQ plots of the quantiles of the data from task 1,2, and 3 against the quantiles of the MLE estimated distribution, along with plots of the CDF of the sample against the CDF of the MLE estimated distribution. This used the linear factor model. Here $T$ was allowed to range up to 1000 integers higher than the maximum data value after the shift.}
        \label{fig:CDFexperi}
\end{figure}

\begin{figure}[]
    \centering
    \begin{subfigure}[b]{0.5\textwidth}\centering
            \includegraphics[width=1.1\textwidth]{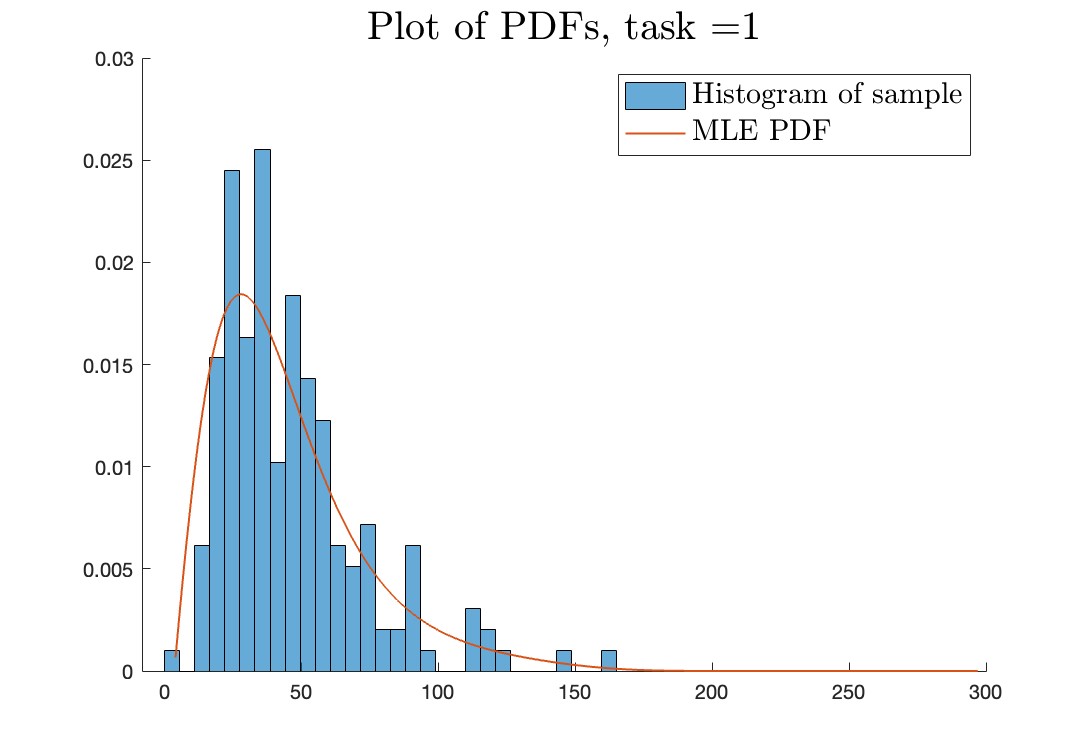}
    \end{subfigure}\hfill
        \begin{subfigure}[b]{0.5\textwidth}\centering
            \includegraphics[width=1.1\textwidth]{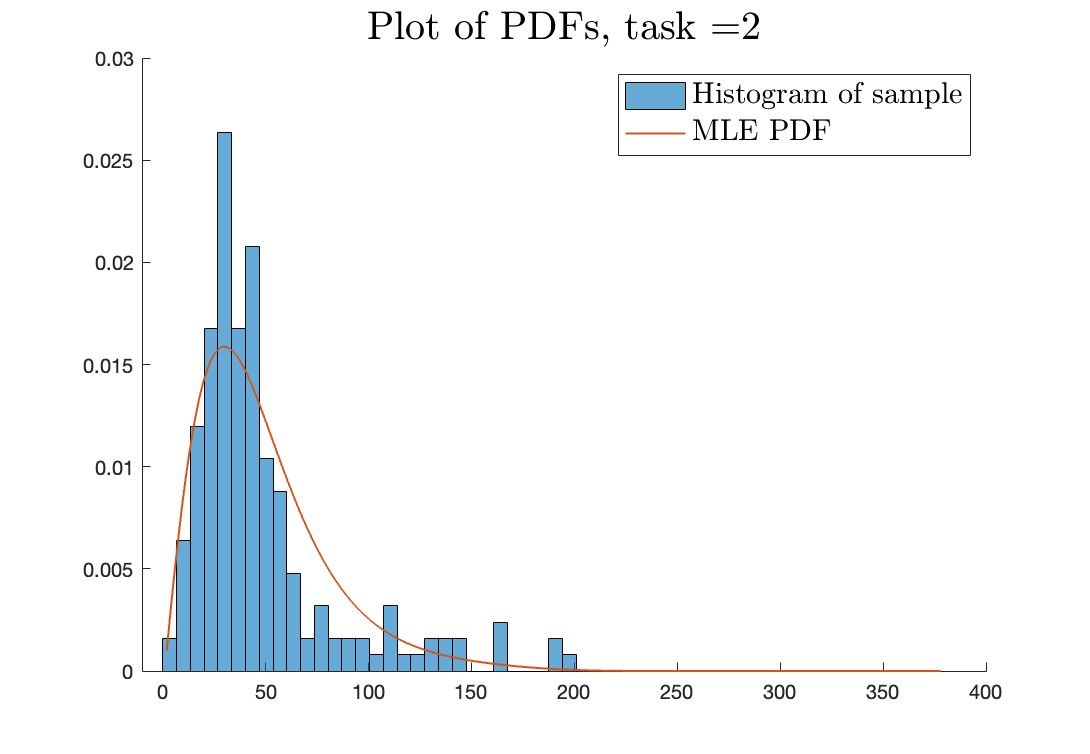}
\hfill
    \end{subfigure}
        \begin{subfigure}[b]{0.5\textwidth}\centering
            \includegraphics[width=1.1\textwidth]{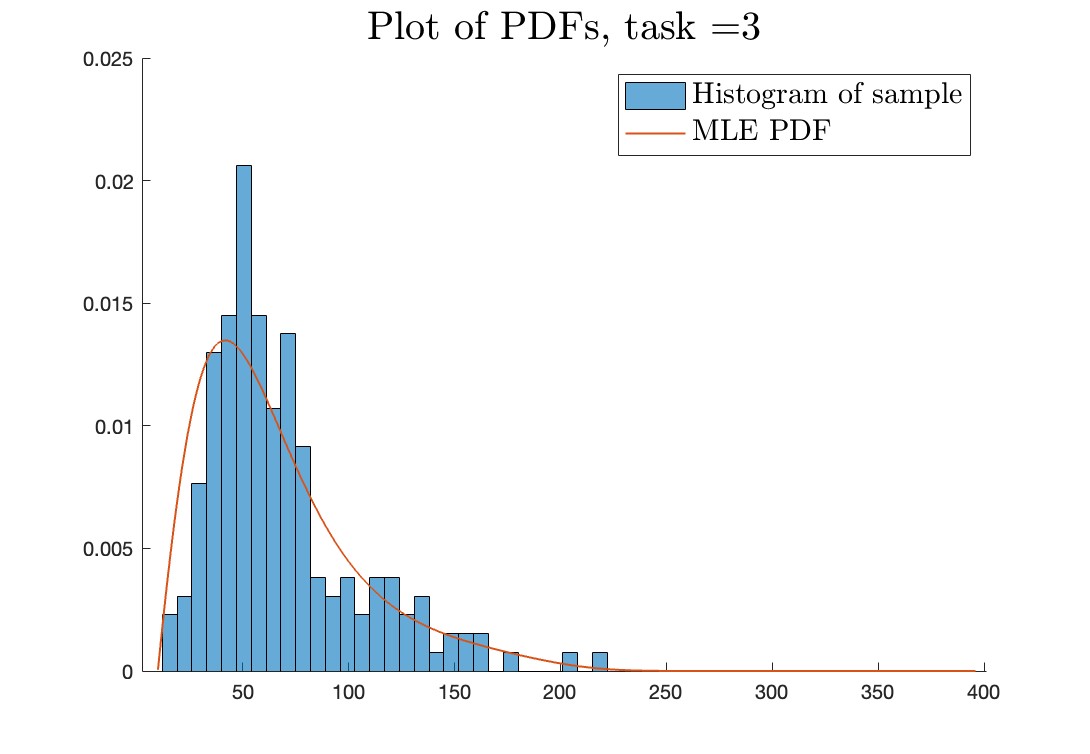}
    \end{subfigure}
        \caption{Histogram of the data from tasks 1, 2 and 3, along with the MLE estimated PMFs using the simple polynomial factor model with $n=3$. Here $T$ was allowed to range up to 200 integers higher than the maximum data value after the shift.}
        \label{fig:PDFexperipoly}
\end{figure}

\begin{figure}[]
    \centering
        \begin{subfigure}[b]{\textwidth}\centering
    \includegraphics[width=14cm]{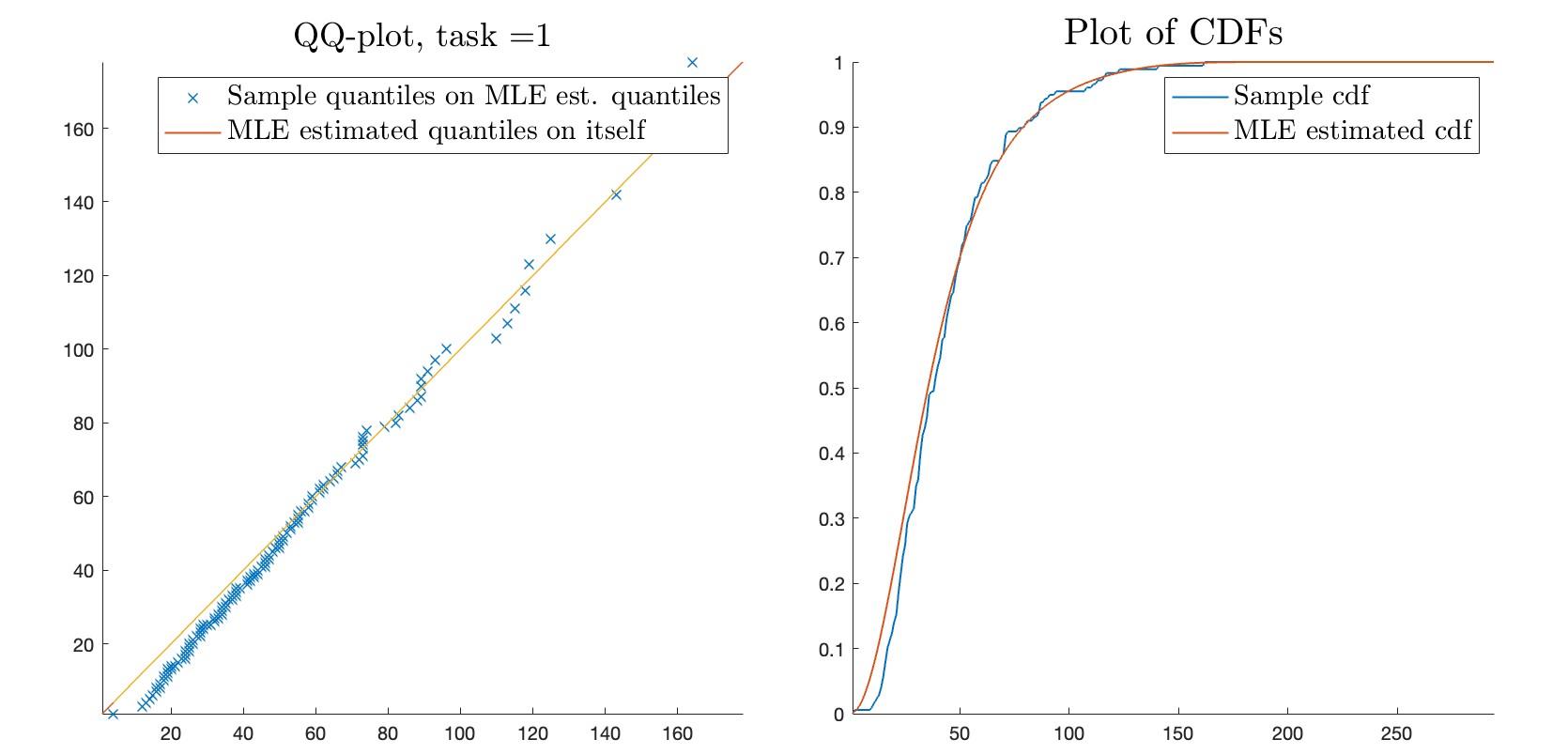}
    \end{subfigure}
        \begin{subfigure}[b]{\textwidth}\centering
    \includegraphics[width=14cm]{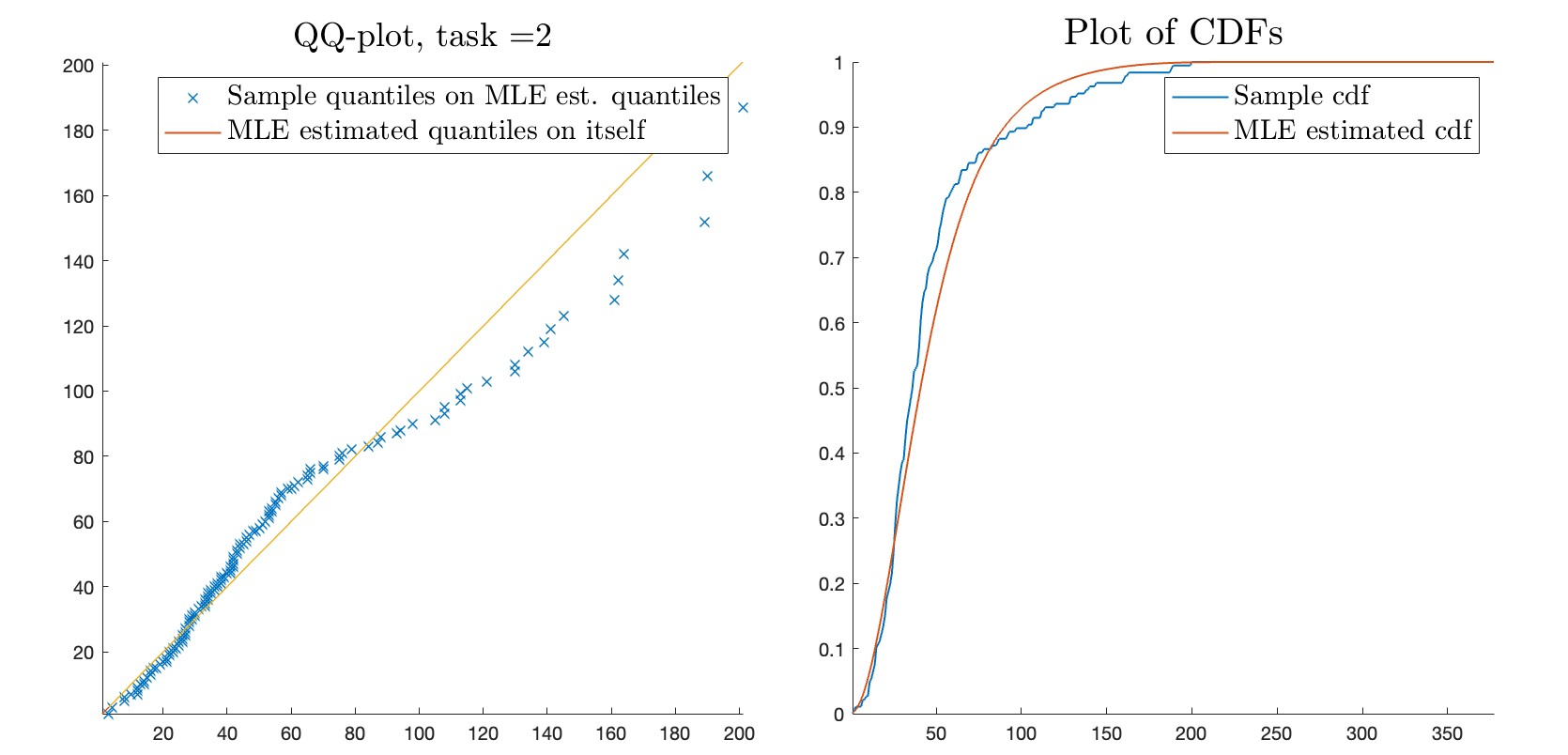}
    \end{subfigure}
        \begin{subfigure}[b]{\textwidth}\centering
    \includegraphics[width=14cm]{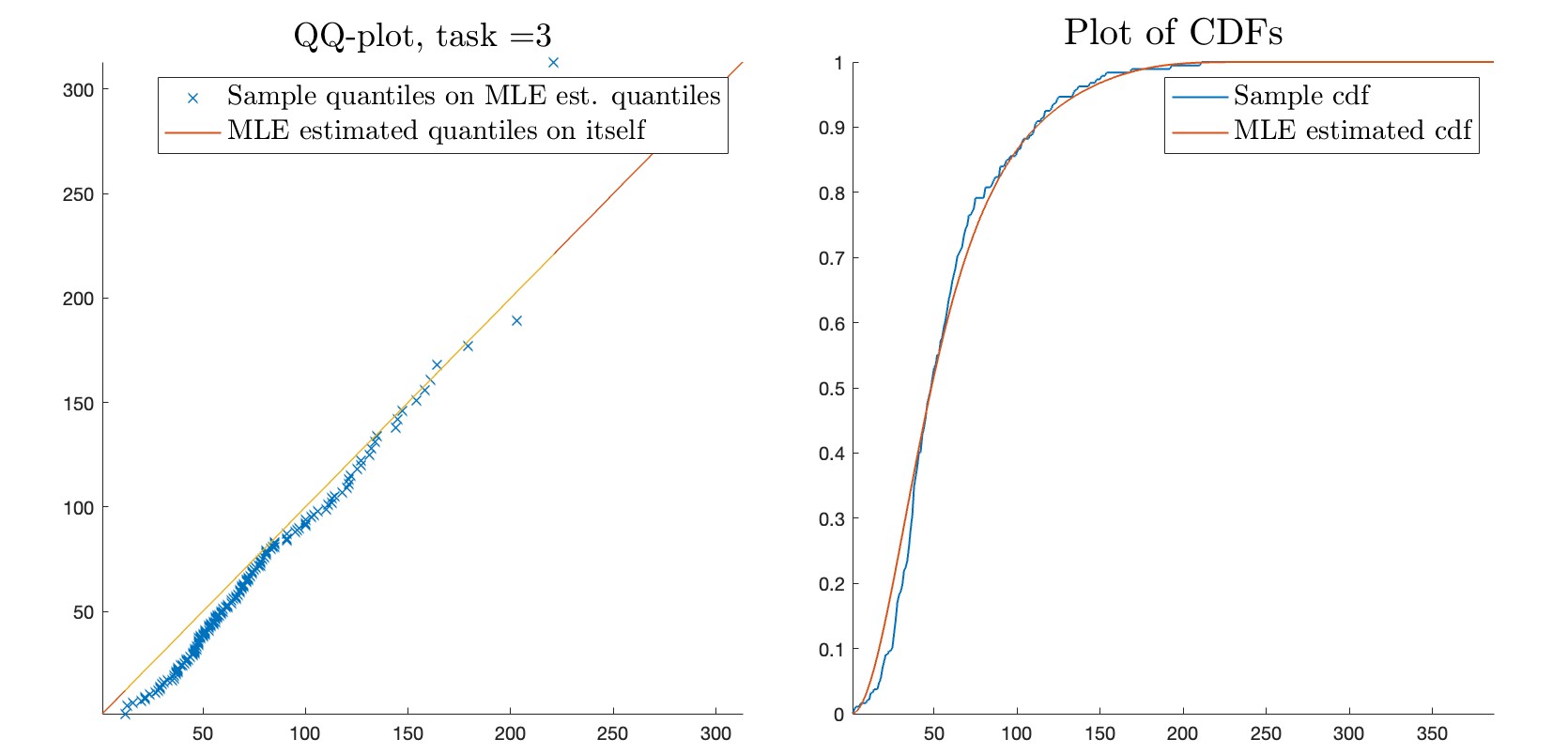}
    \end{subfigure}
        \caption{QQplots of the quantiles of the data from task 1,2, and 3 against the quantiles of the MLE estimated distribution, along with plots of the CDF of the sample against the CDF of the MLE estimated distribution. Here MLE parameters were estimated for the simple polynomial factor model with $n=3$.}
        \label{fig:CDFexperipoly}
\end{figure}

For the simple polynomial factor model with $n=3$ visual observation of the PMF plot seems to show close alignment between the recorded data and the MLE estimated PMF. In the QQ-plots we see mostly good alignment between the recorded data quantiles and the MLE estimated quantiles except perhaps task 2, where the data appears slightly more peaked than the distribution.

Note that applying to the simple polynomial factor model with $n=5$ did not markedly change the results of the plots, particularly for task 2. This may suggested trying a more complicated model if needed, with additional parameters. The other suggestion is that there is a relatively small data set (under 200 points) that this is applied to, so collecting more data could help improve the models. Overall we are happy with the fit for the simple polynomial factor model with $n=3$ given this small data set.

\section{Conclusion and future work} \label{sec:conclusion}
In this report we have described a new way to extend the geometric distribution and studied specific parametrised subfamilies including the linear factor model and the simple polynomial factor models in \Cref{sec:genGeometric}. We have studied maximum likelihood estimation of the parameters for these subfamilies, and also considered how to estimate the maximum sojourn time $T$ in the linear factor model case as in \Cref{sec:MLEsandCRLBs}. Finally in \Cref{sec:experidata} we applied our techniques to experimental data, where we conclude that the polynomial factor model of order three appears to offer very good performance in estimating sojourn time distributions for these experiments. 

While already a contribution to research in this area, with the aim to be applicable to semi-Markov models as in \Cref{sec: SemiMarkovRelationship}, there are many avenues of future research. These include analysis on estimating unknown $T$ in the simple polynomial factor models and examining the bias, variance and mean squared errors of the MLE estimators in greater detail.

Another avenue of research is to consider other generalisations, perhaps with additional parameters for the $\rho$. Examples include more general polynomial factor models for $\rho$, having $\rho$ of the form $e^{-g(k)}$ where $g:\mathbb{N}\to\mathbb{N}$, or a trignometic $\rho$ such as $\rho(k) = \frac{1}{2}\cos(k)+\frac{1}{2}$, provided $\rho\in [0,1]$ is guaranteed.  This may improve the fit for task 2 of the experimental data, as discussed in \Cref{sec:experidata}.

We would also like to understand more about these sojourn time distributions and how the MLEs behave in general. For example, in the simple polynomial factor models we found that there were larger variances observed for small sample sizes, however our observations of the different distributions generated by these MLEs showed little visual differences. Future work can include making this precise, for example by considering the KS-Statistics generated by the MLE distributions, as in \cite{DAgostino}, which give an measure of the distance between the CDFs (in this case using a $\sup$ norm). 

The second author has further explored this further in the appendix, where he replicates some of the results for the linear factor model then explores further modelling behaviour. In \Cref{appendix:unknownT} where he considers the $\ell_1$ distance between the PDFs for the linear factor model, showing that while the variance may be elevated the true distributions are converging as desired. This leaves open further questions around the connection of this approach to KS-statistics as well as how this applies in the simple polynomial factor model. In \Cref{appendix:exploglik} he also considers how the expected log-likelihood can also give indications of how the MLE procedure will results in different outcomes for different parameters of $a$. This we aim to explore this further in future work.

Finally, we intend to broaden this research by considering how to determine the parameters in online settings, including when estimating parameters for hidden semi-Markov models. This will likely mean adapting the expectation-maximisation algorithm usually used by these models to account for these sojourn time distribution parameters. We also seek to expand on the matrix analytic methods of \Cref{subsec:MAM}, including studying forgetting times and the general dynamics of the models. Part of this is to continue with the work in \Cref{sec:experidata} that seeks to determine when a sojourn time distribution is not geometric (i.e. non-Markovian) implying that we should use semi-Markov methods. 

\section{Acknowledgements}
This research is supported by the Commonwealth of Australia as represented by the Defence Science and Technology Group of the Department of Defence. We also thank Prof. Anna Ma-Wyatt and Dr Jessica O'Rielly for their involvement in collecting the data used in \Cref{sec:experidata}.


\appendix

\section{Linear transitions --- alternative model} \label{appendix}

In this appendix, the results of an independent study conducted by the second author (LBW) are presented. The focus is on the linear factor model (as in  \Cref{subsec:linearex}) with a slightly different parametrisation. Also, additional simulations for estimating the unknown $T$ are presented. One interesting outcome of these simulation studies is that identifiability issues can arise, namely a larger, incorrect estimate for $T$ is accompanied by a small estimate of $a$ but in such a way that the resulting estimated PMF remains close to the true one. This suggests that inference procedures based upon the estimated PMF (e.g. generalised likelihood ratio tests) may still be effective even in the MLEs are incorrect. So a comparison of the estimated PMF and the true one (via the $\ell_1$ norm) is provided (see \Cref{fig:est_pdf_dist}). Asymptotic consistency is observed. 

\subsection{MLEs for discrete product form distributions}

Let $\rho_i \in [0,1]$ for $i = 1,\ldots,T-1$, and define
\begin{align}
f_k & = \left\{ \begin{array}{ll} 1-\rho_1 & k = 1 \\
\rho_1 \, \rho_2 \, \cdots \, \rho_{k-1} \, \lb 1 - \rho_k \rb & k =2, \ldots, T , \end{array} \right.
\label{eq:prod_form}
\end{align}
where we set $\rho_T = 0$. We can assume that $\rho_i > 0$ for $i=1, \ldots, T-1$, otherwise we could appropriately truncate the model, since $\rho_i = 0 \Rightarrow f_k = 0$ for all $k > i$.  

It can be shown that 
\begin{align*}
\rho_i & = \frac{1 - F(i)}{1-F(i-1)} \ ,
\end{align*}
where $F : \Rbb \ra [0,1]$ is the distribution function for $K$, noting that $F(0) = 0$. Thus any PMF supported on $\{1, \ldots, T\}$ can be represented in the form \eqref{eq:prod_form}. 

One simple parametrisation of $f$ arises from the form $\rho_i = b-ai$ with $a,b > 0$. To set the support at $T$ samples, we let $\rho_T = b-aT = 0$ which yields $b = aT$, and thus $\rho_i = a(T-i)$. To ensure $\rho_i \leq 1$, we require $a \leq 1/(T-1)$. In the following, we assume that $T$ is known, so the model for $f$ contains only a single parameter $a \in (0,1/(T-1)]$. 

\subsection*{Maximum likelihood estimates}

Suppose we have a set of independent realisations $k_n, n = 1, \ldots, N$, of a random variable $K \in \{1, \ldots, T\}$, distributed according to $f$. Then from \eqref{eq:prod_form}, the log-likelihood for $a$ is
\begin{align*}
\log \Pr \lbr K_1 = k_1, \ldots, K_N = k_N ; a \rbr & = \sum_{n=1}^N \log f_{k_n} \\ & = \sum_{n=1}^N \ \sum_{i=1}^{k_n-1} \log \rho_i + \log \lb 1 - \rho_{k_n} \rb \ ,
\end{align*}
where we adopt the usual convention that an empty sum takes the value zero (i.e. for cases where $k_n = 1$).  Substituting the linear form for $\rho_i$, we have
\begin{align}
\log \Pr \lbr K_1 = k_1, \ldots, K_N = k_N ; a \rbr & = \sum_{n=1}^N \ \sum_{i=1}^{k_n-1} \log a + \log (T-i) + \log \lb 1 - a(T-k_n)  \rb \ .
\label{eq:LL_lin}
\end{align}
Taking the derivative w.r.t. $a$ gives
\begin{align*}
\frac{d\log \Pr \{.\}}{da} & = \sum_{n=1}^N \ \frac{k_n-1}{a} - \frac{T-k_n}{1 - a(T-k_n)} \ .
\end{align*}
There does not appear to be a simple closed form for a value $a$ resulting in this derivative being zero, so we will rely on numerical methods to maximise the log-likelihood over $a$ given some set of measurements. 

\subsection*{Cram\'{e}r-Rao bound}

For a single observation $k$ of $K$, we have
\begin{align}
\log \Pr \lbr K = k ; a \rbr & = \sum_{i=1}^{k-1} \log a + \log (T-i) + \log \lb 1 - a(T-k)  \rb \Rightarrow \nonumber \\
\frac{d \log \Pr \lbr K = k ; a \rbr}{da} & = \frac{k-1}{a} - \frac{T-k}{1 - a(T-k)} \Rightarrow  \nonumber \\
- \frac{d^2 \log \Pr \lbr K = k ; a \rbr}{da^2} & = \frac{k-1}{a^2} + \frac{(T-k)^2}{(1 - a(T-k))^2} .
\label{eq:FI}
\end{align}
Observe that the final quantity in \eqref{eq:FI} is strictly positive for any $k = 1, \ldots, T$, i.e. it is almost surely positive. This property has implications for MLEs. 
 
The Fisher information (FI) for $N$ i.i.d. samples can then be computed according to
\begin{align*}
\Phi(a) & = N \ \sum_{k=1}^T f_k \ \lb \frac{k-1}{a^2} + \frac{(T-k)^2}{(1 - a(T-k))^2} \rb \ .
\end{align*}
The Cram\'{e}r-Rao theorem states that the variance of any {\it unbiased} estimator for $a$ determined from $N$ i.i.d. samples of $K$ cannot be lower than $\Phi^{-1}(a)$. Caution however may be required in application of this result because the FI has a singularity at $a = 0$. 

\subsection*{An example}

We chose $T=10$ and $a = 0.9/(T-1) = 0.1$. The form of the PMF $f$ is shown on \Cref{fig1}(L). When we take $a = 0.5/(T-1) = 0.0556$, we obtain the PMF shown in \Cref{fig1}(R) and for $a = 0.1/(T-1) = 0.0111$, we obtain the PMF in \Cref{fig1}(B).  So for this parametrisation, there is a transition from a monotonically decreasing PMF for smaller $a$ values, to a unimodal PMF for larger $a$ values. The ability of this simple model to capture non-monotonic behaviour in the PMF is notable. 

\begin{figure}[!hpt]
\centering
\includegraphics[scale=0.4]{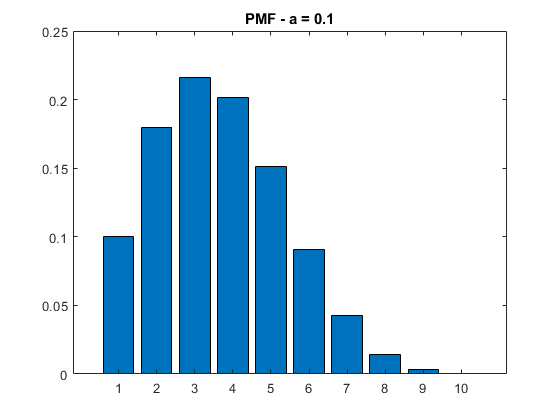}
\includegraphics[scale=0.4]{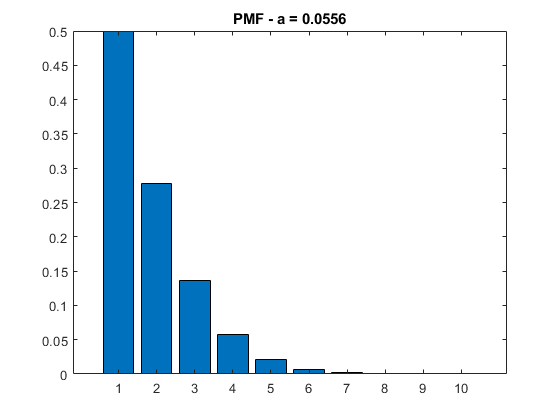}
\includegraphics[scale=0.4]{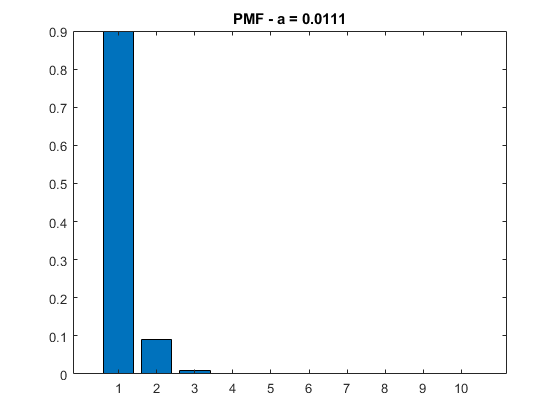}
\caption{Shows the PMFs for the linear product model with $T=10$, and $a = 0.1$ (L), $a = 0.0556$ (R), and $a = 0.0111$ (B). \label{fig1}}
\end{figure}

In \Cref{fig2}, the expected one-sample \footnote{The expected log-likelihood scales linearly with the number of i.i.d. observations.} log-likelihood functions (neglecting terms not dependent upon the parameter $a$) for the cases where the real parameter is $a = 0.0889$ (L) and $a = 0.556$ (R) respectively. 

\begin{figure}[!hpt]
\centering
\includegraphics[scale=0.4]{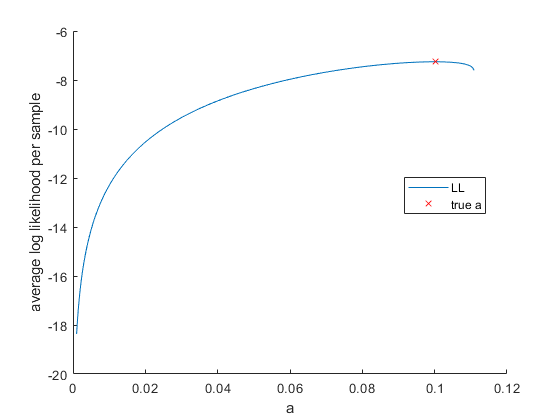}
\includegraphics[scale=0.4]{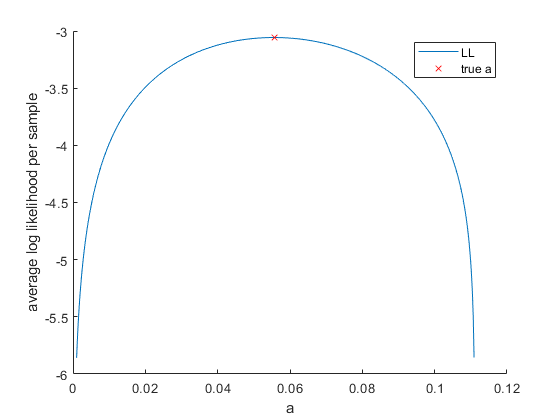}
\includegraphics[scale=0.4]{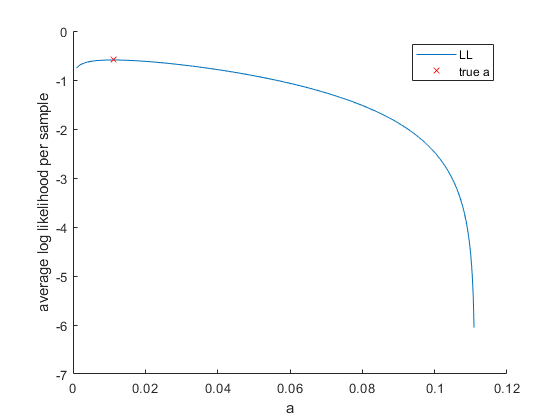}
\caption{Shows the expected log-likelihood function with $T=10$, and $a = 0.1$ (L), $a = 0.0556$ (R), and $a = 0.0111$. (B). \label{fig2}}
\end{figure}

We now examine the bias and variance of the MLE for $a$ as a function of the number of i.i.d. samples of $K$. The variance is compared to the Cram\'{e}r-Rao bound (for unbiased estimators). \Cref{fig3a}(L) shows the estimated MLE bias for the case $T=10, a = 0.1$, whilst \Cref{fig3}(R) shows the estimated MLE variance and the CRB (for an unbiased estimator) for the same parameters. In these simulations, 10,000 independent trials were averaged. Maximisation of the log-likelihood is done here on a grid of 500 values of $a$ uniformly spaced between 0.001 and the maximum value $1/(T-1)$. \footnote{The corresponding quantisation error variance for uniform sampling is $\Delta^2/12 = [(1/9-0.001)/500]^2/12 \approx 4e-9$ which is negligible.}

\begin{figure}[!hpt]
\centering
\includegraphics[scale=0.4]{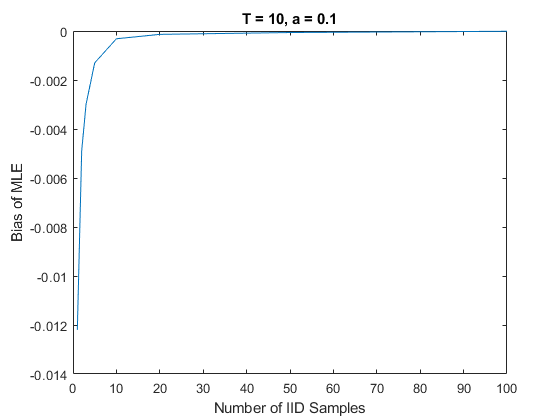}
\includegraphics[scale=0.4]{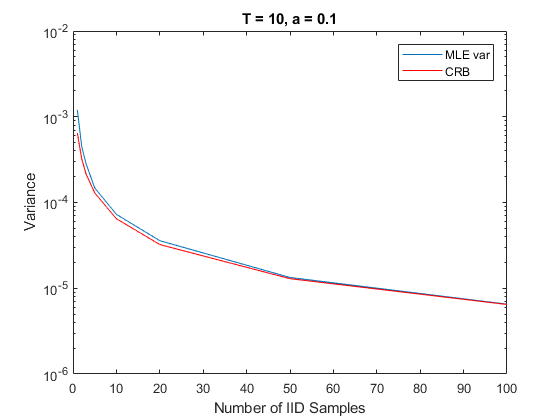}
\caption{Shows the bias (L) and variance (R) of the MLE for $a$ when $T=10$ and $a = 0.1$. The RHS figure also shows the CRB for an unbiased estimator. \label{fig3a}}
\end{figure}

\Cref{fig3} repeats this for the case where $a = 0.0556$. 

\begin{figure}[!hpt]
\centering
\includegraphics[scale=0.4]{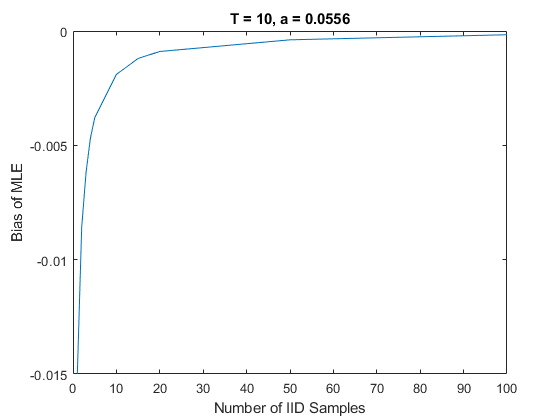}
\includegraphics[scale=0.4]{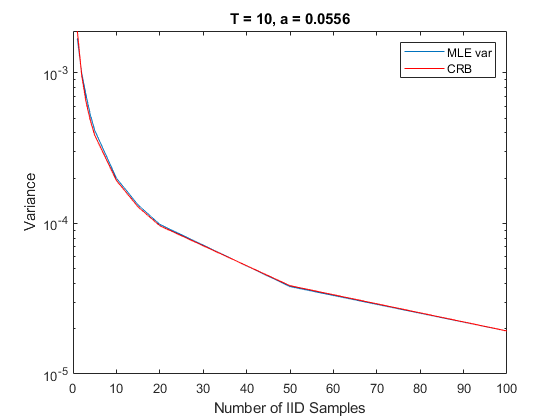}
\caption{Shows the bias (L) and variance (R) of the MLE for $a$ when $T=10$ and $a = 0.0556$. The RHS figure also shows the CRB for an unbiased estimator. \label{fig3}}
\end{figure}

\Cref{fig4} repeats this for the case where $a = 0.0111$. 

\begin{figure}[!hpt]
\centering
\includegraphics[scale=0.4]{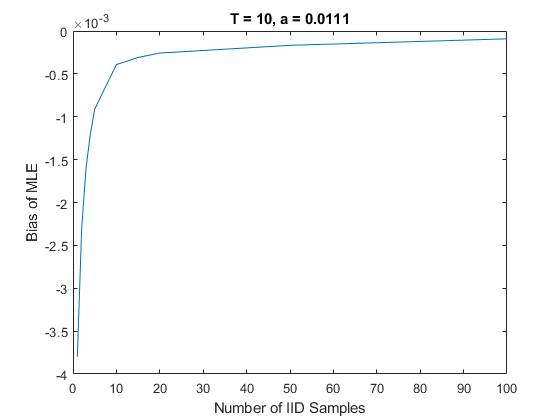}
\includegraphics[scale=0.4]{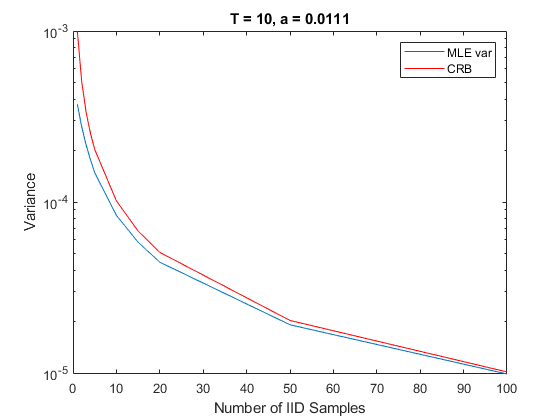}
\caption{Shows the bias (L) and variance (R) of the MLE for $a$ when $T=10$ and $a = 0.0111$. The RHS figure also shows the CRB for an unbiased estimator. \label{fig4}}
\end{figure}

\underline{Comments :}

\begin{enumerate}
\item
\Cref{fig4}(R) shows that the MLE variance is smaller than the CRB. This behaviour is not observed in \Cref{fig3}(R). In order to determine whether this is a fundamental issue or is a result of numerical effects, a higher fidelity simulation corresponding to the same parameters ($a = 0.0111, T = 10$) but with the minimum value of the range of $a$ for the log-likelihood maximisation reduced to 0.00001 with 10,000 discretisation points. Also, the number of independent trials used was increased to 100,000. \Cref{fig5}(L) compares the estimated bias for the original and high fidelity simulations, whilst \Cref{fig5}(R) does this for the estimated variance. 

\item For larger values of parameter $a$, we observe the MLE is close to unbiased and efficient even for relatively small numbers of data samples. 

\end{enumerate}
 
\begin{figure}[!hpt]
\centering
\includegraphics[scale=0.4]{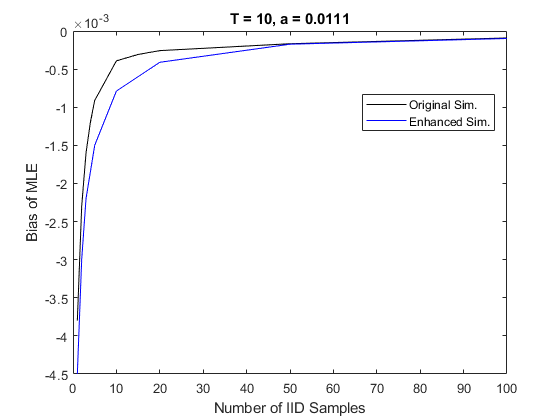}
\includegraphics[scale=0.4]{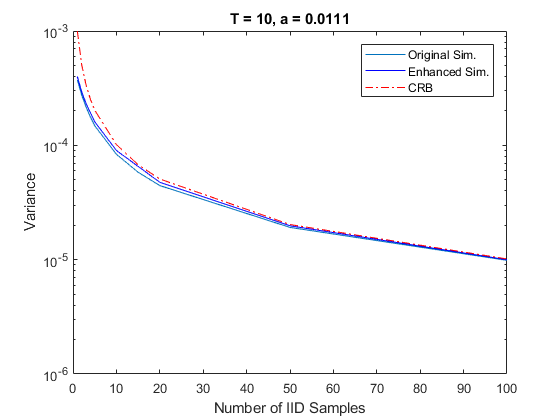}
\caption{Shows the bias (L) and variance (R) of the MLE for $a$ when $T=10$ and $a = 0.0111$. The RHS figure also shows the CRB for an unbiased estimator. These figures also show the results for the higher fidelity simulation.\label{fig5}}
\end{figure}

\subsection{Expected log-likelihood} \label{appendix:exploglik}

It is often useful to examine the form of the expected log-likelihood as a function of its parameter $a$, given some true value $a_0$. As noted, this quantity scales linearly with the number of measurements $N$, so it suffices to consider its form for $N=1$. From \eqref{eq:LL_lin}, we have
\begin{align*}
\Er_{K;a_0} \lbr \log f_K(.;a) \rbr & = \sum_{k=1}^T \ f_K(k;a_0) \ \sum_{i=1}^{k-1} \log a + \log (T-i) + \log \lb 1 - a(T-k) \rb \ ,
\end{align*}
where the expectation is with respect to the PMF $f(.;a_0)$. This quantity can be regarded as a function of the parameter $a$. These are the quantities shown in \Cref{fig2} (ignoring parts of the log-likelihood which do not depend on the functional parameter $a$). From asymptotic properties of the MLE, we know the maximum of this function will occur at $a = a_0$. The FI is a measure of the curvature of the expected log-likelihood at this point. Intuitively, a small FI means a ``flat'' curve, so it is arguable that this causes higher variation in the curve (when using finite samples) around the true parameter value, hence larger estimation error variance.

\subsection{\texorpdfstring{The case of unknown $T$}{The Case of Unknown T}} \label{appendix:unknownT}

In the above development, we considered the maximum support $T$ of the distribution of $K$ to be known. In many practical cases, this may not be true and, for completeness, we address the issues of dealing with unknown $T$. 

As a first investigation, we consider the expected log-likelihood function of variables $a$ and $T$, when the actual values of these parameters are $a_0$ and $T_0$. The variables $T$ and $T_0$ are each positive integers. Consider the log-likelihood function \eqref{eq:LL_lin}. For a given $T$, this function can be maximised over $a$, as above, yielding an estimate $\hat{a}(T)$. This is done for a range of possible values of $T$, and the corresponding $\hat{a}(T)$ is substituted back into \eqref{eq:LL_lin} which is then a function of $T$. That value of $T$ yielding a maximum is the MLE $\hat{T}$ for $T$ and the corresponding MLE for $a$ is $\hat{a}(\hat{T})$. 

If we examine the form of \eqref{eq:LL_lin} with the term not including $a$ present, we require $T \geq k_n$ for all $k_n > 1$, so we need to ensure $T \geq k_n$ for all measurements $k_n > 1$ to ensure the log-likelihood is well-defined. This places a lower bound on the admissible values of $T$, given a sequence of measurements. 

\subsubsection{An example}

When computing MLEs for $a_0$ and $T_0$, identifiability issues can arise. This is because when there are not many measurements $k_n$ close to the maximum $T$, the log-likelihood function tends to become a function of the product $a \, T$. So rather than assessing estimation performance by examining the MLEs $\hat{a}$ and $\hat{T}$, we compare the PMF for the observations obtained by substituting the MLEs into the functional form \eqref{eq:LL_lin} in place of the variables $a$ and $T$, and compare this PMF to the true one (\eqref{eq:LL_lin} with $a$ and $T$ substituted with $a_0$ and $T_0$ respectively. The $\ell_1$ norm between the estimated and true PMFs is used as a distance measure. 

For the search over candidate values of $T$, the range was restricted the set $\{T_0, \ldots, T_{\max}$, where $T_{\max}$ is given. This assumption is somewhat unrealistic since we do not know $T_0$ (indeed, the objective is to estimate it). Anecdotal evidence suggests setting the minimum $T$ for searching at the maximum sample value in a batch may be effective. We leave a detailed study on this issue to later work.

In all the subsequent examples, we chose $T_0 = 10$, $T_{\max} = 20$. The range of $a$ searched over for a specified $10 \leq T \leq 20$ is given by $[0.001, 1/(T-1))$. In these examples we used a discrete grid for $a$ with this region divided into 200 test points. In all examples, 10,000 independent trials were used. The number $N$ of i.i.d. samples per batch was varied and the corresponding MLEs and thus the estimated PMF was determined. It should be noted that in the case of erroneous estimates for $T_0$, the support of the estimated PMF $\{1, \ldots, T_0\}$ is larger than that for the true PMF, but as we shall see, incorrect estimates $\hat{T} > T_0$ result in smaller values for $\hat{a}$, and thus there tends to be few non-zero values for the estimated PMF in $\{T_0+1, \ldots, \hat{T}\}$. \Cref{fig:est_pdf_dist} shows the average (over 10,000 trials) of the distance $\|f - \hat{f} \|_1$ plotted against the batch size. 

\begin{figure}[!htp]
\centering
\includegraphics[scale=0.72]{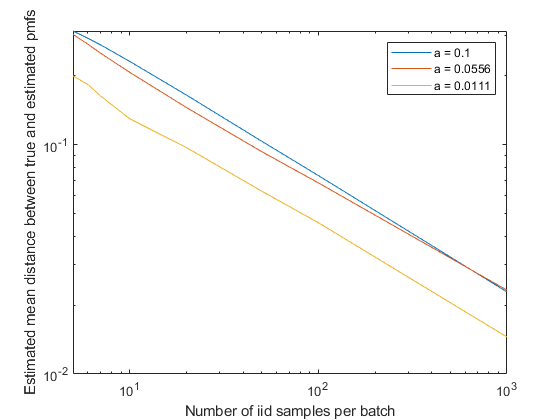}
\caption{Shows the average (over 10,000 independent realisations) of the $\ell_1$ distance between the true PMF and that PMF derived from MLEs, versus the number of independent samples per batch. Note the logaritmic scales on both axes. \label{fig:est_pdf_dist}}
\end{figure}

\Cref{fig:est_pdf_dist} displays the kind of behaviour that would be expected as the sample size $N$ increases, that is the ``estimated'' PMF approaches the true PMF. We can further illustrate this by comparing the true and estimated PMFs for a couple of different values of $N$. In \Cref{fig:comp_PMFs}, this is done for the values of $a_0 = 0.0111$ and $a_0 = 0.1$. We can observe close matches but we need to keep in mind that the estimates PMF is the sample mean over 10,000 trials and significant variability can be observed. We can illustrate this variability by adding ``error bars'' to the data shown on \Cref{fig:est_pdf_dist}. For reasons of clarity, we show the results for $a_0 = 0.1$ and $a_0 = 0.0111$ in \Cref{fig:pdf_dist_err_bars}. The error bars represent $\pm 1$ sample standard deviation about the sample mean (of the $\ell_1$ norm of the error). \\

\begin{figure}[!htp]
\centering
\includegraphics[scale=0.50]{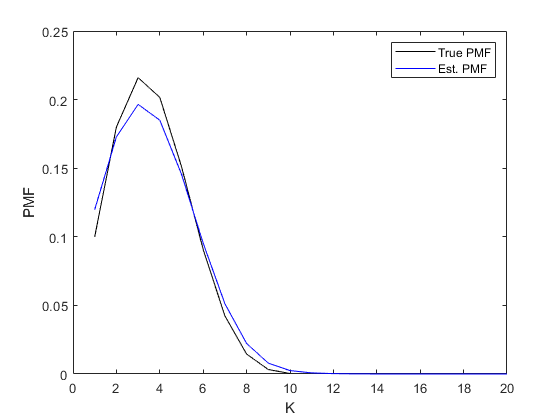}
\includegraphics[scale=0.50]{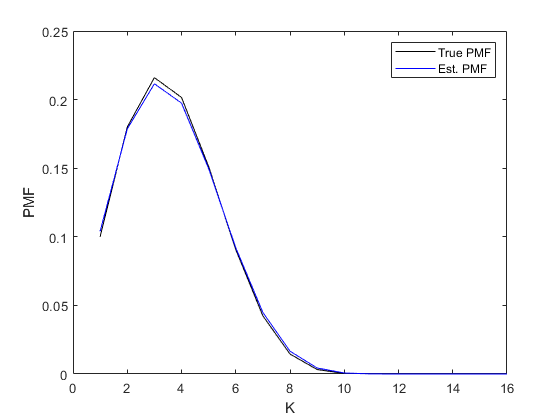}
\includegraphics[scale=0.50]{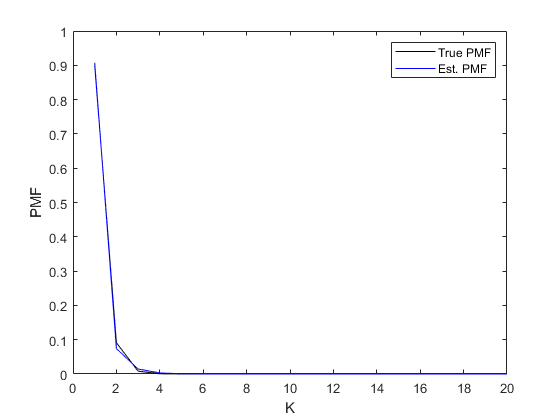}
\includegraphics[scale=0.50]{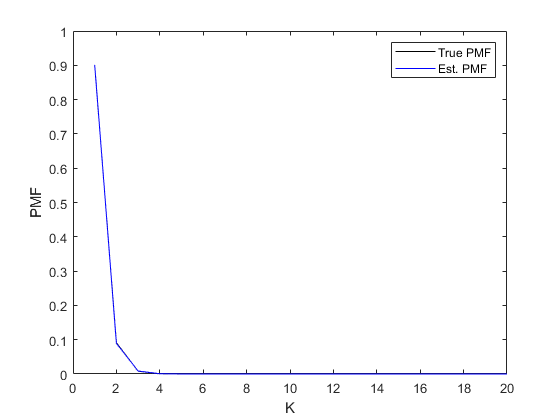}
\caption{Shows the true and estimated PMFs for $a_0 = 0.1$ (top) and $a_0 = 0.0111$ (bottom). The figures on the left are for sample size $N=5$, whilst the figures on the right are for $N=100$.  \label{fig:comp_PMFs}}
\end{figure}

\begin{figure}[!htp]
\centering
\includegraphics[scale=0.50]{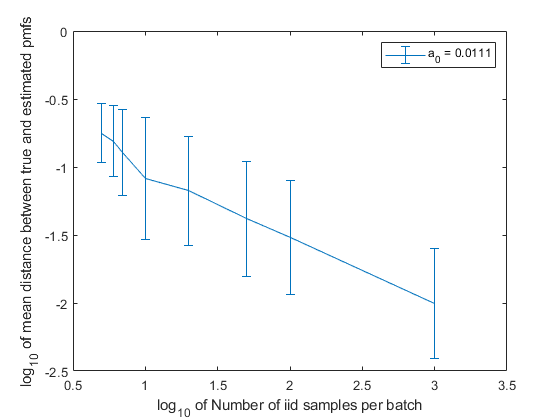}
\includegraphics[scale=0.50]{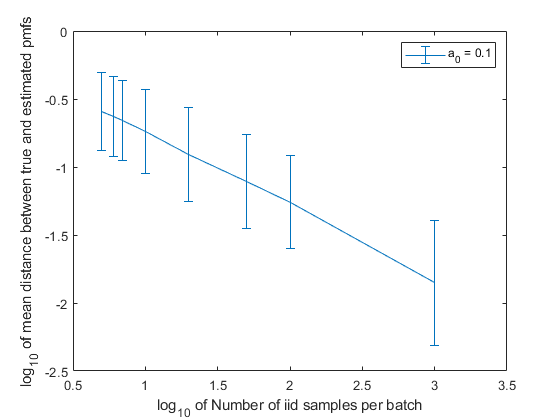}
\caption{Shows the sample average, and $\pm 1$ sample standard deviations for the $\ell_1$ distance between the true and estimated PMFs. Note the logarithmic scales on both axes. \label{fig:pdf_dist_err_bars}}
\end{figure}

\begin{figure}[!htp]
\centering
\includegraphics[scale=0.50]{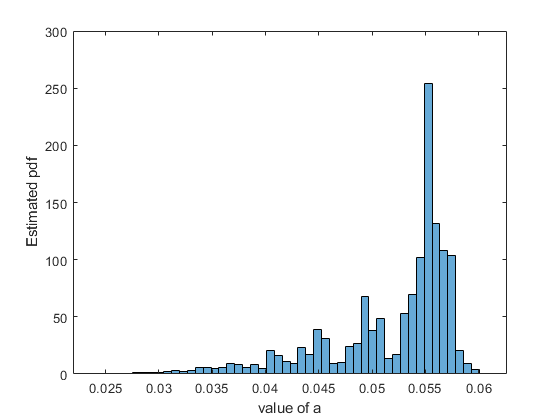}
\includegraphics[scale=0.50]{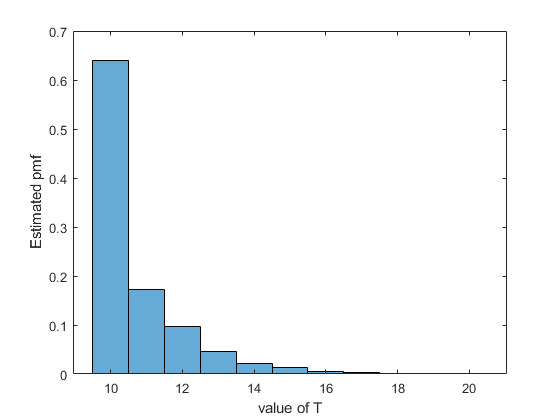}
\includegraphics[scale=0.50]{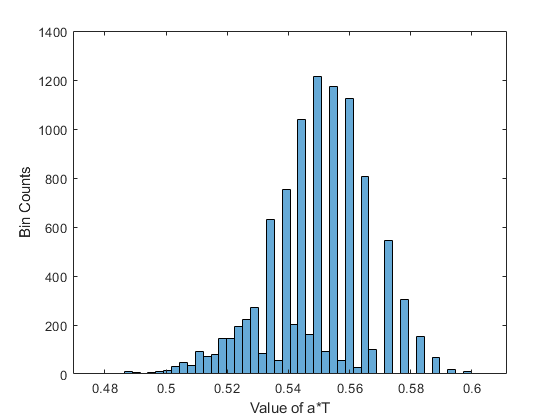}
\caption{Shows the histograms for the MLEs $\hat{a}$ (L), $\hat{T}$ (C), and $\hat{a}*\hat{T}$ (R). Here $a_0 = 0.0556$, and $N = 1000$. Observe the concentration of the product term $\hat{a}*\hat{T}$ around the true value of $a_0 \, T_0 = 0.556$. \label{fig:par_hists}}
\end{figure}

\pagebreak
\newpage
\section{Experimental data}\label{appn:ExperimentalData}

\begin{table}[!htp]
    \centering
\begin{tabular}{|l|l|l|l|l||l|l|l|l|l||l|l|l|l|l|}
\hline
\multicolumn{5}{|c||}{\textbf{Task 1}} &\multicolumn{5}{|c||}{\textbf{Task 2}}&\multicolumn{5}{|c|}{\textbf{Task 3}}\\\hline
25&24&44&24&38&33&79&47&41&25&67&106&147&48&69\\\hline
14&46&41&53&88&57&65&130&27&50&81&16&68&35&42\\\hline
24&19&125&73&28&16&31&49&53&108&60&39&127&53&85\\\hline
26&46&66&32&45&4&44&46&22&29&64&127&96&100&30\\\hline
17&93&50&61&37&26&14&70&29&32&71&45&62&54&161\\\hline
19&42&58&48&23&53&27&201&16&24&48&66&74&38&59\\\hline
39&62&89&113&20&8&28&42&27&27&60&57&79&50&41\\\hline
32&47&73&73&24&20&30&53&66&23&72&77&64&37&55\\\hline
34&164&86&74&27&17&22&34&16&94&13&110&179&46&60\\\hline
4&62&33&29&28&3&141&34&44&28&91&64&58&53&47\\\hline
22&66&115&71&22&41&29&37&30&190&78&73&50&49&131\\\hline
28&47&35&46&41&14&38&54&30&41&33&51&203&46&91\\\hline
24&44&52&58&110&20&61&42&8&42&45&69&84&100&85\\\hline
16&34&25&32&73&28&26&42&13&42&56&81&44&47&125\\\hline
18&118&42&67&27&23&113&162&30&12&38&66&121&46&71\\\hline
17&59&50&38&59&28&59&84&46&33&158&221&132&55&29\\\hline
15&73&66&34&28&16&145&93&43&161&22&72&68&42&48\\\hline
18&38&44&35&48&26&55&43&35&43&27&62&58&53&56\\\hline
13&44&53&64&31&19&113&39&41&60&104&80&75&42&78\\\hline
17&91&79&25&89&19&54&38&26&108&56&54&36&76&29\\\hline
26&41&27&82&35&21&139&42&53&22&59&114&46&145&20\\\hline
54&51&37&58&20&12&88&105&37&31&74&134&51&48&113\\\hline
46&51&119&25&55&43&56&98&28&25&120&59&48&27&95\\\hline
19&26&33&50&56&27&75&46&35&33&12&118&64&49&61\\\hline
18&59&26&43&53&40&66&33&65&121&39&135&70&44&43\\\hline
65&42&25&53&32&15&41&130&76&50&30&53&37&51&36\\\hline
21&143&34&26& &26&38&42&51&47&122&48&69&154&62\\\hline
12&38&36&51& &42&37&15&36&28&28&81&47&78&38\\\hline
52&35&21&57& &12&52&56&75&21&50&144&111&48&71\\\hline
19&28&35&22& &36&35&10&27&24&51&48&97&74&38\\\hline
55&55&28&32& &27&189&26&36&38&49&65&66&47&121\\\hline
24&48&34&41& &33&48&27&34&62&37&164&57&83&79\\\hline
23&83&32&44& &16&57&87&38&21&24&56&37&103&22\\\hline
16&46&36&43& &18&55&70&134&34&31&100&40&91&46\\\hline
50&72&55&49& &31&44&29&115&45&85&55&36&42&36\\\hline
30&18&29&38& &46&52&25&43& &68&73&113&41& \\\hline
47&96&34&89& &33&164&22&34& &45&70&31&38& \\\hline
34&61&37&38& &12&55&22&28& &45&53&67&29& \\\hline
\end{tabular}
   \caption{Experimental data collected of the time taken for participants to complete three screen-based tasks as detailed in \Cref{sec:experidata}.}
    \label{tab:experimentaldata}
\end{table}

\section{Comments on numerical optimisation for the polynomial factor model} \label{appn: NumericalPolyIssues}
In this appendix we comment on issues that arose from the numerical determination of the MLEs for the simple polynomial factor models in \Cref{subsec:PolyNumAnaly} for samples of size $m$. 
In particular, we had some issues with the Matlab function {\tt fmincon} for certain small samples. Note that {\tt fmincon} is a Matlab optimiser which we used to find the MLEs. For these small samples sizes, {\tt fmincon} returned the following
\begin{verbatim}
    Warning: Matrix is singular to working precision.
\end{verbatim}
When this occurred, it returned an MLE for $a$ that was slightly positive (but within constraint tolerance), even though we require that $a$ is negative.  This was in the specific case where $a=0.1\min(a)$ and $m=20$, and where the sample contained a single 3 and nineteen 1s. It also occurred in the case $a=0.5\min(a)$, $m=5,6,8$ where the sample contained one, two or three $3$s and the rest 1; and the case $a=0.9\min(a)$, $m=3,4,5$ where the samples were $(1,3,5),(1,3,3);(1,3,3,5),(1,3,3,7);(1,3,3,3,5),(1,3,3,3,9)$. 
This occurred in the first authors code. Due to this, significantly larger variance than the inverse Fisher information was initially recorded for these sample sizes. 

The second author's code did not have this issue with these samples. The differences in programming that lead to this were that the first author calculated the log-likelihood by finding the PMF $f$ for the given estimates of $a,c$, and then taking the sum of the log of this at the sample elements, which is the definition of the log-likelihood. However, the second author used the formula appearing in \Cref{eqn:loglikepolyac}, which uses the log laws such as $\log((-a)^{x-1})=(x-1)\log(-a)$, valid provided $a<0$. In the second author's case, whenever $a>0$ the use of these log laws meant that {\tt fmincon} returned an imaginary number, which this optimiser did not allow. In the first author's case, taking $\log((-a)^{x-1})$ when $x$ is odd and $-a<0$ results in a real number, and if $a$ is small enough to be within constraint tolerance, then {\tt fmincon} did allow this result. This occurred precisely for the samples listed above, which contain all odd numbers.

This appears to be a programming quirk of {\tt fmincon}. Editing the both authors' code to return an imaginary number for the log-likelihood whenever $a$ was positive improved the ability of {\tt fmincon} to return a feasible solution for the MLEs and improved the variance calculated.

\end{document}